\title{Game Dynamics and Equilibrium Computation \\
  in the Population Protocol Model}
\author{%
  Dan Alistarh\thanks{
    \texttt{dan.alistarh@ist.ac.at}
  }\\
  IST Austria \\
  \and
  Krishnendu Chatterjee\thanks{
    \texttt{krishnendu.chatterjee@ist.ac.at}
  }\\    
  IST Austria \\
  \and  
  Mehrdad Karrabi\thanks{
    \texttt{mehrdad.karrabi@ist.ac.at}
  }\\
  IST Austria \\
  \and
  John Lazarsfeld\thanks{
    \texttt{john.lazarsfeld@yale.edu}
  }\\
  Yale University \\
}
\date{}
\begin{document}

\maketitle
\thispagestyle{empty}

\begin{abstract}
We initiate the study of game dynamics in the 
\textit{population protocol} model: $n$ agents each
maintain  a current local strategy and interact in pairs
uniformly at random. Upon each interaction, the agents
play a two-person game and  receive a payoff from an
underlying utility function, and they can subsequently update 
their strategies according to a fixed local algorithm. 
In this setting, we ask how the distribution over agent
strategies evolves over a sequence of interactions, and
we introduce a new \textit{distributional equilibrium} 
concept to quantify the quality of such distributions. 
As an initial example, we study a class of
\textit{repeated prisoner's dilemma} games, and we consider 
a family of simple local update algorithms that yield 
non-trivial dynamics over the distribution of agent strategies.
We show that these dynamics are related to a new class
of \textit{high-dimensional Ehrenfest random walks}, and
we derive exact characterizations of their stationary 
distributions, bounds on their mixing times, and prove their 
convergence to approximate distributional equilibria. 
Our results highlight trade-offs between the local state space
of each agent, and the convergence rate and approximation 
factor of the underlying dynamics. Our approach opens the door 
towards the further characterization of equilibrium computation
for other classes of games and dynamics in the population setting.
\end{abstract}

% Sections
\newpage
\setcounter{page}{1}
%  INTRO

\section{Introduction}
\label{sec:intro}

The emergence of complex global behavior from the
 interactions of simple, computationally-limited
agents is a key topic of interest in distributed computing.
A standard setting is the \textit{population protocol} model,
in which a set of $n$ agents, modeled as simple,
anonymous state machines, interact randomly in pairs
with the goal of joint computation over the system's state. 
Since its introduction by
Angluin et al.~\cite{DBLP:journals/dc/AngluinADFP06}, 
this model has been used to characterize the evolution of
several families of dynamics for solving fundamental tasks
such as \textit{majority} \cite{DV12,PVV09,
  alistarh2015fast,Berenbrink16,GP16,AAG18, doty2022time} 
and \textit{leader election}~\cite{fischer2006self,
  AG15, DS18, GS18, berenbrink2020optimal}.
A key feature of this model is allowing to
characterize fine-grained notions of protocol convergence with
respect to population size, total number of pairwise
interactions (time), and available per-agent memory (space),
leading to interesting trade-offs between the space and
time complexity of simple local dynamics and their
complex global convergence behavior~\cite{DS18, alistarh2017time,
  doty2022time, berenbrink2020optimal}.

In this paper, we study population protocol dynamics in a
model where, upon each randomly scheduled pairwise interaction,
the two agents play a game, receive a payoff, and may
subsequently update their strategies.
Various multi-agent game settings have been studied
over the past decades, notably in they game theory and
evolutionary games literature, leading to work on proving
fast convergence guarantees, hardness results, and different
notions of global equilibrium for the
system~\cite{cesa2006prediction, 
  nowak2006evolutionary,nisan2007algorithmic,
  chen2009settling, daskalakis2009complexity,
  sandholm2010population,
  daskalakis2015approximate,
  golowich2020tight, hu2023best}.
Yet, most prior multi-agent game settings assume a sequence
of synchronous rounds, where the utility received by each 
individual agent depends on the actions of \textit{all other
agents in the population} in the most recent round
\cite{cesa2006prediction, cai2011minmax}.
In many real-world
settings, interactions are pair-wise, so an agent's utility may be \textit{local} and depend only
on its most recent interaction with a single other agent~\cite{bramoulle2014strategic}. 
These types of settings are well-captured
by the population protocol model, where agents interact randomly 
in pairs. However, studying the time and space complexity
of computing a global equilibrium in this model 
has yet to be considered.

% \update{However, these prior settings assume that each
% agent runs \emph{the same local algorithm};
% yet, in many real-world settings, the underlying
% utilities received by each agent may depend on
% the set of game actions played by all other agents
% in the population in an interaction round.\mk{cite}}
% Moreover, studying notions of global equilibrium in 
% the population model, where agents interact randomly, 
% as well as the time and space complexity of reaching 
% such equilibria, has yet to be considered.}

We initiate such a study of equilibrium
computation in population protocols:
we introduce a new \textit{distributional equilibrium (DE)}
concept that captures a notion of
stability encoded by the distribution of individual
agent's pure strategies in the population. 
For a class of \textit{repeated prisoner's dilemma} games,
we then present a simple family of local update dynamics
that we prove converges to a stationary distribution
corresponding to an approximate DE.
In particular, these convergence results quantify
the non-trivial time, space, and approximation tradeoffs
that can arise in this multi-agent equilibrium computation
setting. For concreteness, we begin by introducing the
general problem setting in more detail. 

\subsection{Problem Setting}
\label{sec:intro:model}

\subsubsection{Multi-Agent Distributional Games}
\label{sec:intro:games}

We consider populations of $n$ agents, each of which
maintains a (pure) strategy $\S$ from a common
finite set $\calS$. At each time step $t$, two agents
$a$ and $b$ are sampled from the population uniformly
at random to interact and play a two-person game.
Letting $\S_1 \in \calS$ and $\S_2 \in \calS$ denote
the current strategies of $a$ and $b$, respectively,
agent $a$ receives a payoff of $u_1(\S_1, \S_2)$,
and agent $b$ receives a payoff of $u_2(\S_1, \S_2)$,
where $u_1$ and $u_2$ are fixed utility functions
for the underlying two-player game.

In this setting, a \textit{dynamics} $\calP$ is a local update
rule applied by agents $a$ and $b$ (following an interaction)
to determine their (possibly) new strategies, and this update
rule may depend on each agent's previous strategy and
the payoffs received from the interaction. Under
the randomness of the agent interactions, the dynamics induces
a sequence of distributions $\{\bfmu^t\}$,
where the coordinates of $\bfmu^t \in \Delta(\calS)$
specify the \textit{fraction of agents}\footnote{We write
  $\Delta(\calS)$ to denote the probability simplex
  over the finite set $\calS$.}
maintaining each strategy 
after the $t$'th total interaction.

Given the set of utility functions and a \textit{dynamics} $\calP$,
it is natural to ask whether (a) the distributions $\{\bfmu^t\}$
induced by $\calP$ \textit{converge} (for some appropriate
notion of convergence) to some distribution
$\bfmu \in \Delta(\calS)$ (and how quickly?), and  
(b) if so, whether $\bfmu$ can be characterized as
some global equilibrium (with respect to the utility functions
$u_1$ and $u_2$).
For this, we define the following \textit{distributional
  equilibrium} concept for a distribution $\bfmu \in \Delta(\calS)$:

\begin{definition}
  \label{def:de-gen}
  Let $\bfmu$ be a distribution over $\calS$.
  Then for $\eps > 0$, $\bfmu$ is an 
  \textit{$\eps$-approximate distributional equilibrium (DE)}
  if it satisfies the following:
  \begin{align}
    \E_{\S_1 \sim \bfmu, \;\S_2 \sim \bfmu}
    \big[u_{1}(\S_1, \S_2)\big]
    &\;\ge\;
      \max_{S' \in \calS}\;
      \E_{\S_2 \sim \bfmu}
      \big[u_{1}(\S', \S_2)\big]
      - \eps
    \label{eq:de:1} \\
    \text{and}\;\;
    \E_{\S_1 \sim \bfmu, \;\S_2 \sim \bfmu}
    \big[u_{2}(\S_1, \S_2)\big]
    &\;\ge\;
      \max_{S' \in \calS}\;
      \E_{\S_1 \sim \bfmu}
      \big[u_{2}(\S_1, \S')\big]
      - \eps  \;. 
    \label{eq:de:2}
  \end{align}
\end{definition}
An $\eps$-approximate DE can be viewed as
an approximate (symmetric) \textit{mixed Nash equilibrium}
(i.e., from classical game theory~\cite{nisan2007algorithmic}),
but where the ``mixed strategy'' $\bfmu \in \Delta(\calS)$
is \textit{induced} by the fraction of agents in the population
maintaining each pure strategy $\S \in \calS$.
In this view, the quantities
$\E_{\S_1 \sim \bfmu, \;\S_2 \sim \bfmu}
[u_{1}(\S_1, \S_2)]$ and
$\E_{\S_1 \sim \bfmu, \;\S_2 \sim \bfmu}
[u_{2}(\S_1, \S_2)]$ are the \textit{expected payoffs}
of two agents selected uniformly at random
(and whose current strategies are random
variables drawn independently from $\bfmu$),
and thus represent the agent payoffs from 
the ``average interaction'' in the population.
Then if $\bfmu$ is an $\eps$-approximate DE, the
inequalities~\eqref{eq:de:1} and~\eqref{eq:de:2} imply that
for either agent, a unilateral deviation to any
strategy $\S' \in \calS$ (while the other agent is still
selected uniformly at random with strategy $\S \sim \bfmu$)
can improve its expected payoff by at most $\eps$.

While the problem setting and distributional equilibrium
concept described above can be stated for
any general set of strategies $\calS$ and utilities
$u_1$ and $u_2$, as a starting point, we investigate
this concept for the special class of 
\textit{repeated prisoner's dilemma} games, 
and in particular, the subclass of 
\textit{repeated donation games}.
Repeated games have been studied for decades
in a variety of settings~\cite{owen2013game}, 
leading to both well-known ``folk theorems''~\cite{osborne1994course}, 
as well as to the study of evolutionary game dynamics
for simple strategies~\cite{nowak2006evolutionary}.
Moreover, repeated prisoner's dilemma is among the most classic 
repeated games, and it is frequently studied to model
the evolution of cooperation in populations
\cite{axelrod1981evolution, robert1984evolution}.
Thus to initiate the study of equilibrium computation 
in this population setting, we restrict our focus 
to this class of games, which we proceed to introduce 
in more detail:

\subsubsection{Repeated Prisoner Dilemma Games
  and Population Structure}
\label{sec:intro:RDG}

We consider \textit{repeated donation (RD) games}
(an important subclass of repeated prisoner's dilemma games)
and populations where agents have one of three
strategy types. In particular, we introduce the
game's reward structure and strategy types as follows:

\begin{itemize}[
  leftmargin=1em,
  itemsep=0.5em,
  ]
\item
  \textbf{Reward Structure}:
  In RD games, two players begin
  by playing a single round of \textit{prisoner's dilemma}
  (PD). At the end of each round, an \textit{additional}
  round may be played with (independent) probability $\delta$;
  otherwise, the game terminates.    
  We call $\delta$ the
  \textit{continuation} or \textit{restart probability}.
  In a single round of PD,
  each player simultaneously chooses
  to \textit{cooperate} ($C$) or \textit{defect} ($D$),
  and the eponymous \textit{dilemma} is that
  each player's payoff-maximizing decision is to defect,
  despite the fact that mutual cooperation leads
  to a higher payoff.
  In particular, we consider payoffs whose
  reward structure are \textit{donation games},
  which are the most important class
  of PD rewards~\cite{hilbe2013evolution,
    marshall2009donation, stewart2013extortion}.
  These rewards are captured via a
  \textit{reward vector} $\v := [b-c, -c, b, 0]^\top$
  over the four game states $\calA := \{CC, CD, DC, DD\}$,
  which are are defined by the ordered actions of the
  first and second players. Here, the
  entries in $\v$ satisfy $b > c \ge 0$
  and specify the reward of the first player.
  Each player's \textit{total reward}
  is then the sum of its payoffs
  over the individual rounds of the game.
\item
  \textbf{Strategy Types and $(\alpha, \beta, \gamma)$
    Populations}:
  We assume that each of the $n$ agents in the population
  belongs to one of three subpopulations that correspond to
  distinct RD strategy types:
  an $\alpha$ fraction have strategy type
  $\AC$ (Always-Cooperate),
  a $\beta$ fraction have strategy type $\AD$ (Always-Defect),
  and a $\gamma$ fraction have strategy type $\gtft$
  (Generous-Tit-For-Tat),
  where $\alpha + \beta + \gamma = 1$ (and thus
  we refer to such populations as $(\alpha, \beta, \gamma)$
  populations). 
  An agent's strategy determines
  its action ($C$ or $D$) in each round of the repeated game
  and may depend on its opponent's actions from prior rounds.
  Specifically:
  \begin{itemize}[
    label=-,
    itemsep=0pt,
    ]
  \item
    \textit{$\AC$ strategy}: play $C$ at each round.
  \item
    \textit{$\AD$ strategy}: play $D$ at each round.
  \item
    \textit{$\gtft$ strategy}:
    given a \textit{generosity parameter} $g \in [0,1]$,
    play $C$ with initial cooperation probability
    $s_1 \in [0, 1]$, and $D$ with probability (w.p.)
    $1-s_1$. In round $r+1$, play the
    opponent's action from round $r$ w.p. $(1-g)$,
    and play $C$ w.p. $g$. 
  \end{itemize}
  Moreover, we assume the strategy of $\AC$ and $\AD$ agents 
  \textit{always remains fixed}, and thus
  we consider update dynamics $\calP$ that are only followed by
  agents with strategy $\gtft$.
  In particular, we assume each $\gtft$ agent maintains a
  strategy from the set $\calG = \{g_1, \dots, g_k\}$
  of $k \ge 2$ generosity parameter values, each of which
  are defined as follows:
  given a \textit{maximum generosity parameter}\footnote{
    Assuming such bounds on the generosity probability
    in the $\gtft$ strategy are standard in RPD settings
    \cite{nowak2006evolutionary}.
  } $\hatg \le 1$, for each $i \in [k]$
  (and by a slight abuse of notation),
  the strategy $g_i$ is the $\gtft$ strategy with
  generosity parameter $g_i := \hatg \cdot \big(\frac{i-1}{k-1}\big)$.
  Here, $\calG$ can be viewed as a discretization of the
  continuous space $[0, \hatg]$ into $k$ equidistant
  generosity parameter values.
\item  
  \textbf{Expected Payoff Functions}:
  Given the single RD game reward structure and the
  strategy types of the $(\alpha, \beta, \gamma)$ population,
  we can define an agent's utility function in the
  context of the more general
  multi-agent distributional game setting
  from Section~\ref{sec:intro:games}. 
  Specifically, we assume that when agents $a$ and $b$
  interact, each agent's utility is its 
  \textit{expected payoff} in an RD game
  (over the randomness in each agent's strategy
  and the repeated rounds).   
  For a pair of strategies
  $\S_1, \S_2 \in \calS := \{\AC, \AD, g_1, \dots, g_k\}$,
  we let $f(\S_1,\S_2)$ denote this expected payoff
  for an agent with strategy $\S_1$ against
  an opponent with strategy $\S_2$.
  By the symmetry of the single-round RD rewards,
  it follows that $f(\S_2, \S_1)$ is the expected
  payoff for the agent with strategy $\S_2$. 
\end{itemize}

\paragraph{Discussion}
We remark that the focus on the three strategy types
above is two-fold: 
first, a classical strategy in RPD is
tit-for-tat ($\tft$)~\cite{axelrod1981evolution},
which always repeats the opponent's previous action
in the next round. It can be shown that the $\tft$
strategy leads to the emergence of cooperation under
suitable parameter values~\cite{axelrod1981evolution,
  nowak2006evolutionary}, however its main drawback is
lack of robustness:
even in the two-player sequential setting, in the presence of 
noise or errors where a cooperative action may be replaced by defection,
a single error makes two $\tft$ players alternate between $C$ and $D$,
and after two errors both players will choose to defect forever. 
The key mechanism to deal with such errors is the introduction
of \emph{generosity}~\cite{stewart2013extortion, martinez2012generosity, molander1985optimal, nowak1992tit}, which motivates the $\gtft$
class of strategies defined above and is the focus of this paper.
While the analysis of other reactive strategies in RD
games extends beyond $\gtft$~\cite{nowak2006evolutionary},
it is natural to begin the study of equilibrium computation 
with the simpler strategy sets described above. 
Moreover, several recent works on population
protocols have considered populations containing 
a subset of agents whose states always remain fixed~\cite{
  alistarh2017time,DBLP:conf/stoc/DudekK18,amir2020approximate,
  alistarh2021comparison,berenbrink2021loosely}.
An $(\alpha, \beta, \gamma)$ population
can be viewed as an example of this setting, where the
$\alpha$ and $\beta$ fractions of $\AC$ and $\AD$ agents
remain invariant under any update dynamics.
 
To that end, given the RD and $(\alpha, \beta, \gamma)$
population structure described here,
we can now more precisely describe the distributional equilibrium
concept and the main algorithmic questions of interests
for this setting. 

\subsubsection{Distributional Equilibrium
  in RD games for $(\alpha, \beta, \gamma)$ Populations}
\label{sec:intro:de-rd}   

In this setting, we consider update dynamics for
agents with $\gtft$ strategies, and we are interested
in the resulting distributions $\{\bfmu^t\}$ over $\calG$
that specify the fraction of $\gtft$ agents with
strategy parameter $g_i$ for each $i \in [k]$.

For a fixed $k \ge 2$ and some distribution 
$\bfmu \in \Delta(\calG)$, and given 
the $(\alpha, \beta, \gamma)$ population parameters,
$\bfmu$ also induces the $(k+2)$-part distribution
$\hatbfmu \in \Delta(\calS)$ over the full strategy set
$\calS$, where
$\calS := \{\AC, \AD, g_1, \dots, g_k\}$.
Specifically, given $\bfmu$, define:
\begin{equation}
  \hatbfmu(\AC) = \alpha,
  \;\;\;
  \hatbfmu(\AD) = \beta,
  \;\;\;\text{and}\;\;
  \hatbfmu(i) = \gamma \cdot \bfmu(i)
  \;\;\;\text{for $i \in [k]$}.
\end{equation}
Then the $\eps$-approximate distributional equilibrium
concept of Definition~\ref{def:de-gen} extends to
the $(\alpha, \beta, \gamma)$ population setting as follows:
\begin{definition}
  \label{def:de}
  In the RD setting with an $(\alpha, \beta, \gamma)$ population,
  and for $\eps > 0$, we call
  $(\bfpi, \hatbfpi) \in \Delta(\calG) \times \Delta(\calS)$ an
  \textit{$\eps$-approximate distributional equilibrium (DE)}
  if it satisfies the following property:
  \begin{equation}
    \E_{g \sim \bfpi, \;\S \sim \hatbfpi}
    \big[f(g, \S)\big]
    \;\ge\;
    \max_{g' \in \calG}\;
    \E_{\S \sim \hatbfpi}
    \big[f(g', \S)\big]
    - \eps \;.
    \label{eq:de}
  \end{equation}
  Given the one-to-one correspondence between
  $\bfpi$ and $\hatbfpi$, if expression~\eqref{eq:de} holds,
  we will simply say that $\bfpi$ is an $\eps$-approximate
  distributional equilibrium.
\end{definition}

Here, the equilibrium definition
can be viewed as being restricted to agent interactions
where at least one agent has a $\gtft$ strategy
(this is motivated by the fact that
$\AD$ and $\AC$ agents never change their strategies).
In other words,
$\E_{g \sim \bfpi, \;\S \sim \hatbfpi}
[f(g, \S)]$ is the expected RD game payoff of a randomly selected
$\gtft$ agent (under the distribution $\bfmu$ over $\calG$) 
playing against a randomly selected opponent with any strategy
(under the distribution $\hatbfmu$ over $\calS$ that is 
induced by $\bfmu$)
If $\bfmu$ satisfies expression~\eqref{eq:de},
then the first randomly selected $\gtft$ agent can improve
its expected payoff by no more than $\eps$ when 
unilaterally deviating to a different $\gtft$ strategy parameter
(while still playing against an opponent whose strategy
is drawn from $\hatbfmu$).
By symmetry of the expected RD payoff functions $f$,
expression~\eqref{eq:de} also captures this same property
if the second agent in the interaction is conditioned
to be a $\gtft$ agent.

With this $\eps$-approximate DE definition in hand,
we can more precisely state two natural questions
of interest in this setting:
\begin{enumerate}[
  leftmargin=1em,
  rightmargin=0.5em,
  itemsep=0.5em,
  label=-
  ]
\item
  \textit{(Q1): Is there a local dynamics $\calP$
  for updating $\gtft$ agents' strategies that converges
  to a distribution $\bfmu \in \Delta(\calG)$ that
  is an $\eps$-approximate DE, and for what approximation parameter $\eps$?}
\item
  \textit{(Q2): If so, how many interactions
  are needed for $\calP$ to 
  converge to $\bfmu$?}
\end{enumerate}

\subsection{Our Contributions}

We introduce a family of update dynamics
for the setting above that converges to an approximate DE,
and we obtain quantitative bounds on its 
convergence rate and approximation factor:

\begin{itemize}[
  leftmargin=1em,
  itemsep=0.5em,
  ]
\item
  \textbf{$k$-IGT Dynamics} (Definition~\ref{def:kigt}):
  We define a family of natural
  \textit{incremental generosity tuning (IGT)}
  dynamics for updating the generosity parameters
  of $\gtft$ agents among the set $\calG = \{g_1, \dots, g_k\}$
  for each $k \ge 2$.
  In each instantiation, after a $\gtft$ agent
  interacts with a second agent from the population,
  it increments or decrements its generosity parameter
  to the next highest or smallest parameter value in $\calG$,
  depending on the strategy type of its opponent.
  We show this update rule can be viewed as an
  \textit{introspection dynamics with local search}
  from evolutionary games~\cite{nowak2006evolutionary}, 
  and we abbreviate the $k$-th dynamics by $k$-IGT.
\item
  \textbf{Stationary properties of $k$-IGT}
  (Theorem~\ref{thm:kigt-main}):
  We derive exact characterizations of the
  stationary distribution of the $k$-IGT dynamics over
  the set of generosity parameters $\calG$,
  and we derive bounds on its mixing time:
  \begin{enumerate}[
    label={(\roman*)},
    itemsep=0pt,
    leftmargin=2em,
    ]
  \item
    the stationary distribution of the $k$-IGT dynamics
    is \textit{multinomial} with parameters $\gamma \cdot n$
    and $(p_1, \dots, p_k)$,
    where each $p_j \propto (1/\beta - 1)^{j-1}$.
  \item
    when $\beta \neq 1/2$,
    the mixing time of the dynamics is at most 
    $O(kn \log n)$ total interactions, and when $\beta = 1/2$,
    the dependence on $k$ changes to $k^2$. 
    Moreover, the mixing time is lower bounded 
    by at least $\Omega(kn)$ total interactions.
  \end{enumerate}
\item
  \textbf{Convergence to an $\eps$-approximate DE}
  (Theorem~\ref{thm:kigt-eq-converge}):
  We prove that under suitable population and game setting
  regimes (when the maximum generosity parameter $\hatg$
  is bounded with respect to $b, c, \delta$ and
  the population parameters $(\alpha, \beta, \gamma)$),
  the \textit{mean} of the stationary
  distribution $\bfmu$ for the $k$-IGT dynamics
  is an $\eps$-approximate DE
  (Definition~\ref{def:de}) for $\eps = O(1/k)$. 
\end{itemize}

Our results highlight interesting
tradeoffs in the multi-agent computation of an
$\eps$-approximate DE in this setting: for increasing
$k$, meaning larger generosity parameter spaces $\calG$
and larger local state requirements for each $\gtft$ agent,
the time (number of total interactions) to converge
to the stationary distribution of the $k$-IGT dynamics
grows linearly with $k$, but it results in a 
better equilibrium approximation factor $\eps$ decaying 
at a rate of $1/k$.
In general, these are the first such results that
quantify the convergence properties
(both in mixing time, and in equilibrium approximation)
of game dynamics in multi-agent settings with
random pairwise interactions. 

The main technical tool to achieve these results is to 
relate the evolution of the $k$-IGT dynamics to a new family of 
\textit{high-dimensional, weighted Ehrenfest random walks},
which generalize the classic two-dimensional
Ehrenfest urn process~\cite{ehrenfest1907zwei}.
The stationary and mixing results we prove for these 
new Ehrenfest processes may be of independent interest. 
Our results also leave open the more general question of
designing dynamics for other classes of games,
and to quantify their resulting convergence properties
to approximate distributional equilibria.

\paragraph{Structure of Paper}
The remainder of the paper is structured as follows:
in Section~\ref{sec:intro:related}, we discuss
several lines of related work in more detail.
In Section~\ref{sec:tech-overview}, we introduce some
additional preliminaries and provide
a technical overview of our families of dynamics and 
convergence results. 
The proofs of most results are deferred
to the appendix.

\subsection{Related Work}
\label{sec:intro:related}

\paragraph{Population protocol dynamics}
The population model was originally introduced by
Angluin et al.~\cite{DBLP:journals/dc/AngluinADFP06} to
model computation in populations of passively mobile
agents (such as sensor networks or animal populations),
and has since found several other applications,
from chemical reaction networks~\cite{Doty14, LCK16, CCDS15}
to computing via synthetic DNA strands~\cite{CDSPCSS13}.
On the theoretical side, an impressive amount of effort has
been invested in understanding the computational power of the
model, e.g.~\cite{DBLP:journals/dc/AngluinADFP06, AAE08}, 
on analyzing fundamental dynamics such as rumor spreading and averaging~\cite{giakkoupis2016asynchrony, becchetti2018average},
and on the complexity of core algorithmic tasks such
as majority (consensus)~\cite{DV12,PVV09, 
  becchetti2014plurality, alistarh2015fast,
  Berenbrink16,GP16,AAG18, becchetti2020consensus, 
  doty2022time} 
and leader election~\cite{fischer2006self,
  AG15, DS18, GS18, berenbrink2020optimal}
in this model. 
The latter direction has recently lead to tight bounds on 
the space and task complexity of these tasks~\cite{AAG18, 
berenbrink2020optimal, doty2022time}.
In this context, our contributions are to design and analyze
a novel class of repeated game dynamics that lead to
interesting time and space tradeoffs.

\paragraph{Evolutionary game dynamics} 
There is a huge literature on evolutionary game dynamics,
and we briefly mention some key results and the relationship 
to our work. The first approach is to consider evolutionary 
dynamics in an infinite population with the aid of 
differential equations (aka, the replicator dynamics)
\cite{smith1982evolution, nowak2006evolutionary},
and the goal is to study the existence
and stability of equilibrium points.  
The second approach is to consider evolutionary dynamics
in finite populations with a specific class of 
strategies e.g., reactive strategies or memory-1 strategies.
In these approaches, the class of strategies
is uncountable and simulation results suggest which strategies 
successfully evolve in the simulation
of evolutionary dynamics~\cite{nowak1992tit, nowak1993strategy}.
The third approach is to consider evolutionary dynamics on networks 
but with only two strategy types
($\AC$ and $\AD$)~\cite{lieberman2005evolutionary, 
  allen2017evolutionary}.
In contrast to the present paper, none of these works
focus on quantitative aspects related to the mixing time
(or convergence time) to the stationary behavior.

\paragraph{Complexity of equilibrium computation}
Additionally, in both classical and evolutionary game settings,
much attention has been given to understanding 
the computational complexity of computing an equilibrium:
for example, it is known in general that computing 
a Nash equilibrium in a general-sum two-player game 
is PPAD-hard~\cite{chen2009settling, daskalakis2009complexity}.
Moreover, even determining the existence of evolutionary 
stable strategies in two-player strategic form 
games is NP-hard~\cite{etessami2008computational}, and 
computing the fixation probability in evolutionary 
games with two strategies on graphs is 
PSPACE-complete~\cite{ibsen2015computational}.
In light of this, much progress has also been made
on designing dynamics that provably converge
for either (a) special subclasses of games
(such as zero-sum games \cite{nash1951non})
or (b) to approximate or weaker equilibrium concepts 
(such as correlated or coarse correlated equilibria
\cite{aumann1987correlated, roughgarden2016twenty, 
  golowich2020tight}).

% \update{\{Add: computational aspect of EQ\}}. 
% \mk{Computational aspect of equilibria has been studied in
% the literature of evolutionary game theory. For example,\cite{etessami2008computational} studies the
% computation of evolutionary stable strategies or equilibria
% in matrix games and~\cite{ibsen2015computational} proves that the
% computation of fixation probability with two strategies and graphs 
% is PSPACE-complete.}

\paragraph{Other multi-player game settings}
As mentioned earlier in the introduction, 
there is a large body of literature
on other multi-player game settings. 
In the most basic, multi-agent normal-form
game setting~\cite{cesa2006prediction, nisan2007algorithmic},
$n$ players simultaneously
choose actions at each round, and each receives
a payoff according to a reward function
that depends on the actions of \emph{all other players}.
In this setting and its closely related variants, 
extensive work has been devoted to designing
local strategies that provably converge to
an equilibrium, and to determine their corresponding 
rates of convergence (e.g.,~\cite{cai2011minmax, 
  daskalakis2015approximate,   
  lin2020finite, golowich2020tight, hu2023best}).
This is in contrast to the setting of the present work, 
where only a \textit{single random pair} of agents
interacts at each step, and thus these results
are not directly comparable.

An orthogonal line of work to ours previously
investigated game theoretic aspects of population
protocols~\cite{bournez2009playing, bournez2013population},
but the focus in these works is on understanding
the \textit{computational power} of interaction rules
that correspond to symmetric games. 
Additionally, some prior work~\cite{dyer2002convergence}
studied certain random walk processes 
that model the evolution of cooperation in 
repeated prisoner's dilemma games in populations of interacting
agents. However, the focus in that work is restricted
to bounding the stabilization time of the process, 
and not on its characterization within the framework
of equilibrium computation. 

% TECH OVERVIEW

\section{Technical Overview}
\label{sec:tech-overview}

\subsection{Preliminaries}
\label{sec:tech-overview:prelims}

\paragraph{Notation}
We use the shorthand notation $[k] = \{1, \dots, k\}$
for any $k \ge 0$, and we define the set
$\Delta^m_k :=
\big\{ (x_1,\dots, x_k)  \in \mathbb{N}^k \;:\;
\sum_{j \in [k]} x_j = m \big\}$.
For non-negative integers $x$, $a$, and $b$,
we write $[x]_a^b$ to denote $[x]$ truncated to the range $[a, b]$.
For readability, when $a$ and $b$ are clear from context,
we will (by slight abuse of notation) simply write $[x]$. 

\paragraph{Markov chains}
We consider discrete-time Markov chains $\{\x^t\}$
over a discrete space $\Omega$,
with transition matrix $\P: \Omega \times \Omega \to [0, 1]$.
Recall that $\bfpi: \Omega \to [0, 1]$ is a
\textit{stationary distribution} of $\{\x^t\}$ if
$\bfpi \P = \bfpi$  (where we interpret the
probability mass function (PMF) of $\bfpi$ as a row vector).
Recall also that any distribution $\bfnu: \Omega \to [0, 1]$
satisfying the \textit{detailed balance equations}
$\bfnu(\x) \P(\x,\y) = \bfnu(\y) \P(\y, \x)$ for all $\x, \y \in \Omega$
is a stationary distribution for the process.
Starting from $\x^1 := \x$ for any $\x \in \Omega$,
we let $\P^t(\x)$ denote the the distribution of $\x^t$
(i.e., of the process after $t$ steps), and
we write $d(t) := \max_{\x \in \Omega} \| \P^t(\x) - \bfpi \|_{\tv}$
to denote the \textit{distance to stationarity} (in total variation)
of the process after $t$ steps, maximized over all initial states.
Then we define the \textit{mixing time} $\tmix$ of $\{\x^t\}$ 
as $\tmix := \min \{ t \ge 0 \;:\; d(t) \le 1/4\}$. 
We refer the reader to the text of
Levin and Peres~\cite{levin2017markov}
for more background and preliminaries on Markov chains
and mixing times. 

\paragraph{Multinomial distributions}
We recall basic facts about multinomial distributions.
For $m \ge 1$, $k \ge 2$, and a sequence
$(p_1, \dots, p_k)$ such that $p_1 + \dots + p_k = 1$,
a distribution $\bfnu$ is
\textit{multinomial with parameters $m$ and $(p_1, \dots p_k)$}
if the PMF of $\bfnu$ is given by
$
  \bfnu(\x)
  \;=\;
  p_1^{x_1} \cdot p_2^{x_2} \dots p_k^{x_k}
  \cdot \binom{m}{x_1, \dots, x_k}
$
for all $\x = (x_1, \dots, x_k) \in \Delta^m_k$,
where the multinomial coefficient is defined as
$
  \binom{m}{x_1, \dots, x_k}
  \;=\;
  \frac{m!}{x_1! \cdot x_2! \dots x_k!}
$
.
Writing $\bfnu = (\nu_1, \dots, \nu_k)$, it is known
that $\E[\nu_j] = m \cdot p_j$ for all $j \in [k]$.
When $k=2$, then $\bfnu$ is a \textit{binomial distribution},
and we can simply say that $\bfnu$ is 
\textit{binomial with parameters $m$ and $p_1$}.

\subsection{$k$-IGT Dynamics}

We begin by formally introducing our family of local update dynamics
for the problem setting described in Section~\ref{sec:intro:model}.
Fixing $k \ge 2$, recall that we assume each $\gtft$ agent
maintains a generosity parameter $g \in \calG = \{g_1, \dots, g_k\}$
at each time step. The $k$-IGT dynamics then follows two
transition types:
(a) after a $\gtft$ agent $u$ interacts with an $\AC$
agent or a second $\gtft$ agent, $u$ increases its generosity
parameter to the next largest value in $\calG$,
and (b) after a $\gtft$ agent $u$ interacts with an $\AD$
agent, $u$ decreases its generosity parameter to the
next smallest value in $\calG$. Defined formally: 

\begin{restatable}[Incremental-Generosity-Tuning (IGT) Dynamics]
  {definition}{kigtdef}
  \label{def:kigt}
  Consider an $(\alpha, \beta, \gamma)$ population
  and an RD game setting with maximum generosity parameter
  $\widehat g$. 
  For any $k \ge 2$, define the set of $k$ generosity parameters
  $\calG = \{g_1, \dots, g_k\}$
  where each $g_j = \hg \cdot \big(\frac{j-1}{k-1} \big)$. 
  Randomly initialize the parameter of
  every $\gtft$ agent to some $g \in \calG$.
  Then the $k$-IGT dynamics is the population
  protocol that evolves for all $j \in \{1, \dots, k\}$
  according to the following
  transitions over the strategy types of interacting nodes:
  \begin{equation*}
    \begin{aligned}
    \text{(i)}\;\;
    &g_j \;+\;  \AC
      \;\longrightarrow\;
      \inc(g_j) \;+\; \AC \\
    \text{(ii)}\;\;
    &g_j \;+\;  g_i \;\;
      \;\longrightarrow\;
      \inc(g_j) \;+\; g_i \;,
      \;\;\;\;\;\text{for all $i \in[k]$}\\
    \text{(iii)}\;\;
    &g_j \;+\;  \AD
      \;\longrightarrow\;
      \dec(g_j) \;+\; \AD  \;\;,
    \end{aligned}
  \end{equation*}
  where $\inc(g_j) := g_{\min\{j+1, k\}}$
  and $\dec(g_j) := g_{\max\{j-1, 1\}}$. 
\end{restatable}

\begin{figure}[tb!]
  \centering
  \vspace*{-1em}
  \includegraphics[scale=0.65]{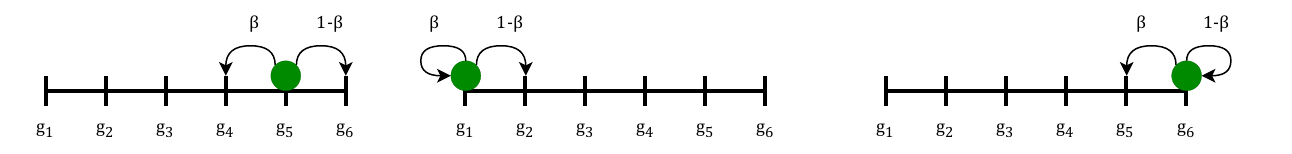}
  \caption{\small
    When $k=6$, three examples showing how the parameter value
    of a $\gtft$ agent is probabilistically updated under
    the $k$-IGT dynamics. Note that conditioned on a $\gtft$ agent $u$
    being sampled as the first agent, $u$ increments its
    parameter value with probability $(1-\beta)$ 
    and decrements its parameter value 
    with probability $\beta$ (where in both cases,
    the values are truncated to the range $[0, \hg]$).
  }
  \label{fig:examples}
\end{figure}
Figure~\ref{fig:examples} shows an example
of how the parameter value of a $\gtft$ agent is updated
depending on the strategy type of its interaction partner.\footnote{
  We refer to the first agent in such interactions as
  the \textit{initiator}, and in this setting we
  assume only that the initiator
  ever updates its strategy following an interaction.
  This type of \textit{one-way} protocol is a standard
  modeling assumption in the population
  protocol literature, e.g.  
  \cite{DBLP:journals/dc/AngluinADFP06, AngluinAE08, AAE08,
    becchetti2014plurality, becchetti2020consensus}.
}
Note that while the transition rules of the k-IGT dynamics 
are defined with respect to the \textit{strategies} of the two agents,
these transitions could alternatively be defined with respect to
the \textit{observed game actions}.
We remark that for sufficiently large $\delta$
the resulting transition rules will be essentially the same
as in Definition~\ref{def:kigt},
as in this case each agent can infer the strategy type
of its opponent with high probability. 
It follows that the resulting dynamics will be
essentially the same as to those induced
by Definition~\ref{def:kigt}, up to some small approximation error.
Thus for simplicity, the transitions of the $k$-IGT dynamics
are defined with respect to the strategy types of the
interacting agents. 

\paragraph{Bridging $k$-IGT and introspection dynamics}
Under mild constraints on the reward vector $\v$
and maximum generosity parameter $\hg$,
we show that the transition rules 
of the $k$-IGT dynamics are 
\textit{locally optimal} in the following sense:
under any transition rule of Definition~\ref{def:kigt},
the expected payoff $f(g, \S)$
will \textit{never decrease} had the $\gtft$ agent
used the \textit{updated} generosity parameter value
(specified by the transition rules) against
its previous opponent with strategy $\S$. 
In particular, this bridges the relationship
between the $k$-IGT dynamics and
the classic concept of \textit{introspection dynamics
  with local search}
from evolutionary games~\cite{hofbauer1998evolutionary},
where an agent explores the local neighborhood
of its strategy space to
adopt a new strategy that would have performed better.
Formally, in
Appendix~\ref{app:gt} 
we prove the following
proposition, which relies on (i) deriving
exact expressions for the expected payoffs $f(g, \cdot)$,
and (ii) differentiating these payoff functions with
respect to $g$:

\begin{restatable}{proposition}{params}
  \label{prop:params}
  Consider an RD game setting consisting of 
  (a) an initial cooperation probability $s_1 \in [0, 1)$
  (b) a restart probability 
  $\delta > \tfrac{c}{b}$ and
  (c) a maximum generosity parameter
  $\widehat g < 1 - \frac{c}{\delta b}$.
  Then for all $g, g' \in  [0, \widehat g]$
  such that $g < g'$, 
  the following three statements hold:
  \begin{enumerate}[
    label={(\roman*)},
    leftmargin=3em,
    ]
  \item
    $f(g, g'')
      \;<\; f(g', g'') 
      \;\;\;\text{for all $g'' \in [0, \widehat g]$}$
    \item
      $f(g, \AC) \;\le\; f(g', \AC)$
    \item
      $f(g, \AD) \;>\; f(g', \AD)$. 
  \end{enumerate}
\end{restatable}

\subsubsection{Analysis setup}
\label{sec:tech-overview:kigt-intro}

Given these $k$-IGT dynamics, our first main result
(Theorem~\ref{thm:kigt-main}) characterizes
the stationary and mixing properties of the distribution
over $\gtft$ strategies induced by the dynamics.
For this, let $m := \gamma \cdot n$ denote the
number of $\gtft$ nodes in the population, and fix $k \ge 2$.
We define
$\z^t := (z^t_1, \dots, z^t_k) \in \Delta^m_k$
as the \textit{count vector} specifying the number of agents with
strategy $g_i$ after the $t$'th step,
and we study the Markov chain $\{\z^t\}$.
To this end, we begin by specifying how the
transitions of the $k$-IGT dynamics map to
the transitions $\z^t \to \z^{t+1}$:
recall from Definition~\ref{def:kigt} that
following any (non-null) interaction, exactly one
$\gtft$ agent updates its parameter.
Then conditioned on an interaction at step $t$
whose initiator has strategy $g_j$
for some $j \in [k]$, 
then the coordinates of $\z^{t+1}$
can be specified by one of the following cases,
depending on the strategy of the sampled interaction partner:

\begin{multicols}{2}
\begin{enumerate}[
  label=(\alph*),
  leftmargin=2em,
  rightmargin=2em,
  itemsep=0pt]
\item
  If the second agent has strategy $\AC$ or $\gtft$,
  then for each $j \in [k]$: 
  \begin{equation*}
    z^{t+1}_i =
    \begin{aligned}
      \begin{cases}
        \;z^t_i - 1 &\text{if $i = j$ and $j < k$} \\
        \;z^t_i + 1 &\text{if $i = j+1$ and $j < k$} \\
        \;z^t_i &\text{otherwise.}
      \end{cases}
    \end{aligned}
  \end{equation*}
\item
  If the second agent has strategy $\AD$,
  then for each coordinate $i \in [k]$: 
  \begin{equation*}
    z^{t+1}_i =
    \begin{aligned}
      \begin{cases}
        \;z^t_i - 1 &\text{if $i = j$ and $j > 1$} \\
        \;z^t_i + 1 &\text{if $i = j-1$ and $j > 1$} \\
        \;z^t_i &\text{otherwise.}
      \end{cases}
    \end{aligned}
  \end{equation*}
\end{enumerate}
\end{multicols}
Given that the pair of interacting agents are sampled
uniformly at random at each time step, this implies
that the update in (a) occurs with (unconditional) probability
$(z^t_j/n) \cdot (1-\beta)$, and the update in (b)
occurs with probability $(z^t_j/n) \cdot \beta$.
Then given $\z^t$, we can summarize
all transitions $\z^t \to \z^{t+1}$
(for $\z^{t+1} \neq \z^t$) that occur with non-zero probability
as follows: for all $j \in [k-1]$,
\begin{equation}
  \begin{aligned}
    \z^{t+1}
    &\;=\;
    (z^t_1, \dots, z^t_j - 1, z^t_{j+1} + 1, \dots, z^t_k)
    \;\;\;
    \text{w.p. $\tfrac{z^t_j}{m} \cdot (1-\alpha-\beta)(1-\beta)$}
    % \label{eq:kigt:plus}
    \\
    \text{and}\;\;\;
    \z^{t+1}
    &\;=\;
    (z^t_1, \dots, z^t_j + 1, z^t_{j+1} - 1, \dots, z^t_k)
    \;\;\;
    \text{w.p. $\tfrac{z^t_{j+1}}{m} \cdot (1-\alpha-\beta)\beta$}
    \;\;.
    % \label{eq:kigt:minus}
  \end{aligned}
  \label{eq:kigt-zt}
\end{equation}
Observe that transition probabilities in~\eqref{eq:kigt-zt}
are normalized by $m$ (using the definition $n = m / (1-\alpha-\beta)$),
and that the coefficients $(1-\alpha-\beta)(1-\beta)$
and $(1-\alpha-\beta)(\beta)$ are absolute constants
with respect to the coordinates of $\z^t$. 
Thus we can view the process as
a special case of a more general class of Markov chains
$\{\x^t\}$ over $\Delta^m_k$, whose transition
probabilities (up to the absolute constant coefficients)
are of the form in expression~\eqref{eq:kigt-zt}. 
We proceed to define and analyze this more general
set of processes, from which characterizing the
stationary and mixing properties of $\{\z^t\}$ will follow. 

\subsection{High-Dimensional, Weighted Ehrenfest Processes}
\label{sec:tech-overview:ehrenfest}

We introduce and analyze a more general class
of random walks on $\Delta^m_k$,
which we refer to as 
\textit{high-dimensional, weighted Ehrenfest processes}.
Defined formally:

\begin{restatable}[$(k, a, b ,m)$-Ehrenfest Process]
  {definition}{ehrenfestkd}
  \label{def:ehrenfestkd}
  Fix $k \ge 2$, and $a, b > 0$ such that $a+b \le 1$. 
  Let $\{\x^t\}$ be the Markov chain on $\Delta^m_k$
  with transition matrix $\P: \Delta^m_k \to \Delta^m_k$,
  where for all $j \in [k-1]$ and
  $\x  = (x_1, \dots, x_k) \in \Delta^m_k$:
  \begin{align*}
    &\P\big(\x, (x_1, \dots, x_j -1, x_{j+1} + 1, \dots, x_k)\big)
    \;=\;
      p^{j, j+1}_\x
      \;:=\; 
      a \cdot \frac{x_j}{m} \\
    &\P\big(\x, (x_1, \dots, x_j + 1, x_{j+1} - 1, \dots, x_k)\big)
    \;=\;
      p^{j+1, j}_\x
      \;:=\;
      b \cdot \frac{x_{j+1}}{m} \\
    &\P(\x, \x)
    \;=\;
    p^\bot_\x
    \;:=\;
    1 - \Big(
    {\textstyle \sum_{j=1}^{k-1}}\;
    p^{j, j+1}_\x + p^{j+1, j}_\x
    \Big) \;,
  \end{align*}
  and $\P(\x, \y) = 0$ for all other $\y \in \Delta^m_k$.
  Then we call $\{\x^t\}$ the $(k, a, b, m)$-Ehrenfest process.
\end{restatable}

\paragraph{Relationship to the two-urn Ehrenfest Process}
When $k=2$ and $a=b=\frac{1}{2}$, the process
reduces to the classical \textit{Ehrenfest Urn Process}
\cite{ehrenfest1907zwei, takacs1979urn, grimmett2020probability}
from statistical physics.
Here, $m$ balls are distributed in two urns.
At each step, an urn is sampled proportionally to its load,
and with probability half, a ball from the sampled urn
is placed into the other urn.
The $(k, a, b, m)$-Ehrenfest process generalizes
this original setting to a weighted, high-dimensional regime:
we consider $m$ balls distributed over a sequence of $k$ urns,
and after sampling the $j$'th urn proportionally to its load,
a ball from urn $j$
is placed into urn $[j+1]$ with probability $a$,
and into urn $[j-1]$ with probability $b$. 
While the stationary and mixing behavior of
two-urn process (including several weighted variants)
is well-studied
\cite{kac1947random, karlin1965ehrenfest,krafft1993mean,
  dette1994generalization, 
  diaconis2011mathematics, mitzenmacher2017probability},
we give the first such analyses for the weighted,
high-dimensional analogs from Definition~\ref{def:ehrenfestkd}.

\subsubsection{Deriving the stationary distributions}
We exactly characterize the stationary distributions of
$(k, a, b, m)$-Ehrenfest processes:
we show these distributions
are \textit{multinomial} with
parameters $(p_1, \dots, p_k)$, and $m$, where
each $p_j \propto (a/b)^{j-1}$.
For $k=2$ and 3, this is obtained by
viewing the process as a weighted random walk
on a graph with vertex set $\Delta^m_k$,
and by solving the recurrences stemming from
the detailed balance equations
(these calculations are derived formally in  
Sections~\ref{app:ehrenfest:stationary-derive-k2}
and~\ref{app:ehrenfest:stationary-derive-k3}).

For higher dimensions (i.e., general $k$),
we use the form of the stationary PMFs for $k=2$ and 3
as an Ansatz for the specifying and verifying (via the
detailed balance equations) the
stationary PMF. 
Stated formally, we prove the following result,
the proof of which is given in
Appendix~\ref{app:ehrenfest:stationary-verify}.

\begin{restatable}{theorem}{ehrenfestkdstationary}
  \label{thm:ehrenfestkd:stationary}
  Fix $a, b > 0$ with $a+b \le 1$, and let $\lambda := a/b$.
  For any $k, m \ge 2$,
  let $\{\x^t\}$ be the $(k, a, b, m)$-Ehrenfest process,
  and let $\bfpi: \Delta^m_k \to [0, 1]$
  be its stationary distribution.
  Then $\bfpi$ is multinomial with parameters
  $m$ and $(p_1, \dots, p_k)$, where 
  $p_j :=
  \frac{\lambda^{j-1}}{
    \sum_{i=1}^k \lambda^{i-1}
  }
  $
  for all $j \in [k]$.
\end{restatable}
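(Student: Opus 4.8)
The plan is to verify that the proposed multinomial PMF satisfies the detailed balance equations for the transition matrix $\P$ of the $(k,a,b,m)$-Ehrenfest process; since any distribution obeying detailed balance is stationary (as recalled in the preliminaries), this identifies $\bfpi$ up to establishing uniqueness. Write $\bfpi(\x) = \binom{m}{x_1,\dots,x_k}\prod_{i=1}^k p_i^{x_i}$ with $p_i \propto \lambda^{i-1}$ and $\lambda = a/b$. First I would observe that the only pairs of \emph{distinct} states carrying nonzero transition probability are adjacent-move pairs: given $\x \in \Delta^m_k$ and $j \in [k-1]$ with $x_j \ge 1$, let $\y = (x_1,\dots,x_j-1,x_{j+1}+1,\dots,x_k)$ be the state reached by moving one ball from urn $j$ to urn $j+1$. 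For every other $\y \neq \x$ we have $\P(\x,\y) = \P(\y,\x) = 0$, so detailed balance holds trivially, and it likewise holds for the self-loop $\y = \x$. Thus it suffices to check detailed balance on these adjacent-move pairs.

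Second, I would compute the two relevant transition probabilities and the ratio of stationary masses. By Definition~\ref{def:ehrenfestkd}, the forward transition has $\P(\x,\y) = a \cdot x_j/m$, while the reverse transition (moving a ball from urn $j+1$ back to urn $j$, where the load of urn $j+1$ in $\y$ is $x_{j+1}+1$) has $\P(\y,\x) = b\cdot(x_{j+1}+1)/m$. For the stationary masses, the ratio of multinomial coefficients is $x_j/(x_{j+1}+1)$ and the ratio of product terms is $p_{j+1}/p_j$, giving
\[
  \frac{\bfpi(\y)}{\bfpi(\x)} = \frac{x_j}{x_{j+1}+1}\cdot \frac{p_{j+1}}{p_j}.
\]
Substituting into $\bfpi(\x)\P(\x,\y) = \bfpi(\y)\P(\y,\x)$, the factors of $x_j$, $(x_{j+1}+1)$, and $1/m$ cancel, and the equation reduces to the single identity $a/b = p_{j+1}/p_j$. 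Since $p_i \propto \lambda^{i-1}$ gives exactly $p_{j+1}/p_j = \lambda = a/b$ for every $j \in [k-1]$, detailed balance holds on all adjacent-move pairs, hence $\bfpi$ is stationary.

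Finally, I would confirm that $\bfpi$ is a valid probability distribution (it sums to one over $\Delta^m_k$ by the multinomial theorem, using $\sum_i p_i = 1$) and that it is the \emph{unique} stationary distribution. The latter follows from irreducibility of the chain on $\Delta^m_k$: since $a,b > 0$, single balls can be shifted between adjacent urns in both directions, so any configuration is reachable from any other, and an irreducible finite chain has a unique stationary distribution.

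The computation is essentially routine once the multinomial Ansatz is in hand; the only points requiring care are the boundary cases. Since the forward move from urn $j$ requires $x_j \ge 1$ (otherwise $\y \notin \Delta^m_k$), no division by zero arises, and states with some $x_i = 0$ are handled automatically because the corresponding transition is absent on \emph{both} sides of the balance equation. The genuinely nontrivial part of the result — obtaining the exponential form $p_j \propto \lambda^{j-1}$ — has already been done by solving the $k=2$ and $k=3$ recurrences, so the main work here is confirming that this Ansatz propagates consistently through all $k-1$ adjacent balance equations in arbitrary dimension.
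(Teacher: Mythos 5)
Your proposal is correct and follows essentially the same route as the paper: verifying the multinomial Ansatz $p_j \propto \lambda^{j-1}$ via the detailed balance equations on adjacent-move pairs, with uniqueness from irreducibility. If anything, your ratio formulation $\bfpi(\y)/\bfpi(\x) = \frac{x_j}{x_{j+1}+1}\cdot\frac{p_{j+1}}{p_j}$ is slightly cleaner and more explicit than the paper about the reverse probability being $b\,(x_{j+1}+1)/m$ (the paper's equation~\eqref{eq:dbe-j:1} nominally writes $p_\x^{j+1,j}$ but correctly computes with the incremented load), and you correctly read past the typo in the theorem statement, where the exponent $(j-1)\cdot x_j$ should be $j-1$.
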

 
\subsubsection{Bounds on mixing times} 
Let $\tmix$ and $d(t)$ denote the
mixing time and distance to stationarity
(as defined in Section~\ref{sec:tech-overview:prelims})
of the $(k, a, b, m)$-Ehrenfest process.
We prove the following upper and lower bounds
on $\tmix$:

\begin{restatable}{theorem}{ehrenfestkdmixing}
  \label{thm:ehrenfestkd:mixing}
  Fix $a, b > 0$ with $a+b \le 1$,
  and $k, m \ge 2$.
  Let $\tmix$ be the mixing time of
  the $(k, a, b, m)$-Ehrenfest process.
  Then
  \begin{equation*}
    \tmix
    \;=\;
    \begin{aligned}
      \begin{cases}
        \;O\big(\min\{\frac{k}{|a-b|}, k^2 \} \cdot m \log m \big)
        \;\;\;&\text{when $a \neq b$} \\
        \;O\big( k^2 \cdot m \log m \big)
        \;\;\;&\text{when $a = b$}
      \end{cases}
    \end{aligned}
    \;\;.
  \end{equation*}
  Moreover, $\tmix = \Omega(km)$.
\end{restatable}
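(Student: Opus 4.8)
The plan is to bound $\tmix$ through the synchronous coupling $\{(X_t, Y_t)\}$ on $\Omega = \{1,\dots,k\}^m$ introduced above, whose coordinate-wise count vectors evolve as two copies of the $(k,a,b,m)$-Ehrenfest process. Since the count vector is a deterministic function of the lifted state, coalescence $X_t = Y_t$ forces the two count vectors to agree; so, lifting a worst-case start to $X_0$ and a sample of the stationary law to $Y_0$ (the product of single-coordinate reflecting-walk stationaries $\pi_1(j)\propto (a/b)^{j-1}$, whose count vector is exactly the multinomial of Theorem~\ref{thm:ehrenfestkd:stationary}), the coupling inequality gives $d(t)\le \Pr[X_t \neq Y_t]$. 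It thus suffices to bound the time until \emph{every} coordinate coalesces. First I would reduce to a \emph{single} coordinate: I will show that from any configuration one coordinate fails to coalesce by some horizon $t_0$ with probability at most $1/2$, so that a restart/Markov argument across blocks of length $t_0$ yields failure probability at most $2^{-\lceil t/t_0\rceil}$, and a union bound over the $m$ coordinates (taking $t = \Theta(\log m)\cdot t_0$) contributes the $\log m$ factor, giving $\tmix = O(t_0 \log m)$.

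Next I would analyze one coordinate $i$. Because both copies apply the \emph{same} up/down decision (truncated to $[1,k]$), the coupling is monotone: if $X^{(i)}_0 \le Y^{(i)}_0$ then $X^{(i)}_t \le Y^{(i)}_t$ for all $t$, and the gap $D_t := Y^{(i)}_t - X^{(i)}_t \in \{0,\dots,k-1\}$ is non-increasing. The gap drops by $1$ \emph{only at a boundary reflection}: an up-move while $Y^{(i)}_t = k$, or a down-move while $X^{(i)}_t = 1$. Since a coordinate is selected with probability $1/m$ per step, in real time each coordinate performs a lazy reflecting walk on $\{1,\dots,k\}$ with up/down rates $a/m$ and $b/m$, and coalescence is the time to accumulate $D_0 \le k-1$ boundary reflections. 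The key quantity is therefore the per-active-step \emph{reflection rate}, which I would estimate from $\pi_1$.

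In the biased case (say $a > b$) the walk drifts to the top in $O(k/|a-b|)$ active steps, and $\pi_1(k) \to (a-b)/a$, so the reflection rate at the top is $\approx \pi_1(k)\cdot a \approx |a-b|$; hence a gap of size at most $k-1$ closes within $O(k/|a-b|)$ active steps with constant probability, giving $t_0 = O\big((k/|a-b|)\,m\big)$. In the symmetric case $a=b$ the stationary law is uniform, $\pi_1(j)=1/k$, so the reflection rate is only $\Theta(a/k)$ per active step and gap-closing is diffusive, requiring $O(k^2)$ active steps (with $a=b$ bounded below by a constant, as holds in the application, where $a,b=\Theta(1)$), i.e. $t_0 = O(k^2 m)$. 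The $\min\{k/|a-b|, k^2\}$ form then follows because the diffusive $O(k^2)$ estimate holds in \emph{all} regimes (an upward bias only accelerates pile-up at the boundary), while the drift estimate $O(k/|a-b|)$ is sharper precisely when $|a-b|$ exceeds a constant multiple of $1/k$. Combining $t_0 = O\big(\min\{k/|a-b|,k^2\}\cdot m\big)$ with the restart/union-bound overhead proves the theorem.

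The main obstacle is the single-coordinate coalescence estimate, in two respects. First, because the gap decreases only through boundary reflections, I must upgrade the heuristic reflection-rate calculation into a genuine constant-probability bound on the time to collect $\Theta(k)$ reflections: for the biased case via a drift and return-time argument controlling how quickly the walk revisits the top boundary, and for the symmetric case via the diffusive hitting-time estimate for the reflecting walk on a path of $k$ vertices (equivalently, comparison with its $\Theta(k^2)$ relaxation time). Second, I must convert between \emph{active} steps (selection events, occurring at rate $1/m$) and real time, controlling the binomial fluctuations in the number of selections uniformly over all $m$ coordinates so that the union bound survives; this is where the extra $\log m$ factor is paid. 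Finally, the two regimes must be stitched together so the bound degrades continuously to the symmetric $k^2$ estimate as $|a-b|\to 0$.
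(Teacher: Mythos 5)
Your proposal is correct and its outer skeleton coincides with the paper's proof: the same synchronous coupling $\{(X_t,Y_t)\}$ on $\{1,\dots,k\}^m$, the same coordinate decomposition $\taucouple=\max_{i\in[m]}\taucouple^i$, the same conversion between selection events and real time (though the paper does this in expectation via $\E[\taucouple^i]=m\cdot\E[\taucoal^i]$ using geometric waiting times, which avoids any concentration argument for the selection counts), and the same restart-plus-union-bound step paying the $\log m$ factor, which is essentially verbatim the paper's Lemma~\ref{lem:taucouple}. The genuine divergence is in the single-coordinate coalescence lemma. You count boundary reflections: the gap drops by one exactly at an up-move with $Y^i_t=k$ or a down-move with $X^i_t=1$ (correct), and you estimate the reflection rate $a\,\pi_1(k)\approx|a-b|$ from the stationary law of the reflecting walk, so that collecting the at most $k-1$ needed reflections takes $O(k/|a-b|)$ active steps in the biased case and $O(k^2)$ diffusively when $a=b$; made rigorous via Kac's return-time formula, a Wald argument across excursions to the boundary, and an $O(k/|a-b|)$ initial hitting estimate, this route works. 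The paper instead dominates $\taucoal^i$ by the absorption time of an auxiliary \emph{unreflected} walk $Z_t$ on $\{-k,\dots,k\}$ (Proposition~\ref{prop:tau-relationship}: absorption at $+k$ forces both reflected walks to $k$, while each record minimum of $Z$ below $0$ charges a bottom reflection of $X^i$ that shrinks the gap), and then obtains $\E[\tauabsorb]$ in closed form by optional stopping with the martingales $Z_t-(a-b)t$ and $(b/a)^{Z_t}$. What each approach buys: yours is probabilistically transparent and explains \emph{why} the two rates arise, but it leaves two debts you only partially acknowledge --- your claim that the $O(k^2)$ bound persists under arbitrary bias (``bias only accelerates pile-up'') requires an actual monotone-coupling comparison lemma, whereas the paper reads the crossover regime $|a-b|=o(1/k)$ directly off the exact gambler's-ruin formula~\eqref{eq:tau-absorb-general}; and your caveat that $a=b$ be bounded below by a constant is real but is in fact implicitly shared by the paper, whose martingale $Z_t^2-t$ is exact only when $a+b=1$ (the lazy-walk martingale is $Z_t^2-(a+b)t$, giving $k^2/(a+b)$, which is harmless since $a,b$ are fixed constants in the theorem and in the $k$-IGT application where $a+b=1-\alpha-\beta$). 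Finally, your variant of the coupling inequality --- starting $Y_0$ from the product of single-coordinate stationary laws, whose induced count vector is exactly the multinomial of Theorem~\ref{thm:ehrenfestkd:stationary} --- is a nice observation but unnecessary: the paper's bound $d(t)\le\max_{x,y}\Pr(\taucouple>t)$ from~\eqref{eq:coupling-lemma} needs no stationary start.
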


Here, observe that the case distinction in the 
upper bound quantifies the impact of more \textit{biased}
$a$ and $b$ parameters in speeding up
convergence to the stationary distribution, 
while the lower bound establishes a  linear dependence
on $k$ that is uniform over all $a$ and $b$. 
The full proof of the theorem is 
developed in 
Appendix~\ref{app:ehrenfest:mixing}, 
but we provide a high-level proof sketch here:

\paragraph{Proof sketch of Theorem~\ref{thm:ehrenfestkd:mixing}}
To derive an upper bound on $\tmix$, 
we introduce the following coupling:
first, let $\{X_t\}$ and $\{Y_t\}$
be random walks over $\Omega = \{1, \dots, k\}^m$.
At time $t$, we sample a coordinate $i \in [m]$ uniformly at random,
and simultaneously increment or decrement the $i$'th coordinate
of both $X_t$ and $Y_t$ (with values truncated to $[k]$)
with probability $a$ and $b$, respectively.

It is straightforward to see that
the vector of \textit{counts} of each value $j \in [k]$
in both $X_t$ and $Y_t$ evolve as
$(k, a, b, m)$-Ehrenfest processes. Then using the
standard relationship between the \textit{coupling
  time} of $\{(X_t, Y_t)\}$ (the first
$t$ when $X_t = Y_t$) and
mixing times~\cite{levin2017markov},
it suffices to probabilistically upper bound the
coupling time of the joint process. We achieve
this by estimating the time 
to coalesce each of the $m$ coordinates of the process, and this
reduces to bounding the expected absorption times
of $m$ independent (possibly) biased random walks on
$\{-k, \dots, k\}$ (which necessitates the
case distinction between $a\neq b$
and $a=b$). 
The full coupling setup is given in 
Appendix~\ref{app:ehrenfest:mixing:upper}
and requires a more careful coupling analysis 
compared to the original two-urn process.

For the $\Omega(km)$ lower bound on $\tmix$, 
we use a standard diameter lower bound 
approach~\cite{levin2017markov}, which 
says that $\tmix \ge \Omega(D)$, where
$D$ is the diameter of the graph induced
by the probability transition matrix of the
process. For this, we can derive an straightforward
estimate of $D \ge \Omega(km)$ using the structure of
the transition probabilities from Definition~\ref{def:ehrenfestkd}.
The full argument of the lower bound is given in
Appendix~\ref{app:ehrenfest:mixing:lower}.

\begin{remark}
\label{remark:ehrenfest}
  The results we establish for the $(k, a,b, m)$-Ehrenfest 
  process are general and may be of independent and broader interest. 
  Moreover, our mixing time bounds of 
  Theorem~\ref{thm:ehrenfestkd:mixing} open up several
  interesting questions.
  First, notice that our lower bounds on $\tmix$ leave open
  at least a  $O(\log m)$ gap relative to the upper bound. 
  On the other hand, $\Omega(m \log m)$ is a known lower bound
  on $\tmix$ for the original, unweighted $k=2$ process~\cite{
    diaconis1990asymptotic,levin2017markov},
  and thus we conjecture that our upper bounds on $\tmix$ for
  the general $(k, a, b, m)$ process are asymptotically optimal. 
  We leave establishing lower bounds for $\tmix$ of this same
  order as future work. 
  Additionally, the original two-urn process
  is known to exhibit a \textit{cut-off} phenomenon
  \cite{basu2014characterization},
  in which the distance to stationarity of the process
  sharply decays precisely at around $\frac{1}{2} \cdot m \log m$ 
  steps ~\cite{levin2017markov}.
  Investigating this phenomenon for the general
  $(k, a,b, m)$ process (and obtaining such exact cutoff constants
  in terms of $a$ and $b$)
  is an interesting line of future work.
\end{remark}

\subsection{Stationary Properties of $k$-IGT Dynamics}
\label{sec:tech-overview:kigt-reduce}

By combining the arguments of 
Sections~\ref{sec:tech-overview:kigt-intro}
and~\ref{sec:tech-overview:ehrenfest},
we can use the stationary and mixing time analysis
of the $k$-dimensional, weighted Ehrenfest process
to analyze the evolution of the $k$-IGT dynamics
and formally state our main results. 
Specifically, based on the transition probabilities
in~\eqref{eq:kigt-zt}, for any $k \ge 2$, 
the sequence $\{\z_t\}$ induced by the $k$-IGT dynamics is
exactly a $(k, a, b, m)$-Ehrenfest process, where
$a := \gamma (1-\beta)$,
$b := \gamma \beta$,
and $m = \gamma n$.
Thus the stationary distribution and mixing time bounds
developed in Theorems~\ref{thm:ehrenfestkd:stationary}
and~\ref{thm:ehrenfestkd:mixing}
exactly apply to the sequence $\{\z_t\}$
induced by the $k$-IGT dynamics. 

Before stating this main result,
for convenience and readability, we first summarize
in Table~\ref{table:summary-small} our core notation 
and the components of RD games in
the $(\alpha, \beta,\gamma)$ populations:

\begin{table}[h!]
  \small
\begin{center}
  \begin{tabular}{cl}
    \toprule
    Symbol & Definition\\
    \midrule
    $b, c$ & donation game reward parameters
    \\
        $\alpha$ & fraction of $\AC$ agents
    \\
    $\delta$ & continuation probability
    \\
        $\beta$ & fraction of $\AD$ agents
    \\
    $s_1$ & initial cooperation probability
    \\
        $\gamma$ & fraction of $\gtft$ agents
    \\
    $\hatg$ & maximum generosity parameter
    \\
        $k$ & number of $\gtft$ parameter values \\
\bottomrule
\end{tabular}
\end{center}
\caption{Summary of Notation in RD Games
and $(\alpha,\beta, \gamma)$ populations}
\label{table:summary-small}
\end{table}

% \vspace*{-1em}
Then the stationary and convergence properties
of the $k$-IGT dynamics are as follows:

\begin{theorem}
  \label{thm:kigt-main}
  Fix $k \ge 2$, and consider the sequence $\{\z^t\}$
  induced by the $k$-IGT dynamics
  on an $(\alpha, \beta, \gamma)$ population and an $RD$ game setting
  with maximum generosity parameter $\widehat g$.
  Then $\{\z^t\}$ converges
  to a multinomial stationary distribution $\bfpi$ with parameters
  $m$ and $(p_1, \dots, p_k)$, where each
  $
  p_j =
  \frac{(1/\beta- 1)^{(j-1)}}{\sum_{i=1}^k (1/\beta-1)^{(i-1)}}
  $
  for $j \in [k]$.
  Moreover, letting $\tmix$ denote the mixing time of $\{\z^t\}$ to $\bfpi$:
  \begin{equation*}
  \tmix
  \;\le\;
  \begin{aligned}
    \begin{cases}
      \;O\big(\min\{\frac{k}{|1-2\beta|}, k^2 \} \cdot n \log n \big)
      \;\;\;&\text{when $\beta \neq \frac{1}{2}$} \\
      \;O\big( k^2 \cdot n \log n \big)
      \;\;\;&\text{when $\beta = \frac{1}{2}$}
    \end{cases}
  \end{aligned}\;\;,
  \end{equation*}
  and $\tmix \ge \Omega(kn)$.
\end{theorem}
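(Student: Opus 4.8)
The plan is to prove Theorem~\ref{thm:kigt-main} as a direct corollary of the two structural results on high-dimensional weighted Ehrenfest processes (Theorems~\ref{thm:ehrenfestkd:stationary} and~\ref{thm:ehrenfestkd:mixing}). The whole argument reduces to two steps: (i) exhibiting $\{\z^t\}$ as a $(k,a,b,m)$-Ehrenfest process with explicit parameters, and (ii) translating those parameters back into the population variables $\alpha, \beta, n$.

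First I would read off the parameters. Matching the transition probabilities in~\eqref{eq:kigt-zt} term-by-term against Definition~\ref{def:ehrenfestkd}, the increment transition $\z^t \to (\dots, z^t_j - 1, z^t_{j+1}+1, \dots)$ carries probability $\tfrac{z^t_j}{m}\cdot(1-\alpha-\beta)(1-\beta)$ and the decrement transition carries probability $\tfrac{z^t_{j+1}}{m}\cdot(1-\alpha-\beta)\beta$. Since these off-diagonal entries determine the chain (the self-loop probability is fixed by row-stochasticity, matching $p^\bot_\x$), the sequence $\{\z^t\}$ is exactly a $(k,a,b,m)$-Ehrenfest process with
\[
a := (1-\alpha-\beta)(1-\beta), \quad b := (1-\alpha-\beta)\beta, \quad m := (1-\alpha-\beta)n.
\]
Before invoking the earlier theorems I must check their hypotheses: assuming the population contains at least some $\gtft$ nodes (so $1-\alpha-\beta>0$) and $0<\beta<1$, both $a,b>0$, and $a+b=(1-\alpha-\beta)(1-\beta+\beta)=1-\alpha-\beta\le 1$, so the standing conditions $a,b>0$ and $a+b\le 1$ hold.

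For the stationary distribution I apply Theorem~\ref{thm:ehrenfestkd:stationary} with $\lambda=a/b$. The common factor $(1-\alpha-\beta)$ cancels, leaving $\lambda=(1-\beta)/\beta=1/\beta-1$, and substituting into the multinomial weights $p_j \propto \lambda^{j-1}$ yields exactly $p_j = (1/\beta-1)^{j-1}/\sum_{i=1}^k (1/\beta-1)^{i-1}$, as claimed. For the mixing time I apply Theorem~\ref{thm:ehrenfestkd:mixing}; the only work is constant bookkeeping. Here $a-b=(1-\alpha-\beta)(1-2\beta)$, so $a=b$ precisely when $\beta=1/2$, which explains the case split. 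When $\beta\neq 1/2$, the common factor again cancels in $\tfrac{k}{|a-b|}\cdot m\log m = \tfrac{k}{|1-2\beta|}\cdot n\log\!\big((1-\alpha-\beta)n\big)$; since $(1-\alpha-\beta)\le 1$ gives $\log((1-\alpha-\beta)n)\le\log n$ and treating $\alpha,\beta$ as fixed constants absorbs the remaining factors, this is $O\big(\tfrac{k}{|1-2\beta|}\,n\log n\big)$, and combining with the $k^2$ branch yields the stated $\min$. The $\beta=1/2$ case follows identically from the $a=b$ branch using $m\log m = O(n\log n)$.

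I do not expect a genuine obstacle here, since the theorem is designed as a corollary: the substantive mathematics lives in Theorems~\ref{thm:ehrenfestkd:stationary} and~\ref{thm:ehrenfestkd:mixing}. The only points requiring mild care are the mixing-time translation---confirming that the $|a-b|$ and $m$ dependencies collapse cleanly to $|1-2\beta|$ and $n$ once the common factor $(1-\alpha-\beta)$ is treated as a constant and the logarithm is bounded---and double-checking that the degenerate parameter regimes ($\beta\in\{0,1\}$, or a population with no $\gtft$ nodes) are excluded by the standing assumptions on the $(\alpha,\beta)$ population, so that $a,b>0$ genuinely holds.
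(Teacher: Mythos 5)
Your proposal is correct and matches the paper's own argument: Section~\ref{sec:tech-overview:kigt-reduce} identifies $\{\z^t\}$ as a $(k,a,b,m)$-Ehrenfest process with exactly your parameters $a=(1-\alpha-\beta)(1-\beta)$, $b=(1-\alpha-\beta)\beta$, $m=(1-\alpha-\beta)n$, and then invokes Theorems~\ref{thm:ehrenfestkd:stationary} and~\ref{thm:ehrenfestkd:mixing}. Your bookkeeping (the cancellation of $\lambda=a/b=1/\beta-1$, the exact cancellation of the $(1-\alpha-\beta)$ factor between $1/|a-b|$ and $m$, and the explicit check of $a,b>0$, $a+b\le 1$) is, if anything, slightly more careful than the paper's, which leaves these substitutions implicit.
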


Observe that the mixing time of the process
speeds up in regimes where $\beta$ is bounded away
from half (i.e., the number of $\AD$ agents
is sufficiently small or sufficiently large). 
Similarly, the mean of the stationary
distribution $\E[\bfpi] = (\E[\pi_1], \dots, \E[\pi_k])$
grows increasingly \textit{less uniform} over $\calG$
in this regime of $\beta$. 
In particular, given that $\bfpi$ is a multinomial
distribution, it follows that 
$\E[\pi_j] = m \cdot p_j = \gamma n \cdot p_j$ 
for each $j \in [k]$. Thus for $\beta < 1/2$, we expect
the largest generosity parameter $g_k \in \calG$
to have the greatest adoption among $\gtft$ agents
after $\tmix$ many steps, and this mass
increases as $\beta$ grows smaller, and as the 
size $k$ of the parameter space $\calG$ increases.

\paragraph{Average stationary generosity}
Given the set of generosity values $\calG = \{g_1, \dots, g_k\}$
and any $\z = (z_1, \dots, z_k) \in \Delta^m_k$,
we define the \textit{average generosity value}
specified by $\z$ as 
$\frac{1}{m} \sum_{j\in [k]} g_j \cdot z_j$.
Then the stationary distribution
$\bfpi$ from Theorem~\ref{thm:kigt-main} allows us to
derive an \textit{average stationary generosity}
value $\wg$ for the $k$-IGT dynamics, which we define
as the average generosity value with respect to 
$\E[\bfpi]$.
We derive this value for all $k \ge 2$ in
the following proposition:

\begin{restatable}{proposition}{averagegen}
  \label{prop:avg-g}
  Fix $k \ge 2$, and let $\bfpi = (\pi_1, \dots, \pi_k)$
  denote the stationary distribution of the
  $k$-IGT dynamics from Theorem~\ref{thm:kigt-main}
  on an $(\alpha, \beta, \gamma)$ population with
  maximum generosity parameter $\hg$. 
  Let $\calG = \{g_1, \dots, g_k\}$ be the set of 
  generosity parameter values from Definition~\ref{def:kigt}.
  Let $\wg$ be the average stationary generosity of the dynamics,
  where $\widetilde g_k := \frac{1}{m} \sum_{j=1}^k g_j \cdot \E[\pi_j]$, 
  and let $\lambda := (1-\beta)/\beta$.
  Then 
  \begin{equation*}
    \wg \;=\;
    \hg \cdot 
    \bigg(
    \frac{\lambda^k}{\lambda^k - 1} -
    \Big(\frac{1}{k-1}\Big)
    \Big(\frac{\lambda}{\lambda-1}\Big)
    \Big(\frac{\lambda^{k-1}-1}{\lambda^k -1}\Big)
    \bigg)
    \;\;\text{for $\beta \neq 1/2$}\;,
  \end{equation*}
  and $\wg = \hatg / 2$ for $\beta = 1/2$. 
\end{restatable}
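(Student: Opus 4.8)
The plan is to reduce the computation to the ratio of two finite geometric-type sums and then to simplify using standard closed forms. First, since Theorem~\ref{thm:kigt-main} gives that $\bfpi$ is multinomial with parameters $m$ and $(p_1,\dots,p_k)$, the marginal means satisfy $\E[\pi_j] = m \cdot p_j$ for each $j \in [k]$. Substituting this into the definition $\wg_k = (1/m)\sum_{j \in [k]} g_j\,\E[\pi_j]$ cancels the factor of $m$ and yields $\wg_k = \sum_{j \in [k]} g_j\,p_j$. Next I would insert the explicit forms $g_j = \big(\tfrac{j-1}{k-1}\big)\hg$ from $\calG^k$ and $p_j = \lambda^{j-1}/\sum_{i=1}^k \lambda^{i-1}$ with $\lambda = (1-\beta)/\beta$, and reindex by $\ell = j-1$ to obtain the compact expression
\[
  \wg_k \;=\; \frac{\hg}{k-1}\cdot\frac{\sum_{\ell=0}^{k-1} \ell\,\lambda^\ell}{\sum_{\ell=0}^{k-1}\lambda^\ell}\,.
\]

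The remainder is to evaluate this ratio in the two regimes. For $\beta = 1/2$ we have $\lambda = 1$, so the denominator equals $k$ and the numerator equals $\sum_{\ell=0}^{k-1}\ell = \binom{k}{2} = \tfrac{(k-1)k}{2}$; the factors of $(k-1)$ and $k$ cancel against the prefactor, giving exactly $\wg_k = \hg/2$. For $\beta \neq 1/2$ (so $\lambda \neq 1$) I would evaluate the denominator via the geometric-series identity $\sum_{\ell=0}^{k-1}\lambda^\ell = \tfrac{\lambda^k-1}{\lambda-1}$, and the numerator via the derivative trick $\sum_{\ell=0}^{k-1}\ell\,\lambda^\ell = \lambda\,\tfrac{d}{d\lambda}\big(\tfrac{\lambda^k-1}{\lambda-1}\big)$. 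Forming the quotient, the $(\lambda-1)^2$ in the differentiated numerator cancels against one factor from the reciprocal of the denominator sum, leaving the value of the ratio as $\tfrac{k\lambda^k}{\lambda^k-1} - \tfrac{\lambda}{\lambda-1}$.

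Finally, I would reconcile this with the stated closed form. Multiplying by the prefactor $\hg/(k-1)$ gives $\wg_k = \tfrac{\hg}{k-1}\big(\tfrac{k\lambda^k}{\lambda^k-1}-\tfrac{\lambda}{\lambda-1}\big)$; to match the target I would rewrite $\tfrac{\lambda}{\lambda-1} = \big(\sum_{\ell=1}^{k}\lambda^\ell\big)/(\lambda^k-1)$ and split off the $\ell = k$ term, so that the numerator $k\lambda^k - \sum_{\ell=1}^k\lambda^\ell$ regroups as $(k-1)\lambda^k - \sum_{\ell=1}^{k-1}\lambda^\ell$. Using $\sum_{\ell=1}^{k-1}\lambda^\ell = \tfrac{\lambda(\lambda^{k-1}-1)}{\lambda-1}$, the first piece becomes $\hg\,\tfrac{\lambda^k}{\lambda^k-1}$ and the second becomes $-\hg\,\tfrac{1}{k-1}\,\tfrac{\lambda}{\lambda-1}\,\tfrac{\lambda^{k-1}-1}{\lambda^k-1}$, which is precisely the claimed expression.

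I expect the only real obstacle to be bookkeeping: the derivative evaluation of $\sum_\ell \ell\,\lambda^\ell$ and the subsequent regrouping must be carried out carefully to land on the exact target form rather than an equivalent-but-different-looking expression. No conceptual difficulty arises, however, since everything reduces to manipulating finite geometric sums and their derivatives, and the $\beta = 1/2$ case is obtained cleanly as the $\lambda \to 1$ limit (equivalently, by directly summing the uniform weights).
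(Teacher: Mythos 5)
Your proposal is correct and takes essentially the same route as the paper's proof: both use $\E[\pi_j] = m\,p_j$ to reduce $\wg_k$ to $\frac{\hg}{k-1}\cdot\big(\sum_{\ell=0}^{k-1}\ell\,\lambda^\ell\big)\big/\big(\sum_{\ell=0}^{k-1}\lambda^\ell\big)$, dispatch $\beta=1/2$ by direct summation of the uniform weights, and simplify the $\lambda\neq 1$ case to the stated closed form. The only difference is cosmetic — you evaluate $\sum_{\ell}\ell\,\lambda^\ell$ by differentiating the geometric series, whereas the paper rewrites $\ell\,\lambda^\ell$ as a double sum and swaps the order of summation — and your intermediate ratio $\frac{k\lambda^k}{\lambda^k-1}-\frac{\lambda}{\lambda-1}$ together with the final regrouping checks out algebraically against the target expression.
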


Roughly speaking, Proposition~\ref{prop:avg-g}
shows that for the $k$-IGT dynamics, 
$\wg \approx \hg \cdot \big(1 -\frac{\beta}{(1-2\beta)k} \big)$
when $\beta$ is bounded below $1/2$,
and $\wg \approx \hg \cdot \big(\frac{1-\beta}{(2\beta-1)k} \big)$
when $\beta$ is bounded above $1/2$.
Thus when the fraction of $\AD$ agents is sufficiently small,
the average stationary generosity approaches
the maximum generosity parameter $\widehat g$ at a rate of
$O(1/k)$, and it approaches $0$ at this same rate otherwise.
This again highlights the tradeoffs between
the size $k$ of the parameter space, and the resulting
levels of generosity induced by the dynamics. 
The proof of the proposition is given in 
Appendix~\ref{app:asg}.

\subsection{Convergence of $k$-IGT Dynamics to an
  $\eps$-Approximate Distributional Equilibrium}
\label{sec:tech-overview:convergence}

Theorem~\ref{thm:kigt-main} shows that 
the $k$-IGT converges to a stationary distribution 
$\bfpi$ and gives a bound on its mixing time. 
Our next main result shows that this stationary 
distribution corresponds to an $\eps$-approximate
distributional equilibrium. 
Specifically, we show under suitable regimes of the
RD game parameters that the (normalized) \textit{mean} of the
stationary distribution $\bfmu = \frac{1}{m} \cdot \E[\bfpi]$
of the $k$-IGT dynamics is an $\eps$-DE for $\eps = O(1/k)$.
Formally, we show the following:

\begin{restatable}{theorem}{kigteqconvergence}
  \label{thm:kigt-eq-converge}
  Let $\bfpi$ be the stationary distribution
  and $\bfmu := \frac{1}{m} \cdot \E[\bfpi]$ be the normalized
  \textit{mean} stationary distribution of the $k$-IGT dynamics
  for $\lambda := \frac{1-\beta}{\beta} \ge 2$. 
  Consider RD game settings with
  $s_1 \in [0, 1)$,
  $\frac{b}{c} > 1+ \frac{\beta c}{\gamma (1-s_1)}$
  and $\delta < \sqrt{1-\frac{\beta c}{\gamma(b-c)(1-s_1)}}$,
  and assume
  $\hatg < 1-\frac{1}{\delta}
  \Big(\frac{\beta c}{\gamma (b-c) (1-\delta) (1-s_1)} - 1\Big)$.
  Then $\bfmu$ is an $\eps$-approximate
  distributional equilibrium (Definition~\ref{def:de})
  for $\eps = O(1/k)$. 
\end{restatable}

Combined with Theorem~\ref{thm:kigt-main}, 
the result of Theorem~\ref{thm:kigt-eq-converge}
gives formal, quantitative answers to the main algorithmic questions
\textit{(Q1)} and \textit{(Q2)} stated in Section~\ref{sec:intro:de-rd}.
Specifically, for the setting where 
$\lambda = (1-\beta)/\beta \ge 2$, 
and under suitable game parameter regimes:
\begin{enumerate}[
  label=-,
  itemsep=0.5em,
  leftmargin=2em,
]
\item 
  Theorem~\ref{thm:kigt-eq-converge} gives a formal answer
  to \textit{(Q1)} by showing that the 
  mean stationary distribution $\bfmu$ of the $k$-IGT dynamics 
  is an $\eps$-approximate Distributional
  Equilibrium, for $\eps = O(1/k)$.
\item 
  Theorem~\ref{thm:kigt-main} gives a formal answer to
  \textit{(Q2)} by showing that the $k$-IGT dynamics
  converges to its stationary distribution 
  (and thus this $\eps$-DE) 
  within $O(k n \log n)$ total interactions. 
\end{enumerate}
Thus as $k$ increases, the $k$-IGT dynamics converges
to an approximate DE with a tighter approximation factor, but 
at the expense of (i) a linear increase in the 
convergence rate and (ii) a linear growth in the local memory 
required by each $\gtft$ agent (which is needed to store the set of 
$k$ generosity  parameter values $\calG = \{g_1, \dots, g_k\}$). 

Additionally, note in Theorem~\ref{thm:kigt-eq-converge}
that the constraints on the game setting
parameters require the reward
ratio $b/c$ be sufficiently large, and that the restart
probability $\delta$ and maximum generosity parameter
$\hatg$ both be bounded away from 1.
Moreover, the result is stated only for
$\lambda := (1-\beta)/\beta \ge 2$. We remark that
a similar statement to Theorem~\ref{thm:kigt-eq-converge}
can be shown for values of $\lambda$
less than 1 (e.g., $\lambda < 1/2$) that rely on
assuming an upper bound on the ratio $b/c$,
and a lower bound on $\delta$ and $\hatg$,
but for simplicity we focus on the case when
$(1-\beta)/\beta$ is greater than 1.\footnote{
  Observe that the result of 
  Theorem~\ref{thm:kigt-eq-converge}
  does not necessarily hold for $1/2 \le \lambda \le 2$. 
  For such $\lambda$ (multiplicatively) close to 1,
  the fraction $\beta$ of $\alld$ agents
  in the population is close to $1/2$. 
  Thus in this regime of $\lambda$ 
  (and depending on the game parameter settings),
  the mean of the stationary distribution of the
  corresponding Ehrenfest random walk 
  may be far from the 
  $g' \in \calG$ that maximizes
  $\E_{\S \sim \bfmu}[f(g', \S)]$, 
  and therefore the convergence to a DE may not hold. 
  In this sense, the result 
  of Theorem~\ref{thm:kigt-eq-converge} 
  requires enough ``signal'' from $\lambda$
  (i.e., needing this ratio to be bounded 
  away from 1)
  in order to hold.
}   

The full proof of Theorem~\ref{thm:kigt-eq-converge} is
developed in Appendix~\ref{app:eq}, but we provide
a high-level sketch of the proof here:

\paragraph{Proof Sketch of Theorem~\ref{thm:kigt-eq-converge}}
To prove that $\bfmu$ is an $\eps$-approximate DE,
we use the following high-level strategy:
first, recall from the Definition~\ref{def:de} that
$\bfmu$ must satisfy
\begin{equation}
  \E_{g \sim \bfmu, \;\S \sim \hatbfmu}
  \big[f(g, \S)\big]
  \;\ge\;
  \max_{g' \in \calG} \;
  \E_{\S \sim \hatbfmu}
  \big[f(g', \S)\big]
  - \eps
  \;,
  \label{eq:to-de:1}
\end{equation}
where $\hatbfmu \in \Delta(\calS)$ is
the distribution induced by $\bfmu \in \Delta(\calG)$
in the $(\alpha, \beta, \gamma)$ population.
Given that $\calG = \{g_1, \dots, g_k\}$
and $|\calG| = k$, 
the condition in expression~\eqref{eq:to-de:1}
can be equivalently written as
\begin{equation}
  \E_{g \sim \bfmu, \;\S \sim \hatbfmu}
  \big[f(g, \S)\big]
  \;\ge\;
  \max_{i \in [k]}\;
  \E_{\S \sim \hatbfmu}
  \big[f(g_i, \S)\big]
  - \eps
  \;.
  \label{eq:to-de:2}
\end{equation}
Rearranging terms, the distribution $\bfmu$ must satisfy
\begin{equation}
  \max_{i \in [k]}\;
  \E_{\S \sim \hatbfmu}
  \big[f(g_i, \S)\big]
  -
  \E_{g \sim \bfmu, \;\S \sim \hatbfmu}
  \big[f(g, \S)\big]
  \;\le\;
  \eps
  \;.
  \label{eq:to-de:3}
\end{equation}
For convenience, let $\Psi$ denote
the left hand side of expression~\eqref{eq:to-de:3},
meaning our goal in proving Theorem~\ref{thm:kigt-eq-converge}
is to show that $\Psi \le O(1/k)$. 
For this, we show 
in Proposition~\ref{prop:taylor-approx}
via a first-order Taylor approximation argument 
that, so long as the magnitude of the
second derivatives of $f(g, \S)$
(with respect to  $g$)
are uniformly bounded by a constant $L > 0$:
\begin{equation}
  \E_{g \sim \bfmu, \;\S \sim \hatbfmu} 
  \big[f(g, \S)\big]
  \;\ge\;
  \E_{\S \sim \hatbfmu} 
  \big[f(\tildeg, \S)\big]
  \;-\;
  L \cdot \Var_{g \sim \bfmu}[g] \;,
  \label{eq:to-de:4}
\end{equation}
where $\tildeg := \E_{g \sim \bfmu}[g]$
is the average stationary generosity value
from Proposition~\ref{prop:avg-g}.
Then it follows that we can further write
\begin{align}
  \Psi
  &\;:=\;
    \max_{i \in [k]}\;
    \E_{\S \sim \hatbfmu}
    \big[f(g_i, \S)\big]
    -
    \E_{g \sim \bfmu, \;\S \sim \hatbfmu}
    \big[f(g, \S)\big]
    \nonumber \\
  &\;\le\;
    \max_{i \in [k]}\;
    \E_{\S \sim \hatbfmu}
    \big[f(g_i, \S)\big]
    - 
    \E_{\S \sim \hatbfmu} 
    \big[f(\tildeg, \S)\big]
    + 
    L \cdot \Var_{g\sim \bfmu}[g]
  \nonumber \\
  &\;=\;
    \max_{i \in [k]}\;
    \E_{\S \sim \hatbfmu}
    \big[
    f(g_i, \S)
    -
    f(\tildeg, \S)
    \big]
    + 
    L \cdot \Var_{g\sim \bfmu}[g]
    \;.
    \label{eq:to-de:5}
\end{align}
Under the parameter assumptions in the statement of 
the theorem, we can then bound the two terms in 
expression~\eqref{eq:to-de:5} separately as follows:
\begin{align*}
  &\;
    \text{Propositions~\ref{prop:var-G} 
    and~\ref{prop:df2-bounded}}:
    \;\;\;
    L \cdot \Var_{g\sim \bfmu}[g] = O(1/k^2) \\
  &\;
    \text{Proposition~\ref{prop:gamma-bound}}:
    \;\;\;
    \max_{i \in [k]}\;
    \E_{\S \sim \hatbfmu}
    \big[
    f(g_i, \S)
    -
    f(\tildeg, \S)
    \big] = O(1/k) \;.
\end{align*}
The proof of the theorem then follows by combining these 
two results (i.e., into expression~\eqref{eq:to-de:5}),
and we develop the full proof in 
Appendix~\ref{app:eq}.

% CONCLUSION/DISCUSSION

\section{Discussion}
\label{sec:discussion}

In this work, we initiated the study of game dynamics
and equilibrium computation in the population protocol model.
We introduced a simple family of $k$-IGT dynamics for 
a class of repeated prisoner's dilemma games, and we quantified
the convergence of these dynamics to an approximate
distributional equilibrium. In particular,
by linking the $k$-IGT dynamics to a new class of 
high-dimensional Ehrenfest processes, we obtain
convergence results that highlight the time, space, and
approximation factor tradeoffs for computing such 
global equilibria in this setting. 
Our work also opens the door for several future directions:
first, at a broad level, it would be interesting to study
game dynamics in this population setting for other
classes of games (both from classical game theory
and evolutionary games). In particular, it remains open
to quantify the tradeoffs involved in computing a
distributional equilibrium (from Definition~\ref{def:de-gen}) 
in this setting more generally. 
At a more technical level, our work also leads to 
several interesting questions on $(k, a, b, m)$-Ehrenfest
processes that are discussed in Remark~\ref{remark:ehrenfest},
and these directions are left for future work.

% References
\bibliographystyle{alpha}
\bibliography{references.bib}

% Appendices
\newpage
\appendix
\section{Details on high-dimensional, weighted Ehrenfest processes}
\label{app:ehrenfest}

In this section, we provide more
details on the $(k, a, b, m)$-Ehrenfest processes
introduced in Section~\ref{sec:tech-overview:ehrenfest}.
In particular, in
Sections~\ref{app:ehrenfest:stationary-derive-k2}
and ~\ref{app:ehrenfest:stationary-derive-k3}
we derive the stationary distributions
for the process when $k=2$ and $k=3$, respectively.
We use the form of the PMFs of these distributions
to derive expressions for the PMFs
for larger (general) $k$, and
we verify their stationarity in the proof of
Theorem~\ref{thm:ehrenfestkd:stationary} in
Section~\ref{app:ehrenfest:stationary-verify}. 
In Section~\ref{app:ehrenfest:mixing}, we
develop the proof of Theorem~\ref{thm:ehrenfestkd:mixing},
which gives a bound on the mixing time of the process,
which is based on the coupling introduced in
Section~\ref{sec:tech-overview:ehrenfest}. 

For convenience, we restate the definition of the process here:

\ehrenfestkd*

\subsection{Deriving the stationary distribution for $k=2$}
\label{app:ehrenfest:stationary-derive-k2}

In this section we derive the PMF of the
stationary distribution of the $(2, a, b, m)$-Ehrenfest process,
for which we use the following, more convenient notation:

\paragraph{Setup and notation}
Let $\{\x^t\}$ denote the process, and recall
each $\x^t \in \Delta^m_k$. Let $\bfpi: \Delta^m_2 \to [0, 1]$
denote its stationary distribution.
Based on the transitions from Definition~\ref{def:ehrenfestkd},
it is easy to see that $\{\x^t\}$ is irreducible
and aperiodic, and thus $\bfpi$ is the unique
stationary distribution of the process.

In order to specify the PMF of $\bfpi$, we
restrict our view of the process $\{x^t\}$
to its first coordinate $i \in \calZ := \{0, 1, \dots, m\}$.
In particular, we define the function
$\pi: \calZ \to [0, 1]$ such that
$\pi(i) = \bfpi((i, m-i))$ for all $i \in \calZ$.
In other words, we project the set $\Delta^m_2$
onto the line $\calZ = \{0, \dots, m\}$. 
We can then define the transition matix
$P: \calZ \times \calZ \to [0, 1]$ over pairs of
points in $\calZ$ that is induced by
the matrix $\P$ for the process $\{\x^t\}$.
Thus for any $x, y \in \calZ$, the
entries of $P$ are given by:
\begin{equation}
  P\big(x, y\big)
  \;=\;
  \begin{aligned}
    \begin{cases}
      \;\text{if }
      y = [x+1]:
      &b \tfrac{m-x}{m} \\
      \;\text{if }
      y = [x-1]:
      &a \tfrac{x}{m} \\
      \;\text{if }
      y = x:
      &1 - b + (b-a)\tfrac{x}{m} \\
      \;\text{otherwise}:
      &0
    \end{cases}
  \end{aligned}
  \label{eq:P-k2-def}
  \;\;.
\end{equation}
Thus matrix $P$ specifies the
transition probabilities over the number line $\calZ$
for the \textit{first coordinate} of the process $\{\x^t\}$
(and this entirely specifies the process
given that each $\x^t \in \Delta^m_2$).

Then in the following proposition,
we derive the PMF of $\pi$ by solving the
system of equations arising from the detailed
balance equations. The PMF of $\bfpi$ is
then recovered immediately based on
the one-to-one relationship between $\pi$ and $\bfpi$.

\begin{proposition}
  \label{prop:ehrenfest2d:stationary}
  Fix $a, b > 0$ with $a + b \le 1$.
  Let $P: \calZ \times \calZ \to [0, 1]$
  denote the transition matrix induced
  by the $(2, a, b, m)$-Ehrenfest process
  (as defined in expression~\eqref{eq:P-k2-def})
  and let $\pi: \calZ \to [0, 1]$
  be the stationary distribution of $P$.
  Letting $\lambda := a/b$, we have for all $i \in \calZ$:
    \begin{equation*}
    \pi(i)
    \;=\;
    \frac{\lambda^{m-i}}{(1+\lambda)^m} \cdot \binom{m}{i} \;.
  \end{equation*}
\end{proposition}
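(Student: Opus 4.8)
The plan is to solve the one-dimensional birth-death chain defined by the transition matrix $P$ in~\eqref{eq:P-k2-def} using the detailed balance equations. Since the chain $\{\x^t\}$ is irreducible and aperiodic on the finite set $\calZ = \{0, 1, \dots, m\}$, it has a unique stationary distribution, and because $P$ is a nearest-neighbor (birth-death) chain on the line $\calZ$, it is automatically reversible. Hence it suffices to find a distribution $\pi$ satisfying the detailed balance equations $\pi(x) P(x, x+1) = \pi(x+1) P(x+1, x)$ for each $x \in \{0, \dots, m-1\}$; any such $\pi$ is guaranteed to be the stationary distribution.

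First I would write out the detailed balance equation between adjacent states $i$ and $i+1$. From~\eqref{eq:P-k2-def}, the ``up'' rate is $P(i, i+1) = b\,\tfrac{m-i}{m}$ and the ``down'' rate is $P(i+1, i) = a\,\tfrac{i+1}{m}$. The balance equation $\pi(i)\, b\,\tfrac{m-i}{m} = \pi(i+1)\, a\,\tfrac{i+1}{m}$ simplifies (the factors of $m$ cancel) to the recurrence
\begin{equation*}
  \frac{\pi(i+1)}{\pi(i)}
  \;=\;
  \frac{b}{a} \cdot \frac{m-i}{i+1}
  \;=\;
  \frac{1}{\lambda} \cdot \frac{m-i}{i+1}\;.
\end{equation*}
Telescoping this product from $0$ up to $i$ gives
$\pi(i) = \pi(0) \cdot \lambda^{-i} \cdot \binom{m}{i}$,
since $\prod_{\ell=0}^{i-1} \frac{m-\ell}{\ell+1} = \binom{m}{i}$.

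Next I would fix the normalization constant $\pi(0)$ by requiring $\sum_{i=0}^m \pi(i) = 1$. The binomial theorem gives $\sum_{i=0}^m \binom{m}{i} \lambda^{-i} = (1 + \lambda^{-1})^m = \tfrac{(1+\lambda)^m}{\lambda^m}$, so $\pi(0) = \tfrac{\lambda^m}{(1+\lambda)^m}$, and therefore
\begin{equation*}
  \pi(i)
  \;=\;
  \frac{\lambda^m}{(1+\lambda)^m} \cdot \lambda^{-i} \cdot \binom{m}{i}
  \;=\;
  \frac{\lambda^{m-i}}{(1+\lambda)^m} \cdot \binom{m}{i}\;,
\end{equation*}
which matches the claimed PMF exactly. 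The PMF of $\bfpi$ then follows immediately from the bijection $\pi(i) = \bfpi((i, m-i))$, confirming that $\bfpi$ is binomial with parameters $m$ and $p_1 = \lambda/(1+\lambda)$, consistent with the general form asserted in Theorem~\ref{thm:ehrenfestkd:stationary}.

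**The main obstacle here is minimal**, since the computation is a standard birth-death chain calculation; the only places to be careful are (i) verifying that the boundary cases $i = 0$ and $i = m$ are consistent with the recurrence (at the boundaries one of the transition probabilities vanishes, but the detailed balance argument still pins down the ratios correctly since reversibility only requires balancing the edges that exist), and (ii) correctly tracking the direction of the bias so that the exponent on $\lambda$ comes out as $m-i$ rather than $i$ — this hinges on the asymmetry between the $a$ (which multiplies the down-move $x \to x-1$, i.e.\ a ball leaving urn $1$) and $b$ (the up-move) weights in~\eqref{eq:P-k2-def}. I would double-check the exponent by evaluating at $i = m$, where $\pi(m) = \tfrac{1}{(1+\lambda)^m}$, against the recurrence.
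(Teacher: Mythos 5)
Your proposal is correct and follows essentially the same route as the paper's proof: both set up the detailed balance equations between adjacent states of the birth-death chain on $\calZ$, telescope the resulting ratio $\pi(i+1)/\pi(i) = \tfrac{1}{\lambda}\cdot\tfrac{m-i}{i+1}$ along the path to $0$ to obtain $\pi(i) = \pi(0)\,\lambda^{-i}\binom{m}{i}$, and then normalize via the binomial theorem to get $\pi(0) = \lambda^m/(1+\lambda)^m$. Your additional remarks (uniqueness from irreducibility, reversibility of nearest-neighbor chains, and the boundary sanity check at $i=m$) are accurate and only make explicit what the paper leaves implicit.
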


\begin{proof}
  We solve for the $\pi$ based on the
  system of equations stemming from the detailed
  balance equations
  \begin{equation*}
    \pi(x) \cdot P(x, y)
    \;=\;
    \pi(y) \cdot P(y, x)
    \;,
  \end{equation*}
  which must hold for all $x, y \in \calZ$.
  Then for any $i\in \calZ$, we can recursively
  substitute the equations along the path
  $i, i-1, \dots, 1,  0$ to find
  \begin{equation*}
    \pi(i)
    \;=\;
    \pi(0) \cdot 
    \frac{
      P(0, 1) \cdot P(1, 2) \cdots P(i-1, i)
    }{
      P(i, i-1)\cdot P(i-1, i-2) \cdots P(1, 0)
    }
    \;.
  \end{equation*}
  It then follows from the entries of $P$
  defined in expression~\eqref{eq:P-k2-def} that
  we can write
  \begin{equation}
    \pi(i)
    \;=\;
    \pi(0)
    \cdot
    \frac{
      bm \cdot b(m-1) \cdots b(m-i+1)
    }{
      ai \cdot a(i-1) \cdots a
    }
    \;=\;
    \pi(0)
    \cdot
    \Big(\frac{b}{a}\Big)^i \cdot \frac{m!}{(m-i)! \cdot i!}
    \;.
    \label{eq:k2-pi-i}
  \end{equation}
  Given that $\sum_{i \in \calZ} \pi(i) = 1$
  (as $\pi$ is a distribution),
  and recalling that $\lambda = a/b$, we then have
  \begin{equation*}
    1 \;=\;
    \sum_{i=0}^m
    \pi(0)
    \cdot
    \Big(\frac{1}{\lambda}\Big)^i
    \cdot
    \frac{m!}{(m-i)! \cdot i!}
    \;=\;
    \pi(0)
    \cdot
    \Big(
    1 + \frac{1}{\lambda}
    \Big)^m
    \;=\;
    \pi(0)
    \cdot
    \Big(\frac{1+\lambda}{\lambda}\Big)^m
    \;,
  \end{equation*}
  where the second equality is due to the binomial theorem.
  It follows that
  $\pi (0) =  \lambda^m / (1 + \lambda)^m$,
  and substituting this into
  expression~\eqref{eq:k2-pi-i} for each
  $i \in \calZ$ yields the statement of the proposition.
\end{proof}

\begin{remark}
  Given the relationship between $\pi$ and $\bfpi$,
  it follows that for any $\x = (x_1, x_2) \in \Delta^m_k$:
  \begin{equation*}
    \bfpi(\x)
    \;=\;
    \frac{\lambda^{x_2}}{(1+\lambda)^m}
    \cdot \binom{m}{x_1}
    \;.
  \end{equation*}
  In other words, the stationary distribution
  of the $(2, a, b, m)$-Ehrenfest process is
  a \textit{binomial} distribution with parameters
  $m$ and $p = 1/(1 + \lambda)$.
\end{remark}

\subsection{Deriving the stationary distribution for $k=3$}
\label{app:ehrenfest:stationary-derive-k3}

We now derive the stationary distribution
for the $(3, a, b, m)$-Ehrenfest process.
Our approach is similar in spirit to
the $k=2$ case, but we require a more careful
technique in order to solve for the stationary PMF
in this higher-dimensional regime.
For this, we begin by introducing a more
convenient notation (similar to the notation
used for the $k=2$ case). 

\paragraph{Setup and notation}
Let $\{\x^t\}$ denote the $(3, a, b, m)$-Ehrenfest process,
and recall that its stationary distribution $\bfpi$ is defined
over the space $\Delta^m_3$.
It is again easy to see that the Markov chain $\{\x^t\}$ is irreducible,
and thus $\bfpi$ is its unique stationary distribution. 
Similar to the $k=2$, when deriving the PMF for $\bfpi$,
it is more convenient to work over the (equivalent) set
of points that are defined explicitly in a $(k-1)$-dimensional space. 
\begin{equation*}
  \calX
  \;:=\;
  \Big\{
  \tp{i}{j} \;:\;  0 \le i, j \le m \text{ and } i+j \le m
  \Big\} \;,
\end{equation*}
where we use the one-to-one mapping
$\tp{i}{j}\mapsto (i, j, m-i-j) \in \Delta^m_3$
for all $\tp{i}{j} \in \calX$.
Roughly speaking, we can view $\calX$
as simply the set $\Delta^m_3$ embedded onto
the two-dimensional plane. 

Then we define $\pi : \calX \to [0, 1]$
as the PMF over $\calX$ such that
for all $(i, j, k) \in \Delta^m_3$,
we have $\pi\tp{i}{j} = \bfpi((i, j, k))$. 
Moreover, we let $P: \calX \times \calX \to [0, 1]$ denote
the probability transition function over pairs of
points in $\calX$ that is induced by the transition matrix
$\P: \Delta^m_3 \times \Delta^m_3 \to [0, 1]$ of
the process $\{\x^t\}$. Then for any $(x_1, x_2), (y_1, y_2) \in \calX$,
the entries of $P$ can be summarized as
\begin{equation}
  P\Big(
  \tp{x_1}{x_2},
  \tp{y_1}{y_2}
  \Big)
  \;=\;
  \begin{aligned}
    \begin{cases}
      \;\text{if }
      \tp{y_1}{y_2} = \tp{[x_1-1]}{[x_2 + 1]}: 
      &a\tfrac{x_1}{m} % diag L to R
      \vspace*{0.2em} \\
      \;\text{if }
      \tp{y_1}{y_2} = \tp{[x_1 + 1]}{[x_2 - 1]}:
      &b\tfrac{x_2}{m}% diag R to L 
      \vspace*{0.2em} \\
      \;\text{if }
      \tp{y_1}{y_2} = \tp{x_1}{[x_2 - 1]}:
      &a\tfrac{x_2}{m}% vertical T to B
      \vspace*{0.2em} \\
      \;\text{if }
      \tp{y_1}{y_2} = \tp{x_1}{[x_2 + 1]}:
      &b\big(1- \tfrac{x_1 + x_2}{m}\big)% vertical B to T
      \vspace*{0.2em} \\
      \;\text{if }
      \tp{y_1}{y_2} = \tp{x_1}{x_2}:
      &1- b\big(1- \tfrac{x_1}{m}\big)
      - a \tfrac{x_1 + x_2}{m}% self loop
      \vspace*{0.2em} \\
      \;\text{otherwise}:
      &0 
    \end{cases}
  \end{aligned}
  \;\;\;\;.
  \label{eq:P-k3-def}
\end{equation}
% and $P(x, y) = 0$ for all other $x, y \in \calX$.
For concreteness, Figure~\ref{fig:k3-ehrenfest}
shows an example of the space $\calX$
and the transitions with non-zero probability under $P$
when $k=3$ and $m=3$. 
We then prove the following proposition
specifying the PMF of $\pi$:

\begin{figure}[htb!]
  \centering
  \includegraphics[width=0.3\textwidth]{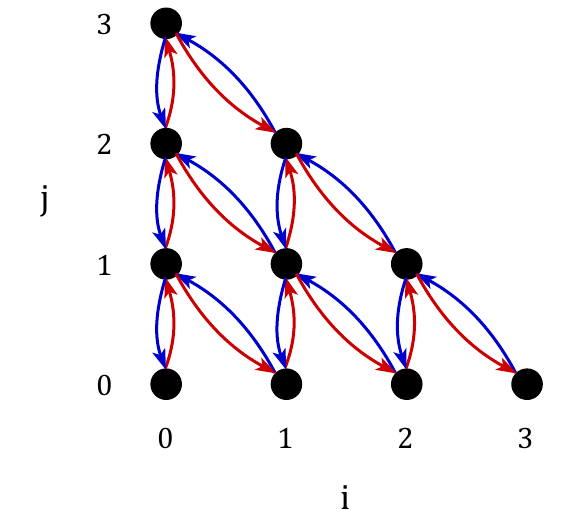}
  \caption{\small
    Example when $k=3$ and $m=3$ of
    the space $\calX$ and the set of
    non-zero transitions specified by $P$ 
    (denoted by a directed edge between two nodes).
    Transitions colored blue have coefficient $a$,
    and transitions colored red have coefficient $b$.
    Considering the outward transition edges 
    from a node $\tp{i}{j} \in \calX$, 
    downward vertical transitions are weighted by $j/n$,
    upward vertical transitions are weighted by $(1-(i+j)/n)$,
    downward diagonal transitions are weighted by
    $j/n$, and upward diagonal transitions 
    are weighted by $i/n$. 
    }
  \label{fig:k3-ehrenfest}
\end{figure}

\begin{proposition}
  \label{prop:ehrenfest3d:stationary}
  Fix $a, b > 0$ with $a + b \le 1$,
  let $P: \calX \times \calX \to [0, 1]$ denote
  the transition matrix induced by the
  $(3, a, b, m)$-Ehrenfest process (as defined
  in expression~\eqref{eq:P-k3-def}),
  and let $\pi: \calX \to [0, 1]$ be stationary distribution
  of $P$. Then letting $\lambda := a/b$, we have
  \begin{equation*}
    \pi\tp{i}{j}
    \;=\;
    \frac{\lambda^{2(m-i-j)}\lambda^j}{(1+ \lambda + \lambda^2)^m}
    \cdot
    \binom{m}{i, j, (m-i-j)}
  \end{equation*}
  for all $x = \tp{i}{j} \in\calX$. 
\end{proposition}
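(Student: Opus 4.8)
The plan is to exploit reversibility. Since the $(3,a,b,m)$-Ehrenfest process is finite and irreducible, it has a unique stationary distribution, and (as recalled in the preliminaries) any $\pi$ satisfying the detailed balance equations $\pi(\x)P(\x,\y)=\pi(\y)P(\y,\x)$ is automatically that stationary distribution. Crucially, every non-zero transition in expression~\eqref{eq:P-k3-def} is paired with a non-zero reverse transition, and these split into two reversible families: the \emph{diagonal} moves $\tp{i}{j}\leftrightarrow\tp{i-1}{j+1}$ (a ball passing between urns $1$ and $2$) and the \emph{vertical} moves $\tp{i}{j}\leftrightarrow\tp{i}{j-1}$ (a ball passing between urns $2$ and $3$). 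Writing the balance equation on each family, and using $\lambda=a/b$ together with the common factor $1/m$ that cancels throughout, yields the two ratios
\begin{equation*}
  \frac{\pi\tp{i-1}{j+1}}{\pi\tp{i}{j}} \;=\; \lambda\cdot\frac{i}{j+1}
  \qquad\text{and}\qquad
  \frac{\pi\tp{i}{j-1}}{\pi\tp{i}{j}} \;=\; \lambda\cdot\frac{j}{m-i-j+1}\,.
\end{equation*}

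First I would telescope these recurrences along a convenient path from the base point $\tp{0}{0}$ (all $m$ balls in urn $3$): the vertical relation builds up the $x_2$-coordinate by pushing balls from urn $3$ into urn $2$, and the diagonal relation then redistributes balls between urns $1$ and $2$. The product of ratios collects one factor of $\lambda^2$ for each ball remaining in urn $3$, one factor of $\lambda$ for each ball in urn $2$, and a multinomial coefficient assembled from the telescoped factorials. This motivates the Ansatz $\pi\tp{i}{j}\propto \lambda^{2(m-i-j)+j}\binom{m}{i,j,m-i-j}$, which assigns weight $\lambda^{r-1}$ to each ball in urn $r$ and specializes to the $k=2$ form of Proposition~\ref{prop:ehrenfest2d:stationary}. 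Rather than argue directly that the telescoped value is independent of the path chosen, I would then substitute this Ansatz back into both balance families and verify the two identities: the coefficient ratios $\binom{m}{i-1,j+1,\,\cdot}/\binom{m}{i,j,\,\cdot}=i/(j+1)$ and $\binom{m}{i,j-1,\,\cdot}/\binom{m}{i,j,\,\cdot}=j/(m-i-j+1)$ combine with the $\lambda$-powers and the transition coefficients to reduce each equation to the single identity $\lambda b=a$.

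The final step is normalization: summing $\lambda^{2(m-i-j)+j}\binom{m}{i,j,m-i-j}$ over all $\tp{i}{j}\in\calX$ is exactly the multinomial expansion of $(1+\lambda+\lambda^2)^m$, obtained by assigning weights $1,\lambda,\lambda^2$ to the three urns. This pins the normalizing constant to $1/(1+\lambda+\lambda^2)^m$ and yields the stated PMF; uniqueness of the stationary distribution then completes the proof.

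The main obstacle is that, unlike the $k=2$ case where $\calZ$ is a line admitting a single path between any two states, the space $\calX$ is a genuinely two-dimensional lattice containing cycles (rhombi formed by the four move directions $\pm(1,-1)$ and $\pm(0,1)$). A naive path-telescoping derivation is only legitimate if the detailed balance system is simultaneously solvable, i.e., if Kolmogorov's cycle criterion holds around each elementary rhombus. The cleanest way to avoid verifying this separately is precisely the guess-and-verify route above: directly substituting the multinomial Ansatz into both balance families establishes stationarity and sidesteps any path-dependence concern in one stroke.
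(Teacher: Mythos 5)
Your proposal is correct, and it takes a genuinely different route from the paper's own proof of this proposition. The paper's argument is \emph{constructive}: it telescopes the detailed balance equations along canonical paths in $\calX$ --- first along vertical edges to express each $\pi\tp{0}{j}$ in terms of $\pi\tp{0}{m}$, then along diagonal edges to express $\pi\tp{i}{j}$ in terms of $\pi\tp{0}{i+j}$ --- and then normalizes via the multinomial theorem to pin down $\pi\tp{0}{m} = \lambda^m/(1+\lambda+\lambda^2)^m$. Notably, the paper's $k=3$ derivation does \emph{not} address the issue you correctly identify: on the two-dimensional lattice $\calX$, solving the detailed balance system by recursive substitution along chosen paths is consistent only if the system is simultaneously solvable (Kolmogorov's cycle criterion around the elementary rhombi), and the paper leaves this implicit at this stage --- the gap is only closed later by the general-$k$ argument of Theorem~\ref{thm:ehrenfestkd:stationary}, which verifies a multinomial Ansatz directly against the detailed balance equations. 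Your guess-and-verify route, checking the two ratio identities $\pi\tp{i-1}{j+1}/\pi\tp{i}{j}=\lambda\, i/(j+1)$ and $\pi\tp{i}{j-1}/\pi\tp{i}{j}=\lambda\, j/(m-i-j+1)$ against the Ansatz $\pi\tp{i}{j}\propto\lambda^{2(m-i-j)+j}\binom{m}{i,j,m-i-j}$ (both checks are straightforward and correct, as is your normalization by the multinomial expansion of $(1+\lambda+\lambda^2)^m$), is in effect the paper's general-$k$ strategy specialized to $k=3$, with the telescoping demoted to mere motivation. What your approach buys is a fully self-contained and rigorous argument at $k=3$: establishing reversibility by direct substitution sidesteps any path-dependence concern, since irreducibility plus the detailed-balance-implies-stationarity fact from the preliminaries then gives uniqueness. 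What the paper's version buys instead is a derivation that \emph{discovers} the PMF rather than postulating it, which is precisely how the authors generate the Ansatz they verify for general $k$. (Your choice of base point $\tp{0}{0}$ versus the paper's $\tp{0}{m}$ is immaterial, since you abandon the telescoped value in favor of verification anyway.)
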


\begin{proof}
  It suffices to solve for $\pi$ from the system of equations
  arising from the detailed balance equations:
  \begin{equation*}
    \pi\tp{x_1}{y_1} \cdot 
    \Pp{\tp{x_1}{y_1}}{\tp{x_2}{y_2}}
    \;=\;
    \pi\tp{x_2}{y_2} \cdot 
    \Pp{\tp{x_2}{y_2}}{\tp{x_1}{y_1}} \;,
  \end{equation*}
  which must hold for all pairs $(x_1,y_1), (x_2, y_2) \in \calX$. 
  Using the recursive structure of the transition
  probabilities $P$ defined in expression~\eqref{eq:P-k3-def},
  we can express each $\pi\tp{i}{j}$ in terms
  of $\pi\tp{0}{m}$ for all $\tp{i}{j} \in \calX$
  using the following formulations:
  \begin{enumerate}[
    label={(\alph*)},
    leftmargin=2em,
    rightmargin=2em,
    itemsep=0pt,
    ]
  \item
    For any $j \in \{0, \dots, m\}$:
    \begin{equation*}
      \pi\tp{0}{j}
      \;=\;
      \pi\tp{0}{m} \cdot
      \frac{
        \Pp{\tp{0}{m}}{\tp{0}{m-1}}
        \cdot
        \Pp{\tp{0}{m-1}}{\tp{0}{m-2}}
        \cdots
        \Pp{\tp{0}{j+1}}{\tp{0}{j}}
      }{
        \Pp{\tp{0}{j}}{\tp{0}{j+1}}
        \cdot
        \Pp{\tp{0}{j+1}}{\tp{0}{j+2}}
        \cdots
        \Pp{\tp{0}{m-1}}{\tp{0}{m}}
      }
      \;\;.
    \end{equation*}
    \item
      For any $i \in \{1, \dots, m\}$ and $j \in \{0, \dots, m\}$:
      \begin{equation*}
        \pi\tp{i}{j}
        \;=\;
        \pi\tp{0}{j+i} \cdot
        \frac{
          \Pp{\tp{0}{j+i}}{\tp{1}{j+i-1}}
          \cdot
          \Pp{\tp{1}{j+i-1}}{\tp{2}{j+i-2}}
          \cdots
          \Pp{\tp{i-1}{j+1}}{\tp{i}{j}}
        }{
          \Pp{\tp{i}{j}}{\tp{i-1}{j+1}}
          \cdot
          \Pp{\tp{i-1}{j+1}}{\tp{i-2}{j+2}}
          \cdots
          \Pp{\tp{1}{j+1-i}}{\tp{0}{j+i}}
        }
        \;\;,
      \end{equation*}
      from which it follows that $\pi\tp{i}{j}$
      can be expressed in terms of $\pi\tp{0}{m}$
      using the expression for $\pi\tp{0}{i+j}$ from (a).
    \end{enumerate}
    Note that by viewing the transitions specified by $P$
    as directed edges over the vertex set $\calX$
    (i.e., as in the example of Figure~\ref{fig:k3-ehrenfest})
    the formulation in (a) follows by
    recursively substituting the expressions for
    $\pi\tp{0}{j}$ from the detailed balance equation along
    the path from $\tp{0}{j}$ to $\tp{0}{m}$ using
    only \textit{vertical} edges. 
    Similarly, the formulation in (b) follows by considering
    such paths from $\pi\tp{i}{j}$ to $\pi\tp{0}{i+j}$
    using only \textit{diagonal} edges.
    Then using the entries of $P$ defined in \eqref{eq:P-k3-def},
    we find for part (a) that, for any $j \in \{0, \dots, m\}$:
    \begin{equation}
      \pi\tp{0}{j}
      \;=\;
      \pi\tp{0}{m} \cdot
      \frac{
        am 
        \cdot
        a(m-1)
        \cdots
        a(j+1)
      }{
        b(m-j)
        \cdot
        b(m-(j+1))
        \cdots
        b
      }
      \;=\;
      \pi\tp{0}{m} \cdot
      \Big(\frac{a}{b}\Big)^{m-j}
      \cdot
      \frac{m!}{j! \cdot (m-j)!}
      \;\;.
      \label{eq:k3:vertical}
    \end{equation}
    Similarly, for part (b), we find for any $i \in \{1, \dots, m\}$
    and $j \in \{0, \dots, m\}$:
    \begin{equation}
      \pi\tp{i}{j}
      \;=\;
      \pi\tp{0}{j+i} \cdot
      \frac{
        b(j+i)
        \cdot
        b(j+i-1)
        \cdots
        b(j+1)
      }{
        ai
        \cdot
        a(i-1)
        \cdots
        a
      }
      \;=\;
      \pi\tp{0}{j+i} \cdot
      \Big(\frac{b}{a}\Big)^{i}
      \cdot 
      \frac{(j+i)!}{j! \cdot i!}
      \;\;.
      \label{eq:k3:diagonal:pre}
    \end{equation}
    Then substituting the expression for $\pi\tp{0}{j+i}$
    from~\eqref{eq:k3:vertical}, we can further simplify and write
    \begin{align}
      \pi\tp{i}{j}
      &\;=\;
        \pi\tp{0}{m} \cdot
        \Big(\frac{a}{b}\Big)^{m-(i+j)}
        \cdot
        \frac{m!}{(j+i)! \cdot (m-(i+j))!}
        \cdot
        \Big(\frac{b}{a}\Big)^{i}
        \cdot 
        \frac{(j+i)!}{j! \cdot i!}
        \nonumber \\
      &\;=\;
        \pi\tp{0}{m} \cdot
        \Big(\frac{a}{b}\Big)^{m-2i-j}
        \cdot
        \frac{m!}{(m-i-j)! \cdot i! \cdot j!}
        \;.
      \label{eq:k3:diagonal}
    \end{align}   
  Now using the fact that $\pi$ is a distribution
  and thus $\sum_{x \in \calX} \pi(x) = 1$,
  and recalling that $\lambda = (a/b)$, we can
  use the expressions~\eqref{eq:k3:vertical}
  and~\eqref{eq:k3:diagonal} to write:
  \begin{align}
    1
    &\;=\;
      \pi\tp{0}{m}
      + 
      \sum_{j=0}^{m-1}
      \bigg(
      \pi\tp{0}{j}
      +
      \sum_{i=1:\; i+j \le m}^m
      \pi\tp{i}{j}
      \bigg)
      \nonumber \\
    &\;=\;
      \pi\tp{0}{m}
      + 
      \sum_{j=0}^{m-1}
      \bigg(
      \pi\tp{0}{m}
      \cdot
      \lambda^{m-j}
      \cdot
      \tfrac{m!}{j!\cdot (m-j)!}
      +
      \sum_{i=1:\; i+j \le m}^m
      \;
      \pi\tp{0}{m}
      \cdot
      \lambda^{m-2i-j}
      \cdot
      \tfrac{m!}{(m-i-j)!\cdot i! \cdot j!}
      \bigg)
      \nonumber \\
    &\;=\;
      \sum_{(i,j) \in \calX}
      \pi\tp{0}{m}
      \cdot
      \lambda^{m-2i-j}
      \cdot
      \tfrac{m!}{(m-i-j)! \cdot i! \cdot j!}
      \;.
      \label{eq:k3:solve:1}
  \end{align}
  Now observe by the multinomial theorem that we can write
  \begin{equation*}
    \sum_{(i,j) \in \calX}
    \;
    \lambda^{m-2i-j}
    \cdot
    \frac{m!}{(m-i-j)! \cdot i! \cdot j!}
    \;=\;
    \bigg(
    \frac{1 + \lambda + \lambda^2}{\lambda}
    \bigg)^m
    \;.
  \end{equation*}
  Then it follows from expression~\eqref{eq:k3:solve:1} that
  \begin{equation*}
    \pi\tp{0}{m}
    \;=\;
    \frac{\lambda^m}{(1+ \lambda + \lambda^2)^m} \;.
  \end{equation*}
  Finally, substituting this expression for $\pi\tp{0}{m}$
  back into equations~\eqref{eq:k3:vertical}
  and~\eqref{eq:k3:diagonal} then
  specifies the mass $\pi\tp{i}{j}$ for general $\tp{i}{j} \in \calX$,
  which concludes the proof. 
\end{proof}

\begin{remark}
  Using the relationship between $\pi$ and $\bfpi$ mentioned
  earlier, it follows immediately from
  Proposition~\ref{prop:ehrenfest3d:stationary}
  that for the $(3, a, b, m)$-Ehrenfest process, 
  the PMF of $\bfpi$ is specified by
  \begin{equation*}
    \bfpi(\x)
    \;=\;
    \frac{\lambda^{2x_3 + x_2}}{(1 + \lambda + \lambda^2)^m}
    \cdot
    \binom{m}{x_1, x_2, x_3}
  \end{equation*}
  for all $\x = (x_1, x_2, x_3) \in \Delta^m_k$,
  where $\lambda = (a/b)$.
  In other words, $\bfpi$ is a
  multinomial distribution with parameters $m$ and
  $(p_1, p_2, p_3)$, where 
  \begin{equation*}
    p_1 := \frac{1}{1+ \lambda + \lambda^2}\;,
    \;\;\;\;
    p_2 := \frac{\lambda}{1+ \lambda + \lambda^2}\;,
    \;\;\;\;\text{and}\;\;\;
    p_3 := \frac{\lambda^2}{1+ \lambda + \lambda^2}
    \;.
  \end{equation*}
\end{remark}

\subsection{Verifying the stationary distribution for general $k$}
\label{app:ehrenfest:stationary-verify}

We use the stationary PMFs for the $(2, a, b, m)$ and
$(3, a, b, m)$-Ehrenfest processes found in
Sections~\ref{app:ehrenfest:stationary-derive-k2}
and~\ref{app:ehrenfest:stationary-derive-k3} as
an Ansatz for specifying the PMF of the stationary
distribution for general $k$. We can then prove
stationarity by verifying this PMF satisfies the
the detailed balance equations.
In particular, we show that the PMF of $\bfpi$
for the $(k, a, b, m)$-Ehrenfest process is of the form:
\begin{equation*}
  \bfpi(\x)
  \;=\;
  \frac{\lambda^{(k-1)x_k + (k-2)x_{k-1}+ \dots + x_2}}{
    (1 + \lambda + \lambda^2 + \dots + \lambda^{k-1})^m
  }
  \cdot
  \binom{m}{x_1, x_2, \dots, x_k}
\end{equation*}
for all $\x = (x_1, \dots, x_k) \in \Delta^k_m$,
and where $\lambda = a/b$. 

This culminates in Theorem~\ref{thm:ehrenfestkd:stationary}
(introduced in Section~\ref{sec:tech-overview:ehrenfest}),
which we restate here for convenience:

\ehrenfestkdstationary*

\begin{proof}
  We verify that $\bfpi$ satisfies the detailed balance equations
  \begin{equation}
    \pi(\x) \cdot P(\x, \y)
    \;=\;
    \pi(\y) \cdot P(\y, \x)
    \label{eq:dbe}
  \end{equation}
  for all $\x, \y \in \Delta^m_k$.
  Given the description of the non-zero transition probabilities
  of $\P$ from Definition~\ref{def:ehrenfestkd}
  (in particular, defined in terms of the
  variables $p_\x^{j, j+1}$ and $p_\x^{j+1, j}$
  for any $\x \in \Delta^m_k$ and $j \in [k-1]$),
  it suffices by symmetry to
  verify expression~\eqref{eq:dbe} only for transitions between
  pairs of states
  \begin{equation*}
  (x_1, \dots, x_j, x_{j+1}, \dots, x_k)
  \;\;\;\text{and}\;\;\;
  (x_1, \dots, x_j - 1, x_{j+1} + 1, \dots, x_k) \\
  \end{equation*}
  for $j = 0, \dots, k-1$. In other words,
  letting $\x := (x_1, \dots, x_k)$, we wish to verify that
  \begin{equation}
    \bfpi\big(  (x_1, \dots, x_j, x_{j+1}, \dots, x_k) \big)
    \cdot
    p_\x^{j,j+1}
    \;=\;
    \bfpi\big( (x_1, \dots, x_j - 1, x_{j+1} + 1, \dots, x_k)\big)
    \cdot p_\x^{j+1, j}\;.
    \label{eq:dbe-j:1}
  \end{equation}
  For this, first recall in the expression of $\pi$ from
  the theorem statement that the multinomial coefficient is given by
  \begin{equation*}
    \binom{m}{x_1, \dots, x_k}
    \;=\;
    \frac{m!}{x_1 ! \cdots x_k!} \;.
  \end{equation*}
  Thus we can cancel out all matching terms to show that 
  verifying expression~\eqref{eq:dbe-j:1} reduces to verifying
  \begin{equation}
    \frac{\lambda^{(j\cdot x_{j+1} + (j-1)\cdot x_j)}}{
      x_j ! \cdot x_{j+1} !
    }
    \cdot
    p_\x^{j,j+1}
    \;=\;
    \frac{\lambda^{(j\cdot(x_{j+1} + 1) + (j-1) \cdot (x_j - 1))}}{
      (x_{j+1} + 1)! \cdot  (x_j - 1)!
    }
    \cdot
    p_\x^{j+1, j}\;.
    \label{eq:dbe-j:2}
  \end{equation}
  For the left-hand side of~\eqref{eq:dbe-j:2}, we use the
  definition of $p_\x^{j, j+1}$ and the fact that $\lambda = a/b$
  to simplify and write
  \begin{equation*}
    (\text{LHS  of~\eqref{eq:dbe-j:2}})
    \;=\;
      \frac{\lambda^{(j\cdot x_{j+1} + (j-1)\cdot x_j)}}{
      x_j ! \cdot  x_{j+1} !
      }
      \cdot \frac{a \cdot x_j}{m}
      \;=\;
      \frac{a^{(j \cdot x_{j+1} + (j-1)\cdot x_j + 1)}}{
      b^{(j\cdot x_{j+1} + (j-1)\cdot x_j)}}
      \cdot
      \frac{1}{m \cdot (x_j - 1)!  \cdot (x_{j+1})!} \;.
  \end{equation*}
  Here, the final inequality uses the fact that
  $x_j  / (x_j! \cdot x_{j+1}!)  =  1 / ((x_j - 1)!\cdot x_{j+1}!)$.

  For the right-hand side of~\eqref{eq:dbe-j:2}, we can
  similarly simplify and write
  \begin{align*}
    (\text{RHS  of~\eqref{eq:dbe-j:2}})
    &\;=\;
      \frac{\lambda^{(j\cdot(x_{j+1} + 1) + (j-1) \cdot (x_j - 1))}}{
      (x_{j+1} + 1)! \cdot  (x_j - 1)!
      }
      \cdot
      \frac{b \cdot (x_{j+1} + 1)}{m} \\
    &\;=\;
      \frac{\lambda^{(j\cdot x_{j+1} + (j-1) \cdot x_j +1)}}{
      x_{j+1}! \cdot (x_{j} - 1)!}
      \cdot
      \frac{b}{m} 
    \;=\;
      \frac{a^{(j\cdot x_{j+1} +(j-1)\cdot x_j + 1)}}{
      b^{(j \cdot x_{j+1} + (j-1)\cdot x_j)}}
      \cdot
      \frac{1}{m\cdot  x_{j+1}! \cdot (x_{j} - 1)!}\;.
  \end{align*}
  Observing that the left and right-hand sides of
  expression~\eqref{eq:dbe-j:2} for any $j = 0, \dots, k-1$
  thus establishes that $\pi$ is the unique
  stationary distribution for the process.

  Now for each $j \in [k]$ we can define
  \begin{equation*}
    p_j
    \;:=\;
    \frac{\lambda^{(j-1)}}{\sum_{j=1}^k \lambda^{(j-1)}} \;.
  \end{equation*}
  Clearly $\sum_{i \in [k]} p_i = 1$, and we can rewrite
  the PMF of $\bfpi$ as
  \begin{equation*}
    \bfpi(\x) \;=\;
    p_1^{x_1} \cdots p_k^{x_k} \cdot \binom{m}{x_1, \dots, x_k} \;,
  \end{equation*}
  for any $\x = (x_1, \dots, x_k) \in \Delta_k^m$.
  In other words, $\bfpi$ is a \textit{multinomial distribution}
  with parameters $m$ and $(p_1, \dots, p_k)$. 
\end{proof}

\subsection{Bounds on mixing times}
\label{app:ehrenfest:mixing}

In this section, we develop the proof of
Theorem~\ref{thm:ehrenfestkd:mixing},
which bounds the mixing time of the
$(k, a, b, m)$-Ehrenfest process.
The theorem is restated for convenience:

\ehrenfestkdmixing*

In the following subsections, 
we develop the proofs of the upper and lower bounds
on the mixing time separately.

\subsubsection{Upper bounds on mixing time}
\label{app:ehrenfest:mixing:upper}

\paragraph{Coupling setup}
To prove the upper bound on $\tmix$, recall from the
overview in Section~\ref{sec:tech-overview:ehrenfest}
that we introduce a coupling $\{(X_t, Y_t)\}$
over the space
$\Omega \times \Omega = \{1, \dots, k\}^m \times \{1, \dots, k\}^m$,
where the transition probabilities over
the vectors of \textit{counts} of each element
$j \in [k]$ for each of $\{X_t\}$ and $\{Y_t\}$
is a  $(k, a, b, m)$-Ehrenfest process.
We more formally specify the details
of this coupling:
\begin{enumerate}[
  label=-,
  leftmargin=2em,
  rightmargin=3em,
  itemsep=0pt]
\item
  At time $t=0$, assume
  $X_0 = x$ and $Y_0 = y$, for some $x, y \in \Omega$.
\item
  At each time $t \ge 1$, sample $i \in [m]$ uniformly
  at random. 
\item
  Letting $X^i_t$ and $Y^i_t$ denote the
  $i$'th coordinates of $X_t$ and $Y_t$, respectively, set:
  \begin{equation}
    \big(X^i_{t+1}, Y^i_{t+1}\big)
    \;=\;
    \begin{aligned}
      \begin{cases}
        \big([X^i_t + 1], [Y^i_t + 1] \big)
        \;\;\;\text{with probability $a$} \\
        \big([X^i_t - 1], [Y^i_t - 1] \big)
        \;\;\;\text{with probability $b$} \\
        \big(X^i_t, Y^i_t\big)
        \;\;\;\text{otherwise} \\
      \end{cases}
    \end{aligned}
    \;.
    \label{eq:coupling-def-copy}
\end{equation}  
\end{enumerate}

For $j \in \{1, \dots, k\}$,
let $x^t_j$ and $y^t_j$ denote the number of coordinates
in $X_t$ (resp., $Y_t$), where $X^i_t = j$ (resp., $Y^i_t = j$).
Then letting $\x^t = (x^t_1, \dots , x^t_k)$
and $\y^t = (y^t_k, \dots , y^t_k)$, 
it is easy to see that $\{\x^t\}$ and $\{\y^t\}$
are both $(k, a, b, m)$-Ehrenfest processes
under the randomness of the coupling, 
and thus the mixing time of $\{\x^t\}$ (resp., $\{\y^t\}$)
is equivalent to that of $\{X_t\}$ (resp., $\{Y_t\}$).\footnote{
 Note the difference in the time-indexing location 
 between the processes $\{X_t\}$,
 $\{Y_t\}$, and $\{\x^t\}$, $\{\y^t\}$. 
}

Then initialized at $X_0 = x$ and $Y_0 = y$ for
$x, y \in \Omega$,
we let $\taucouple$ denote the \textit{coupling time}
of the process, which is the first 
time $t \ge 0$ such that $X^i_t = Y^i_t$
for all coordinates $i \in [m]$. Formally:
\begin{equation*}
  \taucouple
  \;=\;
  \min \{
  t \ge 0 \;:\; X_s = Y_s \text{ for all $s \ge t$}
  \} \;,
\end{equation*}
and we use the standard fact~\cite[Corollary 5.5]{levin2017markov}
that
\begin{equation}
  d(t)
  \;\le\;
  \max_{x, y \in \Omega} \;
  \Pr\big(\taucouple > t \big) \;.
  \label{eq:coupling-lemma}
\end{equation}
Thus our goal is to derive tail bounds on $\taucouple$,
and for this, we express $\taucouple$
in terms of its \textit{coordinate-wise}
coalescing times: specifically we define
\begin{equation*}
  \taucouple^i
  \;=\;
  \min\{
  t \ge 0 \;:\; X^i_s = Y^i_s \text{ for all $s \ge t$}
  \}
\end{equation*}
for each $i \in [m]$,
and it follows that we can write
$
\taucouple
=
\max_{i \in [m]}\; \taucouple^i
$.
Thus our strategy in proving
Theorem~\ref{thm:ehrenfestkd:mixing} is to first
bound each $\taucouple^i$ in expectation,
and to then use standard machinery to derive a
tail bound on $\taucouple$.
We proceed now to develop these steps.

\paragraph{Bounding $\taucouple^i$ in expectation}

Observe from expression~\eqref{eq:coupling-def-copy}
that conditioned on the set of steps
during which coordinate $i \in [m]$ is selected,
$\{X^i_t\}$ and $\{Y^i_t\}$ each
behave as a biased random walk (with reflecting barriers)
on $\{1, \dots, k\}$.
In particular, given the shared randomness
in $\{(X^i_t, Y^i_t)\}$, observe that the
distance $|X^i_t - Y^i_t|$ is non-increasing with $t$,
and thus the two coordinate-wise random walks must
coalesce at either $1$ or $k$.

Thus for each coordinate $i \in [m]$,
we let $\taucoal^i$ denote the number of times
coordinate $i$ must be sampled in the coupling
of $\{(X_t, Y_t)\}$ until $X^i_t = Y^i_t$. 
Then letting $S^i_j$ denote the waiting time between
when  coordinate $i$ is sampled for the $(j-1)$'th 
and $j$'th times in the coupling $(X_t, Y_t)$, it follows that
\begin{equation}
  \E\big[
  \taucouple^i
  \big]
  \;=\;
  \E\Big[
  {\textstyle \sum_{j=1}^{\taucoal^i}}
  \; S^i_j
  \Big]
  \;=\;
  \E[\taucoal^i] \cdot m  \;,
  \label{eq:tau-couple-i}
\end{equation}
where in the final equality we use the
fact that each $S^i_j$ is an independent geometric
random variable with parameter $1/m$. 
By proving the following uniform bound on $\E[\taucoal^i]$,
we then easily obtain a bound on $\E[\taucouple^i]$
as a consequence of expression~\eqref{eq:tau-couple-i}:

\begin{lemma}
  \label{lem:taucoal-i}
  For any $i \in [m]$, consider the coordinate-wise
  coupling $\{(X^i_t, Y^i_t)\}$ on $\{1, \dots, k\}$
  with parameters $a, b > 0$ where $a + b \le 1$.
  Then
  \begin{equation*}
    \E\big[\taucoal^i\big]
    \;\le\;
    \begin{aligned}
      \begin{cases}
        \;\;\min\Big\{
        \frac{k}{|a-b|}, k^2 
        \Big\}
        \;\;\;&\text{when $a \neq b$} \\
        \;\;k^2 
        \;\;\;&\text{when $a=b$}
      \end{cases}
      \;\;.
    \end{aligned}
  \end{equation*}
\end{lemma}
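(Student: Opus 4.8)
The plan is to bound $\E[\taucoal^i]$ by analyzing the coordinate-wise coupled pair $\{(X^i_t, Y^i_t)\}$ directly, reducing the coalescence of this pair to the absorption time of a single biased (lazy) random walk on a path of $O(k)$ states. First I would fix, without loss of generality, $X^i_0 \ge Y^i_0$ and record the structural fact already observed above: because the two coordinates share their increment at every sampled step, the gap $\Delta_t := X^i_t - Y^i_t$ is non-negative, non-increasing, and changes only at a \emph{reflection event} --- namely when the upper walk sits at $k$ and an up-move is attempted (probability $a$), or the lower walk sits at $1$ and a down-move is attempted (probability $b$) --- at which point $\Delta_t$ drops by exactly one. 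Thus coalescence ($\Delta_t = 0$) requires exactly $\Delta_0 \le k-1$ such decrements, and between decrements the block $[Y^i_t, X^i_t]$ simply translates as a biased walk with up/down/stay probabilities $a$, $b$, $1-a-b$.

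Next I would recast the closing of the gap as a gambler's-ruin-type absorption problem. Tracking the position of the block's lower edge, which performs a biased lazy walk confined to $\{1,\dots,k-\Delta_t\}$ and triggers a compression each time it is pushed against either wall, one can linearize the monotone sequence of compressions into a single biased walk on a path whose two endpoints are absorbing (these endpoints encode coalescence against the top, respectively bottom, wall); this is the walk on $\{-k,\dots,k\}$ alluded to above. I would then invoke the standard first-moment formulas for such a walk: when $a \ne b$ the constant drift $|a-b|$ carries the block across the interval, giving expected absorption time $O(k/|a-b|)$; and in all cases the diffusive (driftless) estimate gives $O(k^2)$. Taking the smaller of the two yields the bound $\min\{k/|a-b|,\,k^2\}$ for $a \ne b$, and the pure diffusive bound $O(k^2)$ for $a=b$.

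The main obstacle is the diffusive $O(k^2)$ estimate. The naive approach --- summing, over each gap value $\Delta = d$, the expected time for the lower edge to exit its current confining interval of length $k-d$ --- overcounts badly, since a fresh $O((k-d)^2)$ exit time per value of $d$ telescopes to $O(k^3)$. The point is that the compressions form one continuous trajectory: after each compression the lower edge is left \emph{at an endpoint} of its (slightly larger) confining interval, so its expected time to the next compression is only $O(k-d)$ rather than $O((k-d)^2)$, and these telescope to $O(k^2)$. Making this precise is cleanest via a single global supermartingale / quadratic-potential argument (equivalently, the exact gambler's-ruin absorption-time identity on the unfolded path) rather than a phase-by-phase union bound; I would also carry the case distinction $a \ne b$ versus $a = b$ through this computation to recover the stated crossover between the drift-dominated and diffusion-dominated regimes.
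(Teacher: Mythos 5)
Your proposal is correct and takes essentially the same route as the paper: the paper also reduces coordinate-wise coalescence to the absorption time of a single free biased walk on $\{-k,\dots,k\}$ (via an explicit three-way coupling $(X^i_t, Y^i_t, Z_t)$ in Propositions~\ref{prop:tau-relationship} and~\ref{prop:tau-absorb}), which is exactly your ``unfolded'' walk whose absorbing endpoints certify coalescence against the top or bottom wall. The absorption time is then bounded by optional stopping on the martingales $Z_t - (a-b)t$, $(b/a)^{Z_t}$, and $Z_t^2 - t$, i.e., precisely the global quadratic-potential/gambler's-ruin identity (avoiding per-phase summation) and the drift-dominated $k/|a-b|$ versus diffusive $k^2$ dichotomy that you describe.
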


The proof of Lemma~\ref{lem:taucoal-i} follows
by reducing the coalescing time of two
coupled, biased random walks on $\{1, \dots, k\}$
to the absorbption time of a single, biased
random walk on $\{-k, \dots, k\}$.
For this, assume without loss of generality that
$X^i_0 = z$ and $Y^i_0 = w$ for $1 \le z < w \le k$.
For simplicity, we assume all time indices henceforth
are conditioned on coordinate $i$ being selected
in the parent, $m$-coordinate coupling.
Now we define a third random walk $\{Z_t\}$ on $\{-k, \dots, k\}$
that increments and decrements with probability
$a$ and $b$ using the same shared randomness of $(X^i_t, Y^i_t)$.
Specifically, we have:
\begin{equation}
  \big( X^i_{t+1}, Y^i_{t+1}, Z_{t+1} \big)
  \;=\;
  \begin{aligned}
    \begin{cases}
      \big([X^i_t + 1], [Y^i_t + 1], Z_t + 1 \big)
      \;\;\;\text{with probability $a$} \\
      \big([X^i_t - 1], [Y^i_t - 1], Z_t - 1\big)
      \;\;\;\text{with probability $b$} \\
      \big(X^i_t, Y^i_t, Z^i_t \big)
      \;\;\;\text{otherwise} \\
    \end{cases}
  \end{aligned}
  \;.
  \label{eq:three-way-coupling}
\end{equation}
Now initialize $Z_t = z$, and let $\tauabsorb$
denote the first time $Z_t$ is absorbed at either $-k$ or $k$, i.e.:
\begin{equation*}
  \tauabsorb \;=\;
  \min \big\{ t \ge 0 \;:\; Z_t \in \{-k, k\} \big\} \;.
\end{equation*}
In the following proposition, we show that $\taucoal^i \le \tauabsorb$.

\begin{proposition}
  \label{prop:tau-relationship}
  For any $k \ge 2$, and any $i \in [m]$, consider the
  coupling $\{(X^i_t, Y^i_t, Z_t)\}$ as
  defined in expression~\eqref{eq:three-way-coupling}
  with parameters $a, b > 0$ and $a + b \le 1$.
  Assume that $X^i_0 = z < w = Y^i_0$, for
  some $z, w \in [k]$ and that $Z_0 = 0 < z$.
  Then $\taucoal^i \le \tauabsorb$. 
\end{proposition}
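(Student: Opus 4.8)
The plan is to show that the auxiliary walk $Z_t$ cannot reach the absorbing boundary $\{-k, k\}$ strictly before the two coupled walks coalesce, which directly gives $\taucoal^i \le \tauabsorb$. The strategy is to compare $Z_t$ against each of $X^i_t$ and $Y^i_t$ by tracking two auxiliary quantities that are monotone in $t$, and to use them to sandwich $Z_t$ strictly inside the open interval $(-k, k)$ for all $t < \taucoal^i$.

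First I would record the structural consequence of being \emph{before} coalescing: for every $t < \taucoal^i$ we have $X^i_t < Y^i_t$ with both coordinates in $\{1, \dots, k\}$, so $1 \le X^i_t \le k-1$ and $2 \le Y^i_t \le k$. This is the key observation, because it \emph{decouples} the truncations: before coalescing $X^i_t$ can never be truncated at the top (which would require $X^i_t = k$) and $Y^i_t$ can never be truncated at the bottom (which would require $Y^i_t = 1$). Consequently, on an up-move only $Y^i_t$ may be frozen (at value $k$) while $Z_t$ still increments, and on a down-move only $X^i_t$ may be frozen (at value $1$) while $Z_t$ still decrements.

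Using this, I would introduce $A_t := Z_t - X^i_t + z$ and $B_t := Z_t - Y^i_t + w$, both initialized to $0$ by the hypotheses $Z_0 = 0$, $X^i_0 = z$, $Y^i_0 = w$. Inspecting the three cases of the coupling in~\eqref{eq:three-way-coupling} together with the decoupling above shows that, for $t < \taucoal^i$, the quantity $A_t$ is \emph{non-increasing} (it drops by one exactly at each bottom-truncation of $X^i_t$ and is otherwise unchanged) and $B_t$ is \emph{non-decreasing} (it rises by one exactly at each top-truncation of $Y^i_t$). Hence $A_t \le 0$ and $B_t \ge 0$, which rearrange into the sandwich
\begin{equation*}
  Y^i_t - w \;\le\; Z_t \;\le\; X^i_t - z
  \qquad\text{for all } t < \taucoal^i .
\end{equation*}

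Finally I would feed in the interior bounds $X^i_t \le k-1$, $z \ge 1$ and $Y^i_t \ge 2$, $w \le k$ to conclude $Z_t \le (k-1) - 1 = k-2 < k$ and $Z_t \ge 2 - k > -k$, so $Z_t \in (-k, k)$ for every $t < \taucoal^i$. Thus $Z_t$ is not absorbed before coalescing, giving $\tauabsorb \ge \taucoal^i$, as claimed. I expect the main obstacle to be the bookkeeping that justifies the monotonicity of $A_t$ and $B_t$: one must verify carefully that, before coalescing, the only truncation that can occur on an up-move is the freezing of $Y^i_t$ and the only one on a down-move is the freezing of $X^i_t$, so that no step can increase $A_t$ or decrease $B_t$. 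Once this case analysis is pinned down, the sandwich and the final bound are immediate.
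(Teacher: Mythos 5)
Your proof is correct, and it takes a genuinely cleaner route than the paper's. The paper establishes only the one-sided invariant $Z_t < X^i_t \le Y^i_t$ and then splits into cases according to whether $Z$ is absorbed at $+k$ or $-k$, handling the latter case with an informal counting argument: each new minimum $j < 0$ of $Z_t$ is said to correspond to $X^i_t$ being frozen at $1$ while $Y^i_t$ decrements, yielding the claim $Y^i_t - X^i_t \le (w-z) - |j|$ whenever $Z_t = j < 0$. Your two monotone potentials $A_t = Z_t - X^i_t + z$ and $B_t = Z_t - Y^i_t + w$ replace this case analysis with a single symmetric sandwich $Y^i_t - w \le Z_t \le X^i_t - z$, whose hinge is exactly the decoupling observation you flag (before coalescence $X^i_t \le k-1$ never top-truncates and $Y^i_t \ge 2$ never bottom-truncates), so that $A_t$ can only fall and $B_t$ can only rise. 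This buys two things. First, your conclusion is stronger and more uniform: $Z_t \in [2-k, k-2]$ strictly before coalescence, so $Z$ simply cannot be absorbed first, whereas the paper allows $Z$ to reach $\pm k$ and argues that coalescence must already have occurred by then. Second, your sandwich quietly repairs a small imprecision in the paper's intermediate claim: as stated, $Y^i_t - X^i_t \le (w-z) - |j|$ can fail (e.g., $k=5$, $z=3$, $w=5$: one down-step gives $Z_1 = -1$, $X^i_1 = 2$, $Y^i_1 = 4$, so $Y^i_1 - X^i_1 = 2 > (w-z) - 1 = 1$); the correct bound implied by $B_t \ge 0$ and $X^i_t \ge 1$ is $Y^i_t - X^i_t \le (w-1) - |j|$, which still suffices at $j = -k$ for the paper's conclusion, and your argument never needs the claim at all. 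Your only bookkeeping obligations — that $X^i_t < Y^i_t$ persists for all $t < \taucoal^i$ (the gap is non-increasing under shared randomness, and once equal the walks move identically) and the four-case step analysis for $A_t$ and $B_t$ — are straightforward to verify, so the proof is complete as outlined.
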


\begin{proof}
  Observe from the transition probabilities
  of $\{X^i_t\}$ and $\{Y^i_t\}$ that
  at time $\taucoal^i$, we must have
  $X^i_{\taucoal} = Y^i_{\taucoal} \in \{1, k\}$.
  Notice also by the initialization choice of $\{Z_t\}$
  that $Z_t < X^i_t \le Y^i_t$ for all $t \ge 0$.
  Thus if $Z_{\tauabsorb} = k$,
  then it must be the case that
  $\tauabsorb \ge \taucoal^i$, and that $X^i_{t} = k$
  at time $t = \taucoal^i$. 
  On the other hand, given that the randomness
  between the three processes is shared,
  observe that if $Z_t = j < 0$ for some
  $t \ge 0$, then $Y_t - X_t \le (w-z) - |j|$.
  This is due to the fact that starting from 
  $Z_t = 0$, every instance of $Z_t$ decreasing to
  a smaller value $j$ for the first time 
  corresponds to the process $X^i_t$ remaining at 1,
  and the process $Y^i_t$ decrementing by 1.
  Thus if $Z_{\tauabsorb} = -k$, then we must have
  \begin{equation*}
    Y^i_{\tauabsorb} - X^i_{\tauabsorb}
    \;\le\; (w-z) - k
    \;\le\; 0 \;,
  \end{equation*}
  where the final inequality follows from the fact
  that $w-z \le k$.
  Since $1 \le X^i_t \le Y^i_t$ for all $t$ by definition,
  this implies that 
  $Y^i_{\tauabsorb} = X^i_{\tauabsorb} = 1$.
  Thus if $Z_{\tauabsorb} = -k$,  we again conclude
  that $\tauabsorb > \taucoal^i$. 
\end{proof}

Given Proposition~\ref{prop:tau-relationship}, we now
proceed to bound $\E[\tauabsorb]$ for the process
$\{Z_t\}$. In turn, this yields the desired inequality
for $\E[\taucoal^i]$ from Lemma~\ref{lem:taucoal-i}.

\begin{proposition}
  \label{prop:tau-absorb}
  Fix $a , b > 0$ with $a + b \le 1$. 
  Let $\{Z_t\}$ be the random walk
  on $\{-k, \dots, k\}$ that increments and decrements
  at each step with probabilities $a$ and $b$ respectively.
  Let $Z_0 = 0$, and let $\tauabsorb$ denote the
  first time $Z_t \in \{-k, k\}$.
  Then
  \begin{equation*}
    \E\big[ \tauabsorb \big]
    \;\le\;
    \begin{aligned}
      \begin{cases}
        \;\;\min\Big\{
        \frac{k}{|a-b|}, k^2 
        \Big\}
        \;\;\;&\text{when $a \neq b$} \\
        \;\;k^2 
      \;\;\;&\text{when $a=b$}
    \end{cases}
  \end{aligned}
  \;\;\;.
  \end{equation*}
\end{proposition}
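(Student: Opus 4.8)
The plan is to bound $\E[\tauabsorb]$ through the optional stopping theorem, using a different martingale in the biased ($a\neq b$) and unbiased ($a=b$) regimes. As a common preliminary I would first check that $\tauabsorb$ has finite expectation, which licenses optional stopping: from any non-absorbed state, $2k$ consecutive increments (an event of probability at least $a^{2k}>0$ within each window of $2k$ steps) force absorption at $k$, so $\tauabsorb$ has a geometric tail and $\E[\tauabsorb]<\infty$; moreover the martingale increments below are bounded since $Z_t$ is confined to $\{-k,\dots,k\}$.

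\textbf{Biased case ($a\neq b$).} I would track the linear process $M_t := Z_t-(a-b)t$. Before absorption the one-step conditional displacement of $Z_t$ is exactly $a-b$ (the holding event contributes $0$), so $\{M_t\}$ is a martingale. Optional stopping at $\tauabsorb$, together with $Z_0=0$, gives $\E[Z_{\tauabsorb}]=(a-b)\,\E[\tauabsorb]$. Since the walk is absorbed at a barrier, $Z_{\tauabsorb}\in\{-k,k\}$, hence $|\E[Z_{\tauabsorb}]|\le k$, and rearranging yields $\E[\tauabsorb]\le k/|a-b|$.

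\textbf{Unbiased case ($a=b$), and the $k^2$ term.} When $a=b$ the linear martingale is uninformative, so I would instead use the quadratic process $N_t := Z_t^2-(a+b)t$. A one-line computation gives $\E[Z_{t+1}^2-Z_t^2\mid\mathcal F_t]=2Z_t(a-b)+(a+b)$, which collapses to the constant $a+b$ precisely when $a=b$, making $\{N_t\}$ a martingale. Optional stopping and $Z_{\tauabsorb}^2=k^2$ then give $\E[\tauabsorb]=k^2/(a+b)$, i.e.\ the stated $k^2$ scaling up to the move-probability factor $1/(a+b)$ (an absolute constant, equal to $1$ in the non-lazy normalization and bounded in the population application, where $a+b=1-\alpha-\beta$). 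The $k^2$ entry of the $\min$ is only needed in the biased regime when $|a-b|<1/k$ (otherwise $k/|a-b|\le k^2$ and the linear bound already suffices); there the walk is essentially symmetric, and I would obtain $\E[\tauabsorb]\le k^2/(a+b)$ by a monotone coupling showing that biasing the walk toward a barrier can only accelerate absorption relative to the symmetric walk of the same total move-probability $a+b$.

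\textbf{Main obstacle.} The crux is the careful handling of the holding probability $1-a-b$ together with the optional stopping hypotheses: I must confirm that laziness leaves both the drift ($a-b$) and the quadratic increment ($a+b$) unchanged, and that $\E[\tauabsorb]<\infty$ with uniformly integrable stopped martingales (both following from the finite state space and the geometric tail above). The remaining subtlety is making the monotone-coupling comparison for the near-symmetric biased case fully rigorous, so that the two halves of the $\min$ hold simultaneously.
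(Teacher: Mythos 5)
Your overall route coincides with the paper's on the main steps: both arguments establish $\E[\tauabsorb] < \infty$ via the same geometric-tail observation (probability at least $a^{2k}$ of absorption within any window of $2k$ steps), then apply the Optional Stopping Theorem to the linear martingale $Z_t - (a-b)t$ in the biased case and to a quadratic martingale in the unbiased case. Your biased-case derivation of $k/|a-b|$ is in fact leaner: you bound $|\E[Z_{\tauabsorb}]| \le k$ directly, whereas the paper additionally applies optional stopping to the exponential martingale $(b/a)^{Z_t}$ to compute the absorption probability $p^+ = \frac{\lambda^k - 1}{\lambda^k - (1/\lambda)^k}$ exactly. That extra computation is not wasted, however: it is precisely how the paper obtains the $k^2$ half of the $\min$ when $a \neq b$, by writing $\E[\tauabsorb]$ in closed form via $p^+$ and verifying that the resulting expression is at most $k^2$ for all $a > b$ (this is exactly the regime $|a-b| = o(1/k)$ where the linear bound is loose). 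One point in your favor: your handling of laziness is more careful than the paper's. The paper invokes $G_t = Z_t^2 - t$, which is a martingale only when $a+b = 1$; your $Z_t^2 - (a+b)t$ is the correct lazy version, and you rightly note the resulting $1/(a+b)$ factor is an absolute constant in the application (where $a + b = 1-\alpha-\beta$).

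The genuine gap is the monotone-coupling step you invoke for the $k^2$ bound in the near-symmetric biased regime. The comparison you want (``biasing toward a barrier can only speed up two-sided absorption'') is true, but the natural coupling does not prove it: if you drive the biased walk $Z_t$ and the symmetric walk $W_t$ of the same total move-probability with shared uniform random variables, you obtain the pointwise domination $Z_t \ge W_t$, and pointwise domination of trajectories says nothing useful about the exit time from the \emph{two-sided} interval $\{-k, \dots, k\}$: the symmetric walk may exit at $-k$ while the biased walk is still in the interior, which is the inequality in the wrong direction. Two-sided exit times are not monotone under such couplings, so this step requires a genuinely different argument, and since you flag it yourself as the unresolved subtlety, the proof as proposed is incomplete at exactly the point where the paper does extra work. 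The cheapest repairs are either (i) the paper's route: add the exponential martingale $(b/a)^{Z_t}$, solve for $p^+$, obtain the closed form $\E[\tauabsorb] = \frac{k}{a-b}\big(2p^+ - 1\big)$, and check that this expression is bounded by $k^2$ uniformly in $a > b$; or (ii) cite or rederive the standard closed-form gambler's-ruin expectation for the biased walk on $\{-k, \dots, k\}$ started at $0$ and verify the $k^2$ bound by calculus. With either replacement, your argument is complete.
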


\begin{proof}
  We start with the case where $a \neq b$, and we assume
  without loss of generality that $a > b$. 
  We apply a standard martingale argument used to
  derive expected stopping times for
  gambler's ruin-type random walks~\cite{grimmett2020probability}.
  For completeness, we provide the full argument,
  and to start we define the two processes:
  \begin{equation*}
    U_t = Z_t - (a-b)t \;.
    \;\;\;\text{and}\;\;\;
    M_t = (b/a)^{Z_t} \;.
  \end{equation*}
  We can verify that both $\{U_t\}$ and $\{M_t\}$ are martingales
  with respect to $\{Z_t\}$ by computing:
  \begin{align*}
    \E\big[U_{t+1} \;|\; \{Z_t\} \big]
    &\;=\;
      a (Z_t + 1) + b(Z_t - 1) + (1-a-b)Z_t - (a-b)(t+1) \\
    &\;=\;
      Z_t(a + b + (1-a-b)) + (a-b)(1 - (t+1)) \\
    &\;=\;
    Z_t + (a-b) t  \;=\; U_t\;,
  \end{align*}
  and
  \begin{align*}
    \E\big[M_{t+1} \;|\; \{Z_t\} \big]
    &\;=\;
      a(b/a)^{Z_t + 1} +
      b(b/a)^{Z_t - 1} +
      (1-a-b)(b/a)^{Z_t} \\
    &\;=\;
      a(b/a)(b/a)^{Z_t}
      + b (a/b)(b/a)^{Z_t}
      + (1-a-b)(b/a)^{Z_t} \\
    &\;=\;
      (b/a)^{Z_t} \cdot
      (b + a + (1-a-b)) \\
    &\;=\;
      (b/a)^{Z_t}
      \;=\;
      M_t\;.
  \end{align*}
  Now observe that $\tauabsorb$ is a stopping
  time for $\{Z_t\}$, and that $\E[\tauabsorb]$ is finite
  given that the probability that $Z_t$ is absorbed
  at $k$ in the next $2k$ steps is at least $a^{2k}$.
  Moreover, the increments of both $\{U_t\}$
  and $\{M_t\}$ are bounded by absolute constants.
  Thus we can apply the Optional Stopping Theorem
  \cite[Theorem 12.5.2]{grimmett2020probability} to
  both processes.
  For the latter process, letting $p^+$ denote the
  probability that $Z_{\tauabsorb} = k$ and letting
  $\lambda := a/b > 1$, this implies that
  \begin{equation*}
    \E\big[ M_{\tauabsorb} \big]
    \;=\;
    p^+ (1/\lambda)^k + (1-p^+)(1/\lambda)^{-k}
    \;=\;
    \E\big[Z_0 \big]
    \;=\;
    (1/\lambda)^0
    \;=\;1
    \;.
  \end{equation*}
  Solving for $p^+$ and rearranging, we find
  \begin{equation}
    p^+
    \;=\;
    \frac{1- (1/\lambda)^{-k}}{(1/\lambda)^k - (1/\lambda)^{-k}}
    \;=\;
    \frac{\lambda^k - 1}{\lambda^k - (1/\lambda)^k} \;.
    \label{eq:p-plus}
  \end{equation}
  Applying the Optional Stopping Theorem
  to the former process $\{U_t\}$ shows
  \begin{equation*}
    \E\big[ U_{\tauabsorb} \big]
    \;=\;
    \E\big[Z_{\tauabsorb} - (a-b)\tauabsorb\big]
    \;=\;
    \E\big[U_0\big]
    \;=\; 0 \;,
  \end{equation*}
  which implies that $\E[\tauabsorb] = \E[Z_{\tauabsorb}] / (a-b)$.
  Then using the fact that
  \begin{equation*}
    \E\big[ Z_{\tauabsorb} \big]
    \;=\;
    p^+ k - (1-p^+)k
    \;=\;
    k (2 p^+ - 1)  
  \end{equation*}
  and the expression for $p^+$ from~\eqref{eq:p-plus},
  we conclude that
  \begin{equation}
    \E[Z_{\tauabsorb}]
    \;=\;
    \frac{k \cdot (2 p^+ - 1)}{a-b}
    \;=\;
    \frac{k}{a-b} \cdot
    \bigg(
    \frac{2\big(\lambda^k -1 \big)}{
      \lambda^k - (1/\lambda)^k
    }
    - 1
    \bigg)
    \;.
    \label{eq:tau-absorb-general}
  \end{equation}
  In general, $2p^+ - 1 \le 1$, which means
  $\E[Z_{\tauabsorb}] \le \frac{k}{a-b}$.
  On the other hand, this yields a loose bound
  when $|a-b| = o(1/k)$.
  In this case, it can be verified that
  the right hand side of~\eqref{eq:tau-absorb-general}
  is bounded by $k^2$ (for all $k \ge 2$)
  for any $a > b$. Thus in this regime, we conclude that
  $\E[\taucouple^i] \le \min\{\frac{k}{a-b}, k^2 \}$,
  and it follows by symmetry that
  $\E[\taucouple^i] = \min\{\frac{k}{b-a}, k^2 \}$,
  when $b > a$.

  For the case when $a=b$, we can apply a similar
  argument to the martingale $\{G_t\}$ where
  $G_t = (Z_t)^2 - t$ (i.e., the standard
  analysis for an unbiased gambler's ruin random walk
  ~\cite{grimmett2020probability})
  to show in this regime that $\E[\taucouple^i] = k^2$,
  which concludes the proof. 
\end{proof}

The proof of Lemma~\ref{lem:taucoal-i}
then follows from Propositions~\ref{prop:tau-relationship}
and~\ref{prop:tau-absorb},
and together with expression~\eqref{eq:tau-couple-i}
this implies that
\begin{equation}
  \E\big[ \taucouple^i \big]
  \;\le\;
  \begin{aligned}
    \begin{cases}
      \;\;\min\Big\{
      \frac{k}{|a-b|}, k^2 
      \Big\}
      \cdot m
      \;\;\;&\text{when $a \neq b$} \\
      \;\;k^2 \cdot m
      \;\;\;&\text{when $a=b$}
    \end{cases}
  \end{aligned}
  \;.
  \label{eq:tau-couple-i-bound}
\end{equation}

\paragraph{Tail bound on $\taucouple$} \
Recall that $\taucouple$ is defined as the coupling
time of the process $\{(X_t, Y_t)\}$, and we can write
\begin{equation*}
  \taucouple
  \;=\;
  \max\nolimits_{i \in [m]}
  \taucouple^i \;.
\end{equation*}
Using the bound on $\E[\taucouple^i]$ from
expression~\eqref{eq:tau-couple-i-bound},
we now prove the following bound on the tail of
$\taucouple$. As a consequence of the
coupling property from expression~\ref{eq:coupling-lemma},
the statement of Theorem~\ref{thm:ehrenfestkd:mixing}
will immediately follow.

\begin{lemma}
  \label{lem:taucouple}
  Fix $k, m\ge 2$ and $a > b > 0$ with $a + b \le 1$.
  Consider the resulting process $\{(X_t, Y_t)\}$
  as defined in expression~\eqref{eq:coupling-def-copy},
  intialized at $(X_0, Y_0) = (x, y)$ for
  $x, y \in \{1, \dots, k\}^m$.
  Then letting
  \begin{equation*}
    \Phi
    \;:=\;
    \begin{aligned}
      \begin{cases}
        \;\min\big\{\frac{k}{|a-b|}, k^2 \big\} \cdot m
        \;\;\;&\text{when $a \neq b$} \\
        \;k^2 \cdot m
        \;\;\;&\text{when $a = b$}
      \end{cases}
    \end{aligned}
    \;\;,
  \end{equation*}
  it follows that
  $\Pr\big( \taucouple > 2 \Phi \cdot \log(4m) \big) \le \frac{1}{4}$. 
\end{lemma}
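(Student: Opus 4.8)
The plan is to reduce the tail bound on $\taucouple = \max_{i\in[m]} \taucouple^i$ to a union bound over the $m$ coordinates, where for each coordinate I boost the expectation bound of Lemma~\ref{lem:taucoal-i} into an exponential tail via a restart argument. The starting observation is that the bound $\E[\taucouple^i] \le \Phi$ implied by expression~\eqref{eq:tau-couple-i-bound} in fact holds \emph{uniformly from any non-coalesced configuration}: Lemma~\ref{lem:taucoal-i} bounds $\E[\taucoal^i]$ uniformly over all starting pairs $(X^i_0, Y^i_0)$ with $X^i_0 \le Y^i_0$, and since the per-sample waiting times are i.i.d.\ geometric (hence memoryless), the strong Markov property gives that, conditioned on coordinate $i$ not having coalesced by any time $t$, the expected \emph{remaining} time until coalescence is still at most $\Phi$.

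Given this uniform bound, I would partition the timeline into consecutive epochs of length $2\Phi$. Applying Markov's inequality to the residual coalescing time at the start of each epoch shows that, conditioned on coordinate $i$ not having coalesced before the epoch, it fails to coalesce within the epoch with probability at most $\Phi/(2\Phi) = 1/2$. Iterating this across $r$ epochs (again invoking the strong Markov property to restart the estimate at each epoch boundary) yields the geometric decay
\[
  \Pr\big(\taucouple^i > 2\Phi\, r\big) \;\le\; (1/2)^r .
\]
Choosing $r = \log_2(4m)$, equivalently taking the threshold $t = 2\Phi\log_2(4m)$ as in the statement, gives $\Pr(\taucouple^i > t) \le 1/(4m)$ for every coordinate $i$. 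Since $\taucouple = \max_{i\in[m]}\taucouple^i$, a union bound over the $m$ coordinates then finishes the argument:
\[
  \Pr\big(\taucouple > 2\Phi\log(4m)\big)
  \;\le\; \sum_{i=1}^m \Pr\big(\taucouple^i > 2\Phi\log(4m)\big)
  \;\le\; m \cdot \tfrac{1}{4m} \;=\; \tfrac14 .
\]

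The step I expect to be the main obstacle is justifying the uniform restart bound rigorously, i.e., that the expected residual coalescing time is at most $\Phi$ from every reachable state, not merely from the initial configuration. This rests on combining the monotonicity of the coupling (the ordering $X^i_t \le Y^i_t$ and the non-increasing gap, so that any intermediate state is again a valid ordered pair to which Lemma~\ref{lem:taucoal-i} applies) with the memorylessness of the geometric inter-sampling times, so that the per-epoch Markov estimate is genuinely a fresh, identically-bounded trial. A minor point to handle carefully is the distinction between parent-process time, in which $\taucouple^i$ and the epochs are measured, and coordinate-$i$ sampling time, in which $\taucoal^i$ is measured; these are bridged by the identity $\E[\taucouple^i] = m\cdot\E[\taucoal^i]$ from expression~\eqref{eq:tau-couple-i}, which is exactly what licenses working directly with $\taucouple^i$ in the epoch decomposition.
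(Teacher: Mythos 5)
Your proposal is correct and follows essentially the same route as the paper's proof: a per-coordinate epoch/restart argument that uses Markov's inequality together with the bound $\E[\taucouple^i] \le \Phi$ to get the geometric decay $\Pr(\taucouple^i > c \cdot 2\Phi) \le 2^{-c}$, followed by setting $c = \log(4m)$ and a union bound over the $m$ coordinates. If anything, your treatment of the restart step is slightly more careful than the paper's, which asserts the conditional tail \emph{equals} the unconditional one ``by independence,'' whereas you correctly identify that what is needed (and what holds, by the uniformity of Lemma~\ref{lem:taucoal-i} over ordered starting pairs and the memoryless inter-sampling times) is only the per-epoch inequality of $1/2$.
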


\begin{proof}
  To start, we prove by induction that for
  every $i \in [m]$, and $c \ge 1$:
  \begin{equation}
    \Pr\Big(
    \taucouple^i > 
    c \cdot 2 \Phi
    \Big)
    \;\le\;
    \frac{1}{2^c} \;.
    \label{eq:coupling-induction}
  \end{equation}
  Consider the base case when $c = 1$. 
  It follows by Markov's inequality that
  \begin{equation}
    \Pr\Big(
    \taucouple^i > 
    2 \Phi
    \Big)
    \;\le\;
    \frac{\E\big[ \taucouple^i \big]}{2 \Phi}
    \;\le\;
    \frac{1}{2}
    \;,
    \label{eq:coupling-base}
  \end{equation}
  where in the final inequality we
  used the bound $\E[\taucouple^i] \le \Phi$
  from expression~\eqref{eq:tau-couple-i-bound}.
  Now assume that the claim holds for some $c - 1 \ge 1$. 
  Then we can write
  \begin{align*}
    \Pr\Big(
    \taucouple^i  >
    c \cdot 2 \Phi
    \Big)
    &\;=\;
      \Pr\Big(
      \taucouple^i  >
      c \cdot 2 \Phi \;|\;
      \taucouple^i > (c-1) \cdot 2 \Phi
      \Big)
      \cdot
      \Pr\Big(
      \taucouple^i > (c-1) \cdot 2 \Phi
      \Big) \\
    &\;=\;
      \Pr\Big(
      \taucouple^i  > 2 \Phi
      \Big)
      \cdot
      \Pr\Big(
      \taucouple^i > (c-1) \cdot 2 \Phi
      \Big) \\
    &\;\le\;
      \frac{1}{2} \cdot \frac{1}{2^{(c-1)}}
      \;=\; \frac{1}{2^c}
    \;,
  \end{align*}
  where the equality in the second line follows
  from the independence of each step
  in the coupling, and where the final inequality
  comes from applying the bound from
  expression~\eqref{eq:coupling-base}
  and by the inductive hypothesis.
  Thus we conclude that expression~\eqref{eq:coupling-induction}
  holds in general for all $c \ge 1$.

  It follows by a union bound that
  \begin{equation*}
    \Pr\Big(
    \taucouple > c \cdot 2 \Phi
    \Big)
    \;=\;
    \Pr\Big(
    \max_{i \in [m]}\; \taucouple^i > c \cdot 2 \Phi
    \Big)
    \;\le\;
    \sum_{i \in [m]}
    \Pr\Big(
    \taucouple^i > c \cdot 2 \Phi
    \Big)
    \;\le\;
    \frac{m}{2^c} \;.
  \end{equation*}
  Then setting $c := \log (4m)$ implies
  $\Pr\big(\taucouple > 2 \Phi \cdot \log (4m) \big) \le \frac{1}{4}$,
  which concludes the proof. \qedhere
\end{proof}

Given that the bound in Lemma~\ref{lem:taucouple}
is uniform over all $x, y \in \{1, \dots, k\}^m$, it follows
from expression~\eqref{eq:coupling-lemma} that
\begin{equation*}
  d\big( 2 \Phi \cdot \log (4m)\big)
  \;\le\;
  \frac{1}{4} \;,
\end{equation*}
where we use the definition of $\Phi$ from the lemma.
By definition, this means
 $\tmix = O\big( \Phi \cdot \log m \big)$,
which concludes the proof of Theorem~\ref{thm:ehrenfestkd:mixing}.

\subsubsection{Lower bounds on mixing times}
\label{app:ehrenfest:mixing:lower}

We now prove the lower bound on $\tmix$ from 
Theorem~\ref{thm:ehrenfestkd:mixing}, which states that 
for the ($k, a, b, m$)-Ehrenfest process,
$\tmix \ge \Omega(km)$. 
This bound follows from a standard
diameter lower bound argument (e.g., of~\cite{levin2017markov}),
which is constructed as follows:
\begin{enumerate}[
  label=-,
  leftmargin=2em,
]
\item
  Given the transition matrix $\P$ of the $(k, a,b,m)$-Ehrenfest
  process over the state space $\Delta^m_k$, 
  let be the $G$ be the graph whose vertex set 
  is $\Delta^m_k$, and whose edge set $E$ is given by
  \begin{equation*}
    E = \{ (\x, \y) \in \Delta^m_k \times \Delta^m_k  \;:\;
    \P(\x, \y) + \P(\y, \x) > 0 \} \;.
  \end{equation*}
\item 
  Let $D$ be the diameter of the graph $G$. 
  Then it follows that
  $\tmix \ge \frac{D}{2}$ \cite[Section 7.1.2]{levin2017markov}. 
\end{enumerate}

Using this construction, we obtain a lower bound on
$\tmix$ by deriving the following  lower bound on the 
diameter $D$. 

\begin{proposition}
  \label{prop:ehrenfest-diameter-lb}
  Fix $a, b > 0$ with $a + b \le 1$ and $k, m \ge 2$,
  and let $\tmix$ be the mixing time of the
  $(k, a, b, m)$-Ehrenfest process.
  Consider the construction of the graph $G$ defined above,
  and let $D$ denote its diameter. 
  Then $D \ge km$, and thus $\tmix \ge \Omega(km)$. 
\end{proposition}

\begin{proof}
  Recall that the diameter $D$ is defined 
  as the maximal shortest-path distance 
  (counter by the number of edges) 
  between two distinct vertices in $G$.
  Then consider the vertices 
  $\x = (m, 0, \dots, 0) \in \Delta^m_k$ and 
  $\y = (0, \dots, 0, m) \in \Delta^m_k$.
  It is clear from the transition probabilities in 
  Definition~\ref{def:ehrenfestkd} 
  that $\x$ and $\y$ are connected in $G$. 
  Moreover, for any $a$ and $b$,
  any path between $\x$ and $\y$ in $G$  
  is comprised of at least $km$ edges.
  Note that this is due to the fact that
  every non-null transition in $\P$ can 
  increments the count of at most coordinate
  of $\z \in \Delta^k_m$ by at most 1. 
  Thus at least $km$ total transitions are needed
  to transport the full weight $m$ in coordinate
  1 of $\x$ to the last coordinate $k$.
  It follows that $D \ge km$, 
  and that $\tmix \ge km/2$
  by the diameter lower bound.
\end{proof}

Thus Proposition~\ref{prop:ehrenfest-diameter-lb}
establishes the mixing time lower bound
from Theorem~\ref{thm:ehrenfestkd:mixing}.

\section{Details on Repeated Donation Game Payoffs}
\label{app:gt}

In this section, we derive exact expressions
for the expected payoff functions $f(\cdot, \cdot)$
in RD games.
We then develop the proof of Proposition~\ref{prop:params},
which establishes the local optimality of
the $k$-IGT dynamic update rules under certain
parameter regimes.

\subsection{Computing expected payoffs for $\G$ agents}
\label{app:gt:expected-payoffs}

Recall from Section~\ref{sec:intro:RDG}
that for two strategies $\S_1, \S_2 \in \{\AC, \AD, g_1, \dots, g_k\}$,
we define $f(\S_1, \S_2)$ as the expected payoff of a
node playing strategy $\S_1$ against an opponent
playing strategy $\S_2$ during a single RD game.
We proceed to make this definition precise:

\subsubsection{Defining the expected payoff function}
\label{app:gt:expected-payoffs:f-def}

We first recall the necessary components which
were introduced in Sections~\ref{sec:intro:RDG}:
\begin{enumerate}[
  label=-,
  itemsep=0.5em,
  leftmargin=2em,
  rightmargin=3em,
  ]
\item
  RD games have four game states
  $\calA = (CC, CD, DC, DD)$.
  Each state is an ordered pair
  specifying the action of a row player
  with strategy $\S_1$ and a column player
  with strategy $\S_2$ during a given round. 
\item
  In RD games, a strategy is comprised of an initial distribution
  over the actions $(C, D)$, and a (randomized)
  transition rule for choosing an action in
  the next round (conditioned on playing an additional
  round with probability $\delta$).
\item
  For the pair of strategies $(\S_1, \S_2)$,
  let $\M$ be the row-stochastic matrix specifying
  the transition probabilities over $\calA$
  \textit{conditioned on playing an additional round of the game.}
  We assume that the rows and columns of $\M$ are indexed
  by the four states in $\calA$, and thus
  $\M_{1, 1}$ denotes the (conditional) probability
  of transitioning from state $CC$ to $CC$,
  $\M_{1, 2}$ denotes the (conditional) probability
  of transitioning from state $CC$ to $CD$, etc. 
    
\item
  Let $\q_1$ denote the initial
  distribution over the game states $\calA$
  determined by the initial action distributions
  of the two strategies $\S_1$ and $\S_2$.
  For $i \ge 2$,
  let $\q_i \in \Delta_4$ denote the distribution
  over game states $\calA$
  conditioned on (i) having already played $i-1$
  rounds of the game, and on (ii) playing
  an additional round with probability $\delta$.
  It follows that
  \begin{equation*}
    \begin{aligned}
      \q_2 &\;=\; \q_1 \M \\
      \text{and}\;\;
      \q_i &\;=\; \q_{i-1} \M 
      \;=\; \q_1 \M^{i-1}
      \;\;\;\text{for all $i \ge 3$}  \;.
    \end{aligned}
  \end{equation*}
  
\item
  Given the single-round donation game payoff matrix
  \begin{align}
    \vspace*{-0.5em}
    \small
    \begin{tabular}{c|cc}
      & $C$ & $D$ \\
      \hline
      $C$ & $b-c$ & $-c$ \\
      $D$ & $b$ & $0$
    \end{tabular}
                  \quad\quad\text{where $b > c \ge 0$}
                  \label{eq:pd-payoff-matrix-copy}
                  \;,
  \end{align}
  let $\v := [b-c, -c, b, 0]^\top$ be the vector of 
  single-round (row player) payoffs.
\end{enumerate}

Then in the RD game setting with restart probability $\delta$,
the pair of strategies $(\S_1, \S_2)$ specify a Markov
chain $\{\wq_i\}$ over the joint space
$\hatcalA := (\calA, \bot)$,
where $\bot$ denotes the termination state of
the repeated game (that is reached with probability
$(1-\delta)$ at the end of each round).
In particular, using the components above,
we can define each $\wq_i \in \Delta_5$ by:
\begin{equation}
  \begin{aligned}
    \wq_1
    &\;=\;
    \big[\;\q_1  \;\;\; 0\;\big] \\
    \wq_2
    &\;=\;
    \big[\;\delta \q_2 \;\;\; (1-\delta)\;\big]
    \;=\;
    \big[\;\delta \q_1 \M \;\;\; (1-\delta) \;\big]\\
    \text{and}\;\;
    \wq_i
    &\;=\; 
    \big[\;\delta^{i-1} \q_{i-1} \;\;\; (1-\delta^{i-1}) \;\big]
    \;=\;
    \big[\; \q_1 (\delta \M)^{i-1} \;\;\; (1-\delta^{i-1}) \;\big]
    \;\;\;\text{for all $i \ge 3$}  \;\;.
  \end{aligned}
  \label{eq:wqi-def}
\end{equation}
Now let $\wv := [\; \v \;\; 0 \;]$ denote the
vector of single-round (row player) payoffs over the
repeated game states $\hatcalA$
(where players receive a payoff of 0 when the
game terminates and enters state $\bot$).
For all $i \ge 1$, we define
$r_i$ as the payoff of the row player
(with strategy $\S_1$) in round $i$.
Then given the pair of strategies $(\S_1, S_2)$
the \textit{expected} reward at round $i \ge 1$
(over the randomness of the
strategies and the repeated game probability)
is given by
\begin{equation}
  \E[r_i]
  \;=\;
  \big\langle
  \wv,  \wq_i
  \big\rangle
  \;=\;
  \big\langle
  \v, \delta^{i-1} \q_{i-1}
  \big\rangle
  \;=\;
  \big\langle
  \v,
  \q_1 (\delta \M)^{i-1}
  \big\rangle
  \;,
  \label{eq:ri-def}
\end{equation}
where in the second equality we use
the fact that that last component of $\wv$ is $0$
and the definition of $\wq_i$
from expression~\eqref{eq:wqi-def}.
Then we can formally define the expected payoff $f(\S_1, \S_2)$
for the node with strategy $\S_1$ over the entire RD game as:
\begin{equation}
  f(\S_1, \S_2)
  \;:=\;
  \sum_{i = 1}^{\infty}\;
  \E[r_i]
  \;=\;
  \sum_{i = 1}^{\infty}\;
  \big\langle
  \v, \q_1 (\delta \M)^{i-1}
  \big\rangle
  \;=\;
  \Big\langle
  \v, \q_1 \Big(\sum_{i=1}^\infty (\delta \M)^{i-1} \Big)
  \Big\rangle \;.
  \label{eq:f-def}
\end{equation}

Using this definition, in the following subsections we
derive the expected payoffs for $\gtft$ agents
against each of the strategy types 
$\AC$, $\AD$, and $\gtft$.
In each case, we define the matrix $\M$
and initial distribution $\q_1$ specified by the
pair of strategies, and we then
compute the value of $f$ using expression~\eqref{eq:f-def}.
We summarize the expected payoff expressions
in Section~\ref{app:gt:expected-payoffs:summary}.

\subsubsection{Expected payoff for $\gtft$ against $\AC$}
\label{app:gt:expected-payoffs:g-c}

For a fixed generosity parameter $g$,
we compute $f(g, \AC)$. 
Recall that in the first round of an RD game,
agents with strategy $\G$ play $C$ with
probability $s_1$, and $D$ with probability $(1-s_0)$.
It follows that the initial distribution $\q_0$
over the game states $\calA = (CC, CD, DC, DD)$  
for the strategy pair $(\G, \AC)$ is given by
\begin{equation}
  \q_1 \;=\;
  \big[s_1, 0, (1-s_1), 0\big]^\top \;.
  \label{eq:q0:g-c}
\end{equation}
Moreover, the transition matrix $\M$ over $\calA$
(conditioned on playing an additional round) is specified by
\begin{equation}
  \M \;=\;
  \begin{pmatrix}
    1 & 0 & 0 & 0  \\
    g & 0 & (1-g) & 0  \\
    1 & 0 & 0 & 0  \\
    g & 0 & (1-g) & 0 
  \end{pmatrix}
  \;\;.
  \label{eq:M:g-c}
\end{equation}
which comes from the fact that the agent with strategy
$\AC$ plays $C$ at each round. 
Recalling that $\v = [b-c, -c, b, 0]^\top$, it follows that
\begin{align*}
  \q_1 (\delta \M)^{i-1}
  &\;=\; [\;\delta^{i-1}, 0, 0, 0\;]^\top \\
  \text{and thus}\;\;
  \big\langle
  \v, \q_1 (\delta \M)^{i-1}
  \big\rangle
  &\;=\;
    (b-c) \cdot \delta^{i-1}
\end{align*}
for all $i \ge 2$.
Then using the definition of $f$
in expression~\eqref{eq:f-def},
we have
\begin{align}
  f(\G, \AC)
  &\;=\;
    \sum_{i=1}^\infty
    \big\langle
    \v, \q_1 (\delta \M)^{i-1}
    \big\rangle
    \nonumber \\
  &\;=\;
    \big\langle \v, \q_1\big\rangle
    +
    \sum_{i=2}^\infty
    \big\langle
    \v, \q_1 (\delta \M)^{i-1}
    \big\rangle
    \nonumber \\
  &\;=\;
    s_1 (b-c) + b(1-s_1) + \frac{\delta \cdot (b-c)}{1-\delta}
    \;=\;
    c(1-s_1) + \frac{b-c}{1-\delta} \;,
    \label{eq:f-def:g-c}
\end{align}
where in the penultimate inequality we use the
fact that $\sum_{i=2}^\infty \delta^{i-1} = \delta/(1-\delta)$. 

\subsubsection{Expected payoff for $\gtft$ against $\AD$}
\label{app:gt:expected-payoffs:g-d}

Using the same approach as in the previous section,
we now compute $f(\G, \AD)$
for a fixed generosity parameter $g$.
Given that agents with strategy $\AD$ play $D$
at all rounds $i \ge 1$, we have 
\begin{equation}
  \q_1 \;=\;
  \big[0, s_1, 0, (1-s_1)\big]^\top \;.
  \label{eq:q0:g-d}
\end{equation}
The transition matrix $\M$ over $\calA$
(conditioned on playing an additional round) in this
case is specified by
\begin{equation}
  \M \;=\;
  \begin{pmatrix}
    0 & 1 & 0 & 0  \\
    0 & g & 0 & (1-g)  \\
    0 & 1 & 0 & 0  \\
    0 & g & 0 & (1-g) 
  \end{pmatrix}
  \;\;.
  \label{eq:M:g-d}
\end{equation}
which again follows directly from the definitions
of the two strategies.
Then for all $i \ge 2$, we have
\begin{align*}
  \q_1 (\delta \M)^{i-1}
  &\;=\;
    \delta^{i-1} \cdot [\; 0, g, 0, (1-g)\;]^\top \\
    \text{and}\;\;
    \big\langle
    \v, \q_1 (\delta \M)^{i-1}
    \big\rangle
  &\;=\;
    \delta^{i-1} \cdot \big(-cg \big)
    \;.
\end{align*}
Again using the definition of $f$ 
from expression~\eqref{eq:f-def}, we can compute
\begin{align}
  f(\G, \AD)
  &\;=\;
    \sum_{i=1}^\infty
    \big\langle
    \v, \q_1 (\delta \M)^{i-1}
    \big\rangle
    \nonumber \\
  % &\;=\;
  %   s_1 S + (1-s_1)P + \frac{\delta \cdot \big(gS + (1-g)P \big)}{1-\delta} 
  % \nonumber \\
  &\;=\;
    -cs_1 -
     \frac{cg\delta}{1-\delta} \;.
    \label{eq:f-def:g-d}
\end{align}

\subsubsection{Expected payoff for $\gtft$ against $\gtft$}
  
Given two generosity parameters $g$ and $g'$,
we now derive $f(g, g')$,
which requires more work compared to the previous
two cases. To start, observe that
\begin{equation}
  \q_1 \;=\;
  \big[\;s_1^2, s_1(1-s_1), s_1(1-s_1), (1-s_1)^2\;\big]^\top \;.
  \label{eq:q0:g-g}
\end{equation}
which follows from the assumption that
nodes any $\G$ strategy play the same
initial action parameter $s_1$.
Then by definition of the two strategies,
the transition matrix $\M$ over $\calA$
(conditioned on playing an additional round) is defined by
\begin{equation}
  \M
  \;=\;
  \begin{pmatrix}
    1 & 0 & 0 & 0  \\
    g & 0 & (1-g) & 0  \\
    g' & 1-g' & 0 & 0  \\
    g g' & (1-g')g & g'(1-g) & (1-g)(1-g')
  \end{pmatrix}
  \;\;.
  \label{eq:M:g-g}
\end{equation}
Now recall from the definition of $f$ in
expression~\ref{eq:f-def} that we can write
\begin{equation}
  f(\G, \G')
  \;=\;
  \Big\langle
  \v, \q_1 \Big(\sum_{i=1}^\infty (\delta \M)^{i-1} \Big)
  \Big\rangle \;.
  \label{eq:f-def:g-g:pre}
\end{equation}

Recall for that for a matrix $\B$ whose
eigenvalues are uniformly bounded by 1 in
absolute value, we have the identity
\begin{equation*}
  \sum_{i=1}^\infty \B^{i-1}
  \;=\;
  (\I-\B)^{-1} \;.
\end{equation*}
It follows that
\begin{equation*}
  \sum_{i=1}^\infty
  (\delta \M)^{i-1}
  \;=\;
  (\I - \delta \M)^{-1} \;,
\end{equation*}
and we let $\A := (\I-\delta \M)^{i-1}$
with entries $a_{ij}$ for $i, j \in [4]$.
One can verify that the entries of $\A$ are
given by the following:

{\small
\begin{align*}
  &\A_{1, 1}
    =
    \frac{1}{1 - \delta}
    \;
  &\A_{2, 1}
  &=
    \frac{-\delta^2 g g' + \delta^2 g' + \delta g}{(1 - \delta)(1 - \delta^2(1 - g)(1 - g'))}
  \\
  &\A_{1, 2} = 0
    \;
  &\A_{2, 2}
  &\;=\; \frac{1}{1 - \delta^2 (1 - g)(1 - g')}
  \\
  &\A_{1, 3} = 0
    \;
  &\A_{2, 3}
  &= \frac{\delta - \delta g}{1 - \delta^2 (1 - g)(1 - g')}
  \\
  &\A_{1, 4} = 0
    \;
  &\A_{2, 4}
  &= 0
  \\ \\
  &\A_{3, 1}
    =
    \frac{-\delta^2 g g' + \delta^2 g + \delta g'}{(1 - \delta)(1 - \delta^2(1 - g)(1 - g'))}
    \;
  &\A_{4, 1}
  &= \frac{\delta^2(g g' (\delta (1-g)(1-g') + 1) + g^2(1 - g') + g'^2 (1 - g))}{
    (1-\delta)(1 - \delta(1-g)(1-g'))(1 - \delta^2(1-g)(1-g'))}
  \\
  &\A_{3, 2} = \frac{\delta - \delta g'}{1 - \delta^2 (1 - g)(1 - g')}
    \;
  &\A_{4, 2}
  &= \frac{\delta(\delta g'(1-g)(1-g') + g(1-g'))}{(1 - \delta(1-g)(1-g'))(1 - \delta^2(1-g)(1-g'))}
  \\
  &\A_{3, 3} = \frac{1}{1 - \delta^2 (1 - g)(1 - g')}
    \;
  &\A_{4, 3}
  &= \frac{\delta(\delta g(1-g)(1-g') + g'(1-g))}{(1 - \delta(1-g)(1-g'))(1 - \delta^2(1-g)(1-g'))}
  \\
  &\A_{3, 4} = 0
    \;
  &\A_{4, 4}
  &= \frac{1}{1 - \delta (1 - g)(1 - g')} \;\;.
\end{align*}
}
Then by the definition in expression~\eqref{eq:f-def:g-g:pre}
and using the matrix $\A$, we can write
\begin{equation*}
  f(\G, \G')
  \;=\;
  \big\langle
  \v, \q_1 \A 
  \big\rangle
  \;=\;
  \big\langle
  \v, \q_1 \A 
  \big\rangle
  \;.
\end{equation*}
Recalling that $\v = [b-c, -c, b, 0]^\top$,
and using the definition of $\q_1$ from
expression~\eqref{eq:q0:g-g} and the
entries of $\A$ above, it is straightforward to verify that
\begin{equation}
  \begin{aligned}
    f(\G, \G')
    \;=\;
    s_1(b - c) + \frac{(b-c)\delta}{1 - \delta}
    &+
      c(1 - s_1) \frac{\delta^2(1 - g)(1 - g') + \delta (1 - g)}{1 - \delta^2(1 - g)(1 - g')}
    \\
    &-
      b(1 - s_1) \frac{\delta^2(1 - g)(1 - g') + \delta(1 - g')}{1 - \delta^2(1 - g)(1 - g')}
    \;\;.
  \end{aligned}
  \label{eq:f-def:g-g}
\end{equation}

\subsubsection{Summary of expected $\gtft$ payoffs}
\label{app:gt:expected-payoffs:summary}

We summarize the expected payoffs for agents with
strategy type $\gtft$ derived in the previous three sections.
Summarizing expressions~\eqref{eq:f-def:g-c},
~\eqref{eq:f-def:g-d}, and~\eqref{eq:f-def:g-g}, we have
\begin{align}
  f(g, \AC)
  &\;=\;
  c(1-s_1) + \frac{b-c}{1-\delta} \;.
  \label{f-def:g-c:general} \\
  f(g, \AD)
  &\;=\;
    -cs_1 -
    \frac{cg\delta}{1-\delta} \;.
    \label{f-def:g-d:general}  \\
  f(g, g')
  &\;=\;
    s_1(b - c) + \frac{(b-c)\delta}{1 - \delta}
    + c(1 - s_1) \frac{\delta^2(1 - g)(1 - g') + \delta (1 - g)}{1 - \delta^2(1 - g)(1 - g')}
    \nonumber \\
  &\qquad\qquad\qquad\qquad\quad\;\;
    - b(1 - s_1) \frac{\delta^2(1 - g)(1 - g') + \delta(1 - g')}{1 - \delta^2(1 - g)(1 - g')}
    \;\;.
    \label{f-def:g-g:general}
\end{align}

% \todo{\{remove?\}
% Also for simplicity, we consider $s_1 = \frac{1}{2}$
% . It simulates the situation that in the first round
% $\G$ players play one of $C$ or $D$ 
% randomly.
% \paragraph{RD payoffs with $s_1 = \frac{1}{2}$} 
% Setting $s_1 = \frac{1}{2}$ in equations \eqref{f-def:g-c:general}, 
% \eqref{f-def:g-d:general}, and \eqref{f-def:g-g:general}, we get:
% \begin{align}
%   f(\G, \AC) 
%   &\;=\;
%   \frac{c}{2} + \frac{b-c}{1-\delta} \;.
%   \label{f-def:g-c:12} \\
%   f(\G, \AD)
%   &\;=\;
%     -\frac{c}{2} -
%      \frac{cg\delta}{1-\delta} \;.
%     \label{f-def:g-d:12} \\
%   f(\G, \G')
%     &\;=\;
%       \frac{b - c}{2} 
%       \nonumber \\
%     &\quad\quad 
%        + \frac{c}{2} \cdot \frac{\delta^2(1 - g)(1 - g') + \delta (1 - g)}{1 - \delta^2(1 - g)(1 - g')} 
%     \nonumber \\
%     &\;\;\;- \frac{b}{2} \cdot\frac{\delta^2(1 - g)(1 - g') + \delta(1 - g')}{1 - \delta^2(1 - g)(1 - g')}
%     + \frac{(b-c)\delta}{1 - \delta}
%     \;\;.
%     \label{f-def:g-g:12}
% \end{align}
% }

\subsection{Proof of Proposition~\ref{prop:params}}
\label{app:gt:parameters}

Using the calculations from Section~\ref{app:gt:expected-payoffs:summary}
for the expected payoffs of $\gtft$ agents, we can now prove
Proposition~\ref{prop:params}, which is restated for convenience:

\params*

\begin{proof}
  Statements (ii) and (iii) follow directly from the definitions of
  $f(\G, \AC)$ and $f(\G, \AD)$
  in expressions~\eqref{f-def:g-c:general}
  and~\eqref{f-def:g-d:general}, respectively.
  For the former, observe that
  expression~\eqref{f-def:g-c:general} has no dependence on $g$,
  which proves statement (ii). 
  For the latter, observe that \eqref{f-def:g-d:general}
  is decreasing with $g$,
  which proves statement (iii).

  For statement (i), fix $g'' \in [0, \widehat g]$, and consider
  the definition of
  $f(\G, \G'')$ from
  expression~\eqref{f-def:g-g:general}.
  Differentiating this expression with respect to $g$ then finds
  \begin{equation}
    \begin{aligned}
      \frac{d}{d g}\; f(\G, \G'')
      &\;=\;
      (1-s_1)c \cdot \frac{(-\delta^2(1 - g'') - \delta)}{
        (1 - \delta^2(1 - g'')(1 - g))^2} \\
      &\qquad
        - (1-s_1) b \cdot \frac{(-\delta^2(1-g'') -\delta^3(1-g'')^2)}{
        (1 - \delta^2(1 - g)(1 - g''))^2} \;\;.
    \end{aligned}
    \label{eq:f-def:g-g:derivative}
  \end{equation}
  Then it is straightforward to check that when $\delta \geq \frac{c}{b}$
  when  $\widehat g < 1 - \frac{c}{\delta b}$,
  and when $s_1 \in [0,  1)$,
  (all of which are assumptions of the proposition),
  then expression~\eqref{eq:f-def:g-g:derivative} is 
  strictly positive for all $g, g'' \in [0, \widehat g]$.
  This implies under the mentioned constraints
  that for any $g, g'' \in [0, \widehat g]$, the
  function $f(\G, \G'')$
  is strictly increasing in $g$, which proves statement (i). 
\end{proof}

Observe that statement (ii) of the proposition
implies that only the expected payoff $f$
for nodes playing $\gtft$ strategy $g$
against $\AC$  is \textit{non-decreasing} with $g$,
whereas statements (i) and (iii) give
\textit{strictly increasing} inequalities with respect to $g$.
We remark that the $k$-IGT transitions could thus be adjusted
such that $\gtft$ agents only increase their parameter
following an interaction with a second $\gtft$ agent:
this would ensure \textit{strictly increasing}
expected payoff relationships for each transition type,
but at the expense of lower average stationary generosity values
(in the sense of Proposition~\ref{prop:avg-g},
and depending on the ratio of $\gtft$ to $\AC$ agents).

\section{Details on the average stationary generosity of $k$-IGT dynamics}
\label{app:asg}

In this section, we provide the
proof of Proposition~\ref{prop:avg-g},
which derives the average generosity parameter value
of $k$-IGT dynamics under its stationary distribution.
We restate the proposition for convenience.

\averagegen*

\begin{proof}
  Fix $k \ge 2$, and recall from Definition~\ref{def:kigt}
  that the $k$-IGT dynamics uses the set of generosity
  parameters $\calG = \{g_1,\dots, g_k\}$,
  where each $g_j = \frac{j-1}{k-1}\cdot \hg$.
  Then letting $\lambda := (1-\beta)/\beta$,
  Theorem~\ref{thm:kigt-main} implies
  (using properties of multinomial distributions) that
  \begin{equation*}
    \E\big[ \pi_j ]
    \;=\;
    m \cdot \frac{\lambda^{j-1} }{\sum_{i \in [k]} \lambda^{i-1}}
  \end{equation*}
  for all coordinates $j \in [k]$.
  It follows that the average stationary generosity $\wg$
  is given by
  \begin{equation*}
    \wg
    \;=\;
    \frac{1}{m}
    \sum_{j\in [k]}
    g_j \cdot \E[\pi_j]
    \;=\;
    \sum_{j\in [k]}
    \hg \cdot 
    \bigg(\frac{j-1}{k-1}\bigg)
    \cdot 
    \frac{\lambda^{j-1} }{\sum_{i \in [k]} \lambda^{i-1}}
    \;.
  \end{equation*}
  Now observe that when $\beta = \frac{1}{2}$ and 
  $\lambda = 1$, we have
  $\frac{\lambda^{j-1}}{\sum_{i\in[k]} \lambda^{i-1}} = \frac{1}{k}$
  for all $j \in [k]$, and thus it follows that $\wg = \hg/2$. 
  For the case when $\beta \neq 1/2$, we can simplify and write:
  \begin{align*}
    \wg
    \;=\;
      \sum_{j\in [k]}
      \hg \cdot 
      \bigg(\frac{j-1}{k-1}\bigg)
      \cdot 
      \frac{\lambda^{j-1} }{\sum_{i \in [k]} \lambda^{i-1}}
    &\;=\;
      \bigg( \frac{\hat{g}}{(k-1)\sum_{i=1}^k \lambda^{i-1}}\bigg)
      \cdot
      \sum_{j=1}^{k-1}
      j \lambda^j 
    \\
    % &\;=\;
    %   \bigg( \frac{\hat{g}}{(k-1)\sum_{i=1}^k \lambda^{i-1}}\bigg)
    %   \cdot
    %   \sum_{j=1}^{k-1}
    %   \bigg(
    %   \sum_{i=1}^j \lambda^j
    %   \bigg)
    % \\
    % &\;=\;
    %   \bigg( \frac{\hat{g}}{(k-1)\sum_{i=1}^k \lambda^{i-1}}\bigg)
    %   \cdot
    %   \sum_{i=1}^{k-1}
    %   \bigg(
    %   \sum_{j=i}^{k-1} \lambda^j
    %   \bigg)
    % \\
    &\;=\;
      \bigg( \frac{\hat{g}}{(k-1)\sum_{i=1}^k \lambda^{i-1}}\bigg)
      \cdot
      \sum_{i=1}^{k-1} \lambda^i \frac{\lambda^{k-i} - 1}{\lambda - 1}
    \\
    &\;=\;
      \bigg( \frac{\hat{g}}{(k-1)(\lambda - 1)\sum_{i=1}^k \lambda^{i-1}}\bigg)
      \cdot
      \sum_{i=1}^{k-1} (\lambda^{k} - \lambda^i)
      \\
    &\;=\;
      \bigg( \frac{\lambda \hat{g}}{(k-1)(\lambda - 1)\sum_{i=1}^k \lambda^{i-1}}\bigg)
      \cdot
      \bigg(
      \lambda^{k-1}(k-1) - \frac{\lambda^{k-1}-1}{\lambda - 1}
      \bigg) \;.
  \end{align*}
  Then using the fact that
  $\sum_{i\in[k]} \lambda^{i-1} = \frac{\lambda^{k}-1}{\lambda-1}$,
  we can further simplify and write
  \begin{align*}
    \wg
    &\;=\;
      \hg \cdot
      \bigg(
      \frac{\lambda}{(k-1)(\lambda^k - 1)}
      \bigg)
      \bigg(
      \lambda^{k-1} (k-1) - \frac{\lambda^{k-1}-1}{\lambda - 1}
      \bigg) \\
    &\;=\;
      \hg \cdot
      \bigg(
      \frac{\lambda^k}{\lambda^k - 1} -
      \bigg(\frac{1}{k-1}\bigg)
      \bigg(\frac{\lambda}{\lambda-1}\bigg)
      \bigg(\frac{\lambda^{k-1}-1}{\lambda^k -1}\bigg)
      \bigg) \;,
  \end{align*}
  which concludes the proof.
\end{proof}

As a corollary to Proposition~\ref{prop:avg-g}, we have the
following lower bound on $\wg$
in the regime where $\beta < 1/2$ (and thus $\lambda > 1$),
which shows that $\wg \approx \hg \cdot (1 - 1/k)$:

\begin{corollary}
  \label{corr:avg-g}
  Consider the setting of Proposition~\ref{prop:avg-g}
  for $k \ge 2$, and assume that $\beta < 1/2$, meaning
  $\lambda = \frac{(1-\beta)}{\beta} > 1$. Then
  $
  \wg \;\ge\;
  \hg \cdot
  \big(
  1 -
  \frac{1}{(\lambda - 1)(k-1)}
  \big).$
\end{corollary}

\begin{proof}
  Given that $\lambda > 1$, we have
  $\frac{\lambda^k}{\lambda^k-1} \ge 1$
  and
  $\frac{\lambda^{k-1} -1}{\lambda^k - 1}
  \le \frac{\lambda^{k-1}}{\lambda^k}$.
  It follows that
  \begin{equation*}
  \wg
  \;\ge\;
  \hg \cdot
  \bigg(
  1 -
  \Big(\frac{1}{k-1}\Big)
  \Big(\frac{\lambda}{\lambda - 1})
  \Big(\frac{\lambda^{k-1}}{\lambda^k}\Big)
  \bigg)
  \;=\;
  \hg \cdot
  \bigg(
  1 -
  \frac{1}{(\lambda - 1)(k-1)}
  \bigg) \;.
  \qedhere
\end{equation*}
\end{proof}

%  equilibrium computation appendix

% \newpage
\section{Details on $k$-IGT Convergence to
  a Distributional Equilibrium}
\label{app:eq}

In this section we develop the proof of
Theorem~\ref{thm:kigt-eq-converge}, which shows
that the \textit{mean} stationary distribution of the
$k$-IGT dynamics is an $\eps$-approximate DE,
for $\eps = O(1/k)$.
Restated here:

\kigteqconvergence*

We also provide again the table summarizing 
the key RD game and population parameters:
\begin{table}[h!]
  \small
\begin{center}
  \begin{tabular}{cllcl}
    \toprule
    Symbol & Definition
    & & Symbol & Definition\\
    \midrule
    $b, c$ & donation game reward parameters
    & &
        $\alpha$ & fraction of $\AC$ agents
    \\
    $\delta$ & continuation probability
    & &
        $\beta$ & fraction of $\AD$ agents
    \\
    $s_1$ & initial cooperation probability
    & &
        $\gamma$ & fraction of $\gtft$ agents
    \\
    $\hatg$ & maximum generosity parameter
    & &
        $k$ & number of $\gtft$ parameter values \\
\bottomrule
\end{tabular}
\end{center}
\caption{Summary of Notation in RD Games
and $(\alpha,\beta, \gamma)$ populations}
\label{table:summary-small-repeat}
\end{table}

\subsection{Overview of the Proof}
We start by reviewing the strategy of the proof,
which expands on the high-level sketch
introduced in Section~\ref{sec:tech-overview:convergence}.
First, recall from the Definition~\ref{def:de} that
$\bfmu = \frac{1}{m}\E[\bfpi]$ is
an $\eps$-approximate DE when
\begin{equation}
  \E_{g \sim \bfmu, \;\S \sim \hatbfmu}
  \big[f(g, \S)\big]
  \;\ge\;
  \max_{g' \in \calG} \;
  \E_{\S \sim \hatbfmu}
  \big[f(g', \S)\big]
  - \eps
  \;.
  \label{eq:app-de:1}
\end{equation}
Here, recall that $\hatbfmu \in \Delta(\calS)$ is
the distribution induced by $\bfmu \in \Delta(\calG)$
in the $(\alpha, \beta, \gamma)$ population,
and $f$ is the expected payoff function
for the RD game, as derived in
Section \ref{app:gt:expected-payoffs}.
Given that $\calG = \{g_1, \dots, g_k\}$, 
the condition in expression~\eqref{eq:app-de:1}
can be equivalently written as
\begin{equation}
  \E_{\G \sim \bfmu, \;\S \sim \hatbfmu}
  \big[f(\G, \S)\big]
  \;\ge\;
  \max_{i \in [k]}\;
  \E_{\S \sim \hatbfmu}
  \big[f(\G_i, \S)\big]
  - \eps
  \;.
  \label{eq:app-de:2}
\end{equation}
Rearranging terms, the distribution $\bfmu$ must satisfy
\begin{equation}
  \max_{i \in [k]}\;
  \E_{\S \sim \hatbfmu}
  \big[f(\G_i, \S)\big]
  -
  \E_{\G \sim \bfmu, \;\S \sim \hatbfmu}
  \big[f(\G, \S)\big]
  \;\le\;
  \eps
  \;.
  \label{eq:app-de:3}
\end{equation}
For convenience, let $\Psi$ denote
the left hand side of expression~\eqref{eq:app-de:3},
meaning our goal in proving Theorem~\ref{thm:kigt-eq-converge}
is to show that $\Psi \le O(1/k)$. 
For this, we use a first-order
Taylor approximation argument to show that,
so long as the magnitude of the
second derivatives of $f(\cdot, \cdot)$
(with respect to its first argument)
are uniformly bounded by a constant $L > 0$:
\begin{equation}
  \E_{\G \sim \bfmu, \;\S \sim \hatbfmu} 
  \big[f(\G, \S)\big]
  \;\ge\;
  \E_{\S \sim \hatbfmu} 
  \big[f(\tildeG, \S)\big]
  \;-\;
  L \cdot \Var_{\G\sim \bfmu}[\G] \;,
  \label{eq:app-de:4}
\end{equation}
where $\tildeG := \E_{\G \sim \bfmu}[\G]$
is the average stationary generosity value
from Proposition~\ref{prop:avg-g}.
This inequality is established formally in
Proposition~\ref{prop:taylor-approx}.
Then it follows that we can further write
\begin{align}
  \Psi
  &\;:=\;
    \max_{i \in [k]}\;
    \E_{\S \sim \hatbfmu}
    \big[f(\G_i, \S)\big]
    -
    \E_{\G \sim \bfmu, \;\S \sim \hatbfmu}
    \big[f(\G, \S)\big]
    \nonumber \\
  &\;\le\;
    \max_{i \in [k]}\;
    \E_{\S \sim \hatbfmu}
    \big[f(\G_i, \S)\big]
    - 
    \E_{\S \sim \hatbfmu} 
    \big[f(\tildeG, \S)\big]
    + 
    L \cdot \Var_{\G\sim \bfmu}[\G]
  \nonumber \\
  &\;=\;
    \max_{i \in [k]}\;
    \E_{\S \sim \hatbfmu}
    \big[
    f(\G_i, \S)
    -
    f(\tildeG, \S)
    \big]
    + 
    L \cdot \Var_{\G\sim \bfmu}[\G]
    \;.
    \label{eq:app-de:5}
\end{align}
Under the parameter assumptions from the
statement of the theorem, we can bound the two terms in 
expression~\eqref{eq:app-de:5} separately as follows:
\begin{align*}
  &\;
    \text{Propositions~\ref{prop:var-G}
    and~\ref{prop:df2-bounded}}:
    \;\;\;
    L \cdot \Var_{\G\sim \bfmu}[\G] = O(1/k^2) \\
  &\;
    \text{Proposition~\ref{prop:gamma-bound}}:
    \;\;\;
    \max_{i \in [k]}\;
    \E_{\S \sim \hatbfmu}
    \big[
    f(\G_i, \S)
    -
    f(\tildeG, \S)
    \big] = O(1/k) \;.
\end{align*}
The statement of Theorem~\ref{thm:kigt-eq-converge}
then follows directly by combining these two results.
We proceed to establish these bounds separately
in the following subsections:

\subsection{Part I: Approximation via the
  Average Generosity $\tildeG$}

To start, we establish the claim from
expression~\eqref{eq:app-de:4}, where we approximate
$\E[f(\G, \S)]$ by $\E[f(\tildeG, \S)]$ and
pay a subsequent approximation error. Specifically,
we prove the following:

\begin{proposition}
  \label{prop:taylor-approx}
  Assume the conditions in Theorem~\ref{thm:kigt-eq-converge},
  and let $\tildeG := \E_{\G \sim \bfmu}[\G]$.
  Moreover, assume that $|f''(g, \cdot)| \le L$ uniformly,
  where the second derivatives are taken with respect to $g$.
  Then
  \begin{equation*}
  \E_{\G \sim \bfmu, \;\S \sim \hatbfmu} 
  \big[f(\G, \S)\big]
  \;\ge\;
  \E_{\S \sim \hatbfmu} 
  \big[f(\tildeG, \S)\big]
  \;-\;
  L \cdot \Var_{\G\sim \bfmu}[\G] \;.
\end{equation*}
\end{proposition}

\begin{proof}
  Assuming the conditions of the proposition,
  it suffices to prove that 
  \begin{equation*}
    \Big|\;
    \E_{\G \sim \bfmu, \;\S \sim \hatbfmu} 
    \big[f(\G, \S)\big]
    \;-\;
    \E_{\S \sim \hatbfmu} 
    \big[f(\tildeG, \S)\big]
    \;\Big|
    \;\le\;
    L \cdot \Var_{\G\sim \bfmu}[\G] \;.
  \end{equation*}
  For this, let us first define the function
  $F(\G) := \E_{\S \sim \hatbfmu}
  \big[f(\G, \S)\big]$.
  Then taking a first-order Taylor approximation
  with remainder of $F$ centered at $\tildeG \in [0, \hatg]$,
  it follows that
  \begin{equation*}
    F(\G)
    \;=\;
    F(\tildeG)
    + F'(\tildeG)(\G- \tildeG)
    + F''(\xi) (\G - \tildeG)^2 \;,
  \end{equation*}
  for some $\xi$ between $\tildeG$ and $\G$.
  Taking expectation over $\G \sim \bfmu$, we find:
  \begin{align*}
    \E_{\G \sim \bfmu}
    \big[F(\G)\big]
    &\;=\;
      F(\tildeG)
      +
      F(\tildeG) \cdot 
      \E_{\G \sim \bfmu}\big[\G- \tildeG\big]
      + F''(\xi) \cdot
      \E_{\G \sim \bfmu}\big[(\G- \tildeG)^2\big] \\
    &\;=\;
      F(\tildeG)
      +
      F''(\xi)
      \cdot
      \Var_{\G \sim \bfmu}[\G] \;.
  \end{align*}
  Here, the first equality follows from the
  fact that $F(\tildeG)$ is a constant, and the
  second line follows by the definition of variance
  and from observing that $\E[\G - \tildeG] = 0$. 
  Rearranging terms and taking absolute values, we find
  \begin{equation*}
    \Big|
    \E_{\G \sim \bfmu}
    \big[F(\G)\big]
    - 
    F(\tildeG)
    \Big|
    \;\le\;
    F''(\xi) \cdot
    \Var_{\G \sim \bfmu}[\G] \;.
  \end{equation*}
  Now by definition, $F(\xi)$ is an average
  of $f(\xi, \S)$ for $\S \sim \hatbfmu$.
  Thus $F''(\xi)$ is an average over $f''(\xi, \S)$,
  and by the assumptions of the lemma,
  all $|f''(\xi, \S)|$ are bounded above by $L$ uniformly. 
  It follows that 
  \begin{equation*}
    \Big|
    \E_{g \sim \bfmu}
    \big[F(g)\big]
    - 
    F(\tildeg)
    \Big|
    \;\le\;
    L \cdot
    \Var_{\G \sim \bfmu}[\G] \;,
  \end{equation*}
  which completes the proof.
\end{proof}

In the following two propositions, we derive
upper bounds on the constant $L$
and the variance of $\G \sim \bfmu$ used in
Proposition~\ref{prop:taylor-approx}.

\begin{proposition}
  \label{prop:var-G}
  Assume the conditions of Theorem~\ref{thm:kigt-eq-converge}.
  Then $\Var_{\G \sim \bfmu}[\G] \le \frac{16}{(k-1)^2}$. 
\end{proposition}

\begin{proof}
  First, recall that $\G \in \{g_1, \dots, g_k\}$,
  where $g_i := \hatg \big(\frac{i-1}{k-1}\big)$
  for each $i \in [k]$, 
  and $\tildeG = \E_{g \sim \bfmu}[g]$.
  Moreover, recall that
  $\bfmu(i) \propto \lambda^{i-1}$ for each $i \in [k]$.
  Then using the definition of variance
  $\Var_{g \sim \bfmu} = 
  \E_{\G \sim \bfmu}
  \big[\G^2\big]
  - \tildeG^2$, it follows that we can write
  \begin{align}
    \Var_{\G \sim \bfmu}\big[\G\big]
    &\;=\;
      \sum_{i=1}^k
      \bfmu(i) \cdot \hatg^2
      \cdot \Big(\frac{i-1}{k-1} \Big)^2
      -
      \bigg(
      \sum_{i=1}^k
      \bfmu(i) \cdot \hatg
      \cdot \Big(\frac{i-1}{k-1}\Big)
      \bigg)^2
      \nonumber \\
    &\;\le\;
      \sum_{i=1}^k
      \frac{\lambda^{i-1}}{\sum_{j=1}^k \lambda^{i-1}}
      \cdot \Big(\frac{i-1}{k-1} \Big)^2
      -
      \bigg(
      \sum_{i=1}^k
      \frac{\lambda^{i-1}}{\sum_{j=1}^k \lambda^{i-1}}
      \cdot \Big(\frac{i-1}{k-1}\Big)
      \bigg)^2
      \label{eq:var:1}
      \;,
  \end{align}
  where the inequality comes from factoring
  out $\hatg^2 \le 1$.
  Then by a calculation similar to that in
  the proof of Proposition~\ref{prop:avg-g}, we 
  can simplify expression~\eqref{eq:var:1} to 
  further write
  \begin{align}
    \Var_{\G \sim \bfmu}\big[\G\big]
    &\;\leq\;
      \frac{
      2(k^2-1)\lambda^{k+1}
      + \lambda(\lambda^{2k} + 1)
      - k^2 \lambda^k(1+ \lambda^2)
      }
      {(k-1)^2 (\lambda -1)^2 (\lambda^k -1)^2}
      \nonumber \\
    &\;\le\;
      \frac{2(k^2-1)\lambda^{k+1}}
      {(k-1)^2 (\lambda -1)^2 (\lambda^k -1)^2}
      +
      \frac{\lambda(\lambda^{2k} + 1)}
      {(k-1)^2 (\lambda -1)^2 (\lambda^k -1)^2}
      \;,
      \label{eq:varG:1}
  \end{align}
  where the inequality follows from the fact
  that $k^2\lambda^k(1+\lambda^2) \ge 0$ for all
  $\lambda, k \ge 0$.
  Thus to prove the statement of the proposition,
  it suffices to bound each term in
  expression~\eqref{eq:varG:1} separately.

  For the first term in~\eqref{eq:varG:1}, observe that
  $\frac{2(k^2-1)}{(k-1)^2}$ is decreasing with $k$,
  and is bounded above by 6 for all $k\ge 2$.
  Moreover, $\frac{\lambda^{k+1}}{(\lambda-1)^2(\lambda^k-1)^2}$
  is also decreasing with $k$ and $\lambda$ and is bounded
  above by $\frac{2}{(k-1)^2}$ for $k\ge 2$ and $\lambda \ge 2$.
  Together, this implies
  \begin{equation}
    \frac{2(k^2-1)\lambda^{k+1}}
    {(k-1)^2 (\lambda -1)^2 (\lambda^k -1)^2}
    \;\le\;
    \frac{12}{(k-1)^2}
    \label{eq:varG:2}
  \end{equation}
  for all $k, \lambda \ge 2$.
  For the second term in~\eqref{eq:varG:1}, observe that
  $\frac{\lambda(\lambda^{2k} + 1)}{(\lambda-1)^2(\lambda^k-1)^2}$
  is decreasing with $k$ and $\lambda$ and is bounded
  above by 4 for $k, \lambda \ge 2$. Thus
  \begin{equation}
    \frac{\lambda(\lambda^{2k} + 1)}
    {(k-1)^2 (\lambda -1)^2 (\lambda^k -1)^2}
    \;\le\;
    \frac{4}{(k-1)^2}
    \label{eq:varG:3}
  \end{equation}
  for all $k, \lambda \ge 2$.
  Substituting the bounds from expressions
  \eqref{eq:varG:2} and~\eqref{eq:varG:3} into
  expression~\eqref{eq:varG:1} concludes the proof.
\end{proof}

Finally, in the following proposition, we derive a
uniform, constant bound on the magnitude
of the second derivatives of $f(g, \cdot)$
with respect to $g$:

\begin{proposition}
    \label{prop:df2-bounded}
    Assume the conditions of Theorem~\ref{thm:kigt-eq-converge}.
    Then for every $g \in [0,\hatg]$ and
    $\S \in \calS = \{\AC, \AD, g_1, \dots, g_k\}$, the magnitude
    of the second derivatives of $f(g, \S)$ with respect to $g$
    are bounded by an absolute constant $L > 0$. 
\end{proposition}

\begin{proof}
  Using the expressions for $f(g, \cdot)$
  from Section~\ref{app:gt:expected-payoffs:summary},
  observe that
  $\frac{d^2}{dg^2} f(g, \AC)$
  and $\frac{d^2}{dg^2} f(g, \AD)$ are both 0,
  and thus it suffices to show that
  $\big|\frac{d^2}{dg^2} f(g, g')\big|$ is bounded above
  by an absolute constant for all $g, g' \in [0, \hatg]$.

  For this, we differentiate the
  calculation of $\frac{d}{dg} f(g, g')$ from
  expression \eqref{eq:f-def:g-g:derivative} to find:
  \begin{equation}
    \begin{aligned}
      \frac{d^2}{d g^2}\; f(g, g')
      \;=\;
      (1-s_1) \cdot \bigg(
      c \cdot \frac{2\delta^3(1-g')(1+\delta(1-g'))}{
      (1 - \delta^2(1 - g')(1 - g))^3}
      - b \cdot \frac{2\delta^4(1-g')^2(1+\delta(1-g'))}{
      (1 - \delta^2(1 - g)(1 - g'))^3} \bigg)\;\;.
    \end{aligned}
  \end{equation}
  It follows that
  \begin{equation}
    \frac{d^2}{d g^2}\; f(g, g')
      \;\le\;
      c \cdot (1-s_1) \cdot \bigg(
      \frac{2\delta^3(1-g')(1+\delta(1-g'))}{
        (1 - \delta^2(1 - g')(1 - g))^3}\bigg)
      \label{eq:d2g:1}
  \end{equation}
  and also that
  \begin{equation}
    \frac{d^2}{d g^2}\; f(g, g')
    \;\ge\;
    - b \cdot (1-s_1) \cdot \bigg(
    \frac{2\delta^4(1-g')^2(1+\delta(1-g'))}{
      (1 - \delta^2(1 - g)(1 - g'))^3} \bigg) \;.
    \label{eq:d2g:2}
  \end{equation}
  Note that $s_1, c, b, \delta$ and $\hatg$
  are all absolute constants. Thus it follows
  from expressions~\eqref{eq:d2g:1} and~\eqref{eq:d2g:2}
  that there exists an absolute constant $L> 0$ such that
  $\big|\frac{d^2}{d g^2}f(g, g') \big| \le L$
  for all $g, g' \in [0, \hatg]$, which concludes the proof.
\end{proof}

\subsection{Part II: Upper Bounds on the Approximation}

In the next proposition, we derive an upper bound on difference
$
\max_{i \in [k]}\;
\E_{\S \sim \hatbfmu}
\big[
f(\G_i, \S)
-
f(\tildeG, \S)
\big]
$
from the first term in expression~\eqref{eq:app-de:5}.

\begin{proposition}
  \label{prop:gamma-bound}
  Assume the conditions of Theorem~\ref{thm:kigt-eq-converge}.
  Specifically, consider RD settings
  with $s_1 \in [0, 1)$,
  $\frac{b}{c} > 1+ \frac{\beta c}{\gamma (1-s_1)}$,
  $\delta < \sqrt{1-\frac{\beta c}{\gamma(b-c)(1-s_1)}}$,
  and 
  $
  \hatg
  \;<\;
  1-\frac{1}{\delta}
  \Big(\frac{\beta c}{\gamma (b-c)(1-\delta)(1-s_1)} - 1\Big).
  $
  Then
  \begin{equation*}
    \max_{i \in [k]}\;
    \E_{\S \sim \hatbfmu}
    \big[
    f(\G_i, \S)
    -
    f(\tildeG, \S)
    \big]
    \;\le\;
    O\Big(\frac{1}{k}\Big) \;,
  \end{equation*}
  where the constant in the $O(\cdot)$ depends only
  on $b, c, \gamma, \beta,$ and $\delta$.
\end{proposition}

\begin{proof}
  For convenience, let
  $
  \Gamma(i)
  :=
  \E_{\S \sim \hatbfmu}
  \big[f(\G_i, \S) - f(\tildeG, \S)\big]
  $
  for each $i \in [k]$.
  Then by definition, we have
  \begin{align}
    \Gamma(i)
    &\;=\;
      \alpha
      \big(
      f(\G_i, \AC) - f(\tildeG, \AC)
      \big)
      +
      \beta
      \big(
      f(\G_i, \AD) - f(\tildeG, \AD)
      \big)
      +
      \gamma
      \Big(
      \sum_{j=1}^k
      \bfmu(j)
      \big(
      f(\G_i, \G_j) - f(\tildeG, \G_j)
      \big)
      \Big)
      \nonumber \\
    &\;=\;
      \frac{\beta \delta c (\tildeg - g_i)}{(1-\delta)}
      +
      \gamma
      \Big(
      \sum_{j=1}^k
      \bfmu(j)
      \big(
      f(\G_i, \G_j) - f(\tildeG, \G_j)
      \big)
      \Big)
      \;,
      \label{eq:gamma-i:1}
  \end{align}
  where the equality comes from applying the
  expressions for $f(\cdot, \AC)$
  and $f(\cdot, \AD)$ from Section~\ref{app:gt:expected-payoffs:summary}.
  Now recall that $\bfmu(j) \propto \lambda^{j-1}$
  for each $j \in [k]$, and for $\lambda \ge 2$,
  it follows that
  \begin{equation}
    \bfmu(k)
    \;=\;
    1 - \sum_{i=1}^k \bfmu(i)
    \;=\;
    1 - 
    \frac{\sum_{i=1}^{k-1}\lambda^{i-1}}
    {\sum_{j=1}^k \lambda^{j-1}}
    \ge
    1 - \frac{1}{k}
    \label{eq:mu-k-lb}
  \end{equation}
  for all $k \ge 2$. Together with the fact that
  $f(\cdot, \G_j)$ is uniformly bounded by an absolute
  constant $M > 0$ for all $j \in [k]$, we find
  \begin{align*}
    \sum_{j=1}^k
    \bfmu(j)
    \big(
    f(\G_i, \G_j) - f(\tildeG, \G_j)
    \big)
    &\;\le\;
      \Big(\sum_{j=1}^{k-1} M\cdot\bfmu(j) \Big)
      +
      \bfmu(k)
      \big(f(\G_i, \G_k) - f(\tildeG, \G_k)\big) \\
    &\;=\;
      M\cdot(1- \bfmu(k))
      +
      \bfmu(k)
      \big(f(\G_i, \G_k) - f(\tildeG, \G_k)\big) \\
    &\;\le\;
      O\Big(\frac{1}{k}\Big)
      +
      \big(f(\G_i, \G_k) - f(\tildeG, \G_k)\big) \;,
  \end{align*}
  where in the final line we applied the lower bound on
  $\bfmu(k)$ from expression~\eqref{eq:mu-k-lb} and 
  used the fact that $\bfmu(k) \le 1$.
  Substituting this back into expression~\eqref{eq:gamma-i:1},
  we can write
  \begin{equation}
    \Gamma(i)
    \;\le\;
    \frac{\beta \delta c(\tildeg - g_i)}{(1-\delta)}
    + \gamma
    \big(f(\G_i, \G_k) - f(\tildeG, \G_k)\big)
    + O\Big(\frac{1}{k}\Big)
    \;.
    \label{eq:gamma-i:2}
  \end{equation}
  Notice that to prove the statement of the proposition,
  our goal is to  show that
  $\max_{i \in [k]} \Gamma(i) \le O(1/k)$,
  and for this, observe that it now suffices to upper bound the 
  first two terms of expression~\eqref{eq:gamma-i:2} by $O(1/k)$. 
  For this, using the expected payoff expressions for 
  $f(\cdot, \cdot)$ from Section~\ref{app:gt:expected-payoffs:summary},
  a straightforward calculation finds that
  \begin{align}
    f(\G_i, \G_k) - f(\tildeG, \G_k)
    &\;=\;
      (1-s_1) \cdot 
      \big(
      \delta c(\tildeg - g_i)
      +
      \delta^2 (b-c)
      (g_i - \tildeg)(1-\hatg)
      +
      \delta b
      (g_i - \tildeg)
      \big)
      \cdot \frac{1}{\Phi}
      \nonumber \\
    &\;=\;
      (g_i - \tildeg)
      \cdot
      (1-s_1)
      \cdot
      \big(
      (b-c)
      (\delta^2 (1-\hatg) + \delta)
      \big)
      \cdot
      \frac{1}{\Phi}
      \label{eq:gamma-i:3}
    \\
    \text{where}\;\;
    \Phi
    &\;:=\;
      (1-\delta^2(1-\tildeg)(1-\hatg))
      \cdot
      (1-\delta^2(1-g_i)(1-\hatg))
      \;.
      \label{eq:gamma-i:4}
  \end{align}
  Substituting this back into
  expression~\eqref{eq:gamma-i:2} yields
  \begin{equation}
    \Gamma(i)
    \;\le\;
    (g_i - \tildeg)
    \cdot
    \Big(
    \gamma
    (1-s_1)\Big(
    \frac{(b-c)}{2}
    \cdot
    \big(\delta^2 (1-\hatg) + \delta\big)
    \Big)
    \cdot
    \frac{1}{\Phi}
    -
    \frac{\beta \delta c}{(1-\delta)}
    \Big)
    + O\Big(\frac{1}{k}\Big)
    \;.
    \label{eq:gamma-i:5}
  \end{equation}
  Now using the fact that $\Phi \le 1$ uniformly
  for all $i \in [k]$, it follows that
  \begin{equation*}
    \gamma
    (1-s_1)\big(
    (b-c)
    \cdot
    \big(\delta^2 (1-\hatg) + \delta\big)
    \big)
    \cdot
    \frac{1}{\Phi}
    -
    \frac{\beta \delta c}{(1-\delta)}
    \;>\;
    0
  \end{equation*}
  so long as
  \begin{equation*}
    \hatg
    \;<\;
    1-\frac{1}{\delta}
    \bigg(\frac{\beta c}{(b-c)\gamma(1-\delta)(1-s_1)} - 1\bigg)
    \;,
  \end{equation*}
  which holds by assumption in the statement of the proposition. 
  Note that the additional assumptions that
  $\delta < \sqrt{1-\frac{\beta c}{\gamma (b-c)(1-s_1)}}$
  and $\frac{b}{c} > 1+ \frac{\beta c}{\gamma (1-s_1)}$
  ensure that (a) the upper bound constraint on $\hatg$
  is positive, and that (b) the subsequent
  upper bound constraint on $\delta$ is positive.

  It follows that expression~\eqref{eq:gamma-i:5} is
  maximized by the largest $j \in [k]$ such that
  $g_j \ge \tildeg$. For this, recall that
  $g_k = \hatg$ by definition and that
  $\tildeg \ge \hatg \big(1 - \frac{1}{(\lambda-1)(k-1)}\big)$
  by Corollary~\ref{corr:avg-g},
  and together this implies that
  $(g_i - \tildeg) \le \frac{\hatg}{(\lambda-1)(k-1)}$
  for all $i \in [k]$.
  Combined with the fact that $1/\Phi \le 1/(1-\delta^2)^2$,
  we find 
  \begin{equation*}
    \Gamma(i)
    \;\le\;
    \frac{1}{(\lambda-1)(k-1)}
    \cdot
    \frac{(1-s_1)}{(1-\delta^2)^2}\Big(
    \frac{(b-c)}{2}
    \cdot
    \big(\delta^2 + \delta\big)
    \Big)
    + O\Big(\frac{1}{k}\Big) \;,
  \end{equation*}
  where we used the additional fact that $\hatg \le 1$.
  Given that $\lambda \ge 2, b, c,\delta,$ and $s_1$
  are all absolute constants, this shows
  $\Gamma(i) \le O(1/k)$ for all $i \in [k]$, which 
  concludes the proof.
\end{proof}

\end{document}